\documentclass[letterpaper,11pt]{article}
\usepackage[hyphens]{url}
\usepackage{xurl}
\usepackage[letterpaper,margin=1in]{geometry}
\usepackage{graphicx}

\usepackage{times}
\let\oldparagraph\paragraph
\renewcommand{\paragraph}[1]{\oldparagraph{#1.}}

\usepackage[font=small,labelfont=bf]{caption}

\usepackage{mathtools}
\usepackage{amsthm}
\usepackage{amsmath}
\usepackage{amsfonts}
\usepackage{amssymb}

\usepackage{algorithm}
\usepackage{algorithmic}

\usepackage{authblk}
\usepackage{preamble}
\usepackage[
    backend=biber,
    style=alphabetic,
    natbib=true,
    maxbibnames=99,
    maxalphanames=4,
    minalphanames=3
]{biblatex}
\AtBeginBibliography{%
    \sloppy
    \setlength{\emergencystretch}{3em}%
    \Urlmuskip=0mu plus 2mu\relax
}

\title{How Many Votes Is a Lie Worth? \\ Measuring Strategyproofness through Resource Augmentation}
\author[1,2]{Ratip Emin Berker}
\author[1,2,3]{Vincent Conitzer}
\author[4]{Eden Hartman}
\author[1,2]{\\Jiayuan Liu}
\author[1,2]{Caspar Oesterheld}

\affil[1]{Carnegie Mellon University}
\affil[2]{Foundations of Cooperative AI Lab (FOCAL)}
\affil[3]{University of Oxford}
\affil[4]{Bar-Ilan University}
\affil[ ]{\texttt{\{rberker, conitzer, jiayuan4, coesterh\}@cs.cmu.edu, eden.r.hartman@gmail.com}}

\begin{document}

\hypersetup{pageanchor=false}
\pagenumbering{gobble}
\maketitle

\begin{abstract}
    It is well known, by the Gibbard–Satterthwaite Theorem, that when there are more than two candidates, \emph{any} non-dictatorial voting rule can be manipulated by untruthful voters. 
    But \emph{how much stronger} is an untruthful voter over a truthful one under different voting rules? We suggest measuring the potential comparative advantage of a strategic voter by asking how many copies of their (truthful) vote must be added to the election in order to achieve an outcome as good as their best manipulation. Intuitively, this definition quantifies what a voter can gain by manipulating in comparison to what they would have gained by finding like-minded voters to join the election (equivalently, by increasing their voter weight). The higher the former is, the more incentive a voter will have to manipulate, even when it is computationally costly. Our measure is prior-free, and requires no assumptions on the voters' cardinal utilities.
    
    Using this framework, we obtain a principled method to measure and compare the \emph{manipulation potential} for different voting rules. We analyze and report this potential for well-known and broad classes of social choice functions. In particular, we show that while the manipulation potential of many rules grows linearly with the number of voters, Borda Count has one that is independent of this number. Instead, its manipulation potential is bounded by the number of candidates, which can be much lower in many practical settings. Building on this result, we prove that the positional scoring rule with the smallest manipulation potential will always be either Borda Count (if the number of voters outweighs the number of candidates) or Plurality (otherwise). Further, we prove that any rule satisfying a weak form of majority consistency (and thus any Condorcet extension) cannot outperform Plurality, and that any such rule that is additionally majoritarian will perform significantly worse. Consequently, Borda Count has a much lower manipulation potential than any member of this wide class of rules under large electorates. By establishing a clear separation between different rules in terms of manipulation potential, our work paves the way for the search for rules that provide voters with minimal advantage from manipulating. 
\end{abstract}
\clearpage
\pagenumbering{arabic}
\hypersetup{pageanchor=true}
\begin{center}
\emph{All voting rules are manipulable. But some rules are more manipulable than others.}
\end{center}

\section{Introduction}
Alice has found herself in a conundrum. It is election day in Wonderland, and she firmly wants to help her favorite candidate, the Mad Hatter, become the next mayor. A well-versed student in social choice theory, Alice knows that any reasonable voting rule is manipulable, meaning voting untruthfully may help reach her goal~\citep{gibbard1973manipulation, satterthwaite1975strategy}. So how should she vote?

For Alice, answering this question will not be easy. She is aware that voter manipulation is $\NP$-hard for many voting rules~\citep{bartholdi1989computational,bartholdi1991single,faliszewski2010ai}, and she worries that she might not find her optimal vote before the election day is over. Further, Alice is unsure about how the other residents of Wonderland will vote, and fears that a misestimate may cause her manipulation to do her more harm than good. Indeed, for many voting rules, Alice cannot theoretically guarantee lying will be \emph{safe} unless she knows a significant portion of other votes~\citep{hartman2025s}, and she is further troubled by empirical results demonstrating the perils of manipulating under uncertainty~\citep{eckel1989strategic,kube2010when,tyszler2016information}. Even if Alice is generally aware of the preferences of her compatriots, she realizes there may be other forms of uncertainty, such as rapidly shifting preferences or not knowing who will actually show up to the election~\citep{meir2018simultaneous}. Worst of all, she is scared other voters might be strategizing themselves! Consequently, the intel Alice has on these voters may also have been cleverly handpicked by them to avoid being exploited~\citep{hartline2021mechanism}.

In the midst of her despair, Alice remembers she has a secret weapon: a car. Many of her neighbors fully agree with her (true) preferences over the mayoral candidates, but they will not interrupt their tea time to go and vote---unless someone drives them. Filled with excitement, Alice decides to abandon any efforts to compute her optimal untruthful vote and instead drive her honest neighbors to the polls.\footnote{For a real-life comparison, consider transportation services by various get-out-the-vote (GOTV) efforts.} After all, this alternative plan requires no computation, no information on the rest of the electorate, and not even information on the voting rule being used.\footnote{Although Alice likely knows that in Wonderland, election winners are determined using Dodgson's rule~\citep{dodgson1876method}.} However, just as she is about to reach for her car keys, an anxious question strikes Alice: \emph{just how many neighbors will she have to drive, before she can be certain that abandoning her efforts to manipulate was worth it?}

\subsection{How Many Votes Is a Lie Worth?}\label{sec:how-many}

In this paper, we resolve Alice's conundrum. More precisely, we investigate the number of additional (truthful) copies a voter would need in order to produce a result as desirable to her as her best manipulation. For most natural voting rules, getting sufficiently many copies should eventually reach this goal, since these rules favor overwhelming majorities. However, the exact number of copies needed can be quite different for different rules, as the following example shows.

\begin{example}\label{ex:intro}
    Consider two voting rules that map voters' rankings over $\ncand$ candidates to a winner. In \emph{Borda Count} $(\borda)$, each voter awards $\ncand-j$ points to their $j^\text{th}$-ranked candidate, and the candidate with the most points wins. \emph{Instant Runoff} voting ($\irun$), on the other hand, iteratively eliminates the candidate with the fewest voters ranking them top (out of the remaining candidates), and the last remaining candidate wins. Given 4 candidates and $\nvote \geq 9$ voters, say the voters' preferences over candidates are as follows:
    \medskip
    \begin{center}
        \begin{tabular}{c|l}
             \cellcola Voter 1 & \cellcola $\cand{1} \vote{} \cand{2} \vote{} \cand{3} \vote{} \cand{4}$\\
             \cellcolb Voter 2 &  \cellcolb $\cand{4} \vote{} \cand{1} \vote{} \cand{2} \vote{} \cand{3}$\\
             \cellcola  $\frac{\nvote-1}{4}$ voters &  \cellcola  $\cand{2} \vote{} \cand{3} \vote{} \cand{4} \vote{} \cand{1}$\\
             \cellcolb $\frac{\nvote-1}{4}$ voters & \cellcolb $\cand{3} \vote{} \cand{4} \vote{} \cand{1} \vote{} \cand{2}$\\
              \cellcola $\frac{\nvote-3}{2}$ voters & \cellcola $\cand{4} \vote{} \cand{3} \vote{} \cand{2} \vote{} \cand{1}$\\
        \end{tabular}
    \end{center}
    \medskip
    Using Instant Runoff, $\cand{1}$ is eliminated first, followed by $\cand{3}$, and finally $\cand{2}$, making $\cand{4}$ the winner. Similarly, under Borda Count, the scores of candidates $\cand{1}, \cand{2}, \cand{3}$, and $\cand{4}$ are $\frac{\nvote+19}{4}, \frac{5 \nvote+3}{4}, \frac{9\nvote-13}{4}$, and $\frac{9\nvote-9}{4}$, respectively, so the winner is once again $\cand{4}$. Suppose voter 1 misreports her preferences as ${\color{purple} \cand{3}} \manip{} \cand{1} \manip{} \cand{2} \manip{} \cand{4}$ instead. Now, the winner under either rule becomes $\cand{3}$, whom voter 1 indeed prefers to $\cand{4}$. However, while adding just two more copies of voter 1's truthful vote to the original profile would have also made $\cand{3}$ the Borda Count winner, this is not the case for Instant Runoff: with less than $\frac{\nvote-5}{4}$ additional copies of voter 1's truthful vote, the $\irun$ winner is still $\cand{4}$. Even with more truthful copies, $\cand{4}$ may still win, depending on how we break ties.\footnote{See \Cref{sec:manipot} on how we deal with ties.} Voter 1 will need $\nvote-1$ copies of her truthful vote (in which case they become a majority, and $\cand{1}$ wins) to get an outcome that she undoubtedly (\ie, regardless of the tiebreaker) prefers to misreporting.
\end{example}

As \Cref{ex:intro} demonstrates, the same untruthful vote can outweigh different numbers of truthful ones under different voting rules, by an arbitrarily large margin. To analyze this ``exchange rate'' between lies and copies for different rules, we define the \emph{\textbf{manipulation potential}} of a voting rule as the number of truthful copies needed (in the worst case) to ensure a voter could not have reached a better outcome by misreporting (see \Cref{sec:manipot} for a formal definition). We attach three interpretations to this metric:

\begin{enumerate}[leftmargin=0.5cm,topsep=2pt,itemsep=1pt]
    \item[1)] \emph{Get out the vote (Alice's conundrum):} If the electorate consists of communities with identical preferences, any member can dedicate their resources to increase voter participation from their community. 
    Equivalently, they may increase their \emph{voter weight} in exchange for resources (\eg, money or effort) in other settings, such as corporate elections (by buying more shares) or proof-of-work-based blockchain governance~\citep{boehmer2024approval,burdges2020ovefview,cevallos2021verifably}.
    The manipulation potential of a rule is the \emph{minimum} opportunity cost of misreporting (in terms of the cost of increasing voter weight by one) sufficient to disincentivize manipulation. If the manipulation cost (\eg, computational, or for gathering information) is larger, voters are better off investing their resources in raising community turnout or increasing their voter weight. 
    \item[2)] \emph{Endogenous participation:} Taking a dual approach to (1), we can think of voters representing communities, with their weight corresponding to the number of members. The community leader(s) can recommend the members to vote strategically to achieve an outcome they prefer, but this may come at the cost of losing some fraction of the community who is unwilling to misreport their preferences (see, for example, Australian parties being ``punished'' by their voter base after issuing tactical how-to-vote cards in Instant Runoff elections~\citep{raue2009instant,abc2017WA}). The manipulation potential is the minimum reduction in voter weight to guarantee manipulating does more harm than good. 
    \item[3)] \emph{Approximate strategyproofness:} As no non-dictatorial rule can achieve strategyproofness~\citep{gibbard1973manipulation, satterthwaite1975strategy}, a long line of work has investigated how close any rule can get to it (see \Cref{sec:manip-lit} for an overview). Establishing approximations based on cardinal utilities (\eg, ``being truthful guarantees $\alpha$ fraction of the utility from best manipulation'') is difficult since we only have access to ordinal preferences. For instance, in \Cref{ex:intro}, the improvement in voter 1's utility (after changing the winner from $\cand{4}$ to $\cand{3}$) is tremendous if she is almost indifferent between $\{\cand{1}, \cand{2}, \cand{3}\}$ but really dislikes $\cand{4}$. However, it may also be insignificant if she is almost indifferent between $\{\cand{3}, \cand{4}\}$ instead. Rather than a utility-based analysis, we thus turn to a \emph{resource augmentation} approach, where the outcome of the ``best-possible'' solution (in this case, the optimal manipulation) is compared with another method equipped with additional resources (in this case, truth-telling with ``bonus'' votes)~\citep{roughgarden2020beyond}. Thus, a rule's manipulation potential captures how approximately strategyproof it is. Resource augmentation and similar bicriteria approaches are frequently employed in mechanism design, from where we draw our inspiration (see \Cref{sec:bulow-klemperer}).
\end{enumerate}

Overall, the manipulation potential framework offers a principled method of measuring the (worst-case) advantage of a strategic voter over her (augmented) truthful self. By establishing a clear and prior-free separation between different rules in terms of manipulability, our work paves the way for the search for rules that provide voters with minimal advantage from manipulating. 

\begin{center}
    \begin{table}[t]
    \centering

\begin{NiceTabular}{|c|c||c|c|c}[cell-space-limits = 2pt]
    \cline{1-4}
      \cellcolor{\tabcold}\textbf{Class of rules}  & \cellcolor{\tabcold} \textbf{Rule}  & \cellcolor{\tabcold} \textbf{Lower} & \cellcolor{\tabcold} \textbf{Upper}& \\
    \hhline{|=|=||=|=|~ }
      \Block[fill=\tabcolc]{4-1}{Positional\\Scoring Rules,\\$(\score_1,\score_2,\ldots ,\score_\ncand$)}  & \cellcolor{\tabcolc} Borda Count ($\borda$) 
     & \Block[fill=\tabcolc]{1-2}{$m-2$ (\Cref{thm:borda})} 
     & & \\
    \cline{2-4}
      & \cellcolor{\tabcolc} Any rule with $s_1 = s_2$ & \Block[fill=\tabcolc]{1-2}{$\infty$ (\Cref{thm:pos-infinity})} & &\\
    \cline{2-4}
    
      & \cellcolor{\tabcolc} Any rule with $s_1\neq s_2$ &\cellcolor{\tabcolc} $\min\{\borda, \plurality\}$ (\cref{thm:pos-best}) & \cellcolor{\tabcolc} See \cref{thm:pos-upper} &\\
    \cline{2-5}
     
      & \cellcolor{\tabcola} Plurality (\plurality)  
     & \Block[fill=\tabcola]{1-2}{$\lceil \frac{n-1}{2} \rceil$ (\Cref{thm:plur})} 
     &  &\Block[fill=\tabcola,draw]{7-1}{\rotatebox{-90}{Majority-consistent}} \\
    \Hline\Hline
    
      \Block[fill=\tabcola]{2-1}{Condorcet\\Extensions} 
     & \cellcolor{\tabcola} Black's Rule $(\black)$ & \Block[fill=\tabcola]{1-2}{$\nvote-1$ (\Cref{thm:black}) }& \\
    \cline{2-4}
      & \cellcolor{\tabcola} Maximin  $(\maximin)$ & \Block[fill=\tabcola]{1-2}{$\approx \frac{(\ncand-2)\nvote}{\ncand-1}$ (\Cref{thm:maximin})} &  &\\
    \Hline\Hline
    
      \Block[fill=\tabcola]{1-2}{Any majoritarian majority-consistent rule}  &  & \cellcolor{\tabcola} $\nvote-2$ (\Cref{thm:majoritarian}) & \Block[fill=\tabcola]{2-1}{$\nvote-1$\\(\Cref{remark:majority-consistent})} &\\
    \Hline
      \Block[fill=\tabcola]{1-2}{Any majority-consistent rule} &   &\cellcolor{\tabcola} $\frac{\nvote-1}{2}$ (\Cref{thm:biranking}) & &\\
    \Hline\Hline
    
     \Block[fill=\tabcolb]{4-1}{Other} 
     & \cellcolor{\tabcola} Instant Runoff ($\irun$)  & \Block[fill=\tabcola]{2-2}{$\nvote-1$ (\Cref{thm:runoff})}&& \\
    \cline{2-2}
      &  \cellcolor{\tabcola}  Plurality with Runoff $(\plurWrunoff)$ & & \\
    \cline{2-2}\cline{3-5}
     & \cellcolor{\tabcolb} Pareto $(\pareto)$ & \Block[fill=\tabcolb]{2-2}{$\infty$ (\Cref{prop:paromni}) }& &\\
    \cline{2-2}
      &  \cellcolor{\tabcolb}  Omninomination $(\omni)$ & && \\
    \cline{1-4}
    
\end{NiceTabular}
    \caption{Summary of results showing bounds on the manipulation potential of the rules and rule classes studied in this paper. For lower bounds, see the theorem/proposition statement for the assumptions on $\nvote$ \& $\ncand$.}
    \label{tab:rules-bounds}
    \end{table}
\end{center}

\subsection{Contributions} 

\paragraph{Manipulation Potential} We introduce a \emph{prior-free} metric for measuring the comparative advantage of a manipulator over a truthful voter. Our approach is based on a property of voting rules we call \emph{$k$-augmentation strategyproofness} (\Cref{def:kmanip}), which requires that a voter will always weakly prefer adding $k$ copies of her truthful self to any possible manipulation. We then define the \emph{manipulation potential} of a voting rule as the minimum $k$ for which it is $k$-augmentation strategyproof (\Cref{def:manipot}). The manipulation potential is a major departure from existing manipulability measures, which often require assumptions on how voters' preferences are distributed or on the utility function behind their ordinal preferences (see \Cref{sec:manip-lit}).

With our toolkit complete, \Cref{sec:rules} initiates our main agenda of finding the manipulation potential of well-studied (classes of) rules. While a single election instance suffices to lower bound the manipulation potential of a specific rule, any upper bound must account for all profiles, which, as we will see, will require unique techniques for each rule (class). As a warm-up, we show that the manipulation potentials of three rules employed in practice (Plurality, Plurality with Runoff, and Instant Runoff) are all $\Theta(\nvote)$, where $\nvote$ is the number of voters (\Cref{thm:plur,thm:runoff}). Our result shows that under these rules, manipulating can (in the worst case) be more powerful than introducing enough copies to become a large fraction (for the latter two, a majority) of the voters, which is troublesome for sizeable electorates. To counter this, we show that the manipulation potential of Borda Count is \emph{independent} of the number of voters (\Cref{thm:borda}). Instead, it is precisely $\ncand-2$, where $\ncand$ is the number of candidates, offering a significant advantage when $\nvote \gg \ncand$.\footnote{Indeed, we recall that in \Cref{ex:intro}, exactly $\ncand-2=2$ copies were enough to outweigh manipulation under Borda Count.}

\paragraph{Positional Scoring Rules} Having seen that the manipulation potential of different rules can be arbitrarily far from one another, we turn to general rule classes. In particular, we look at \emph{positional scoring rules}, an infinite-sized class that includes both Plurality and Borda Count. We show that the manipulation potential of \emph{any} positional scoring rule can be decomposed into the maximum over $\ncand$ terms (one for each position in the ballot), each of which is a minimum over one $\nvote$-dependent and one $\nvote$-independent term (\Cref{thm:pos-lower,thm:pos-upper}). Iteratively applying this bound gives us our main characterization result for this class.

\begin{theorem}[Informal statement of \Cref{thm:pos-best}]
    For any value of $\nvote$ and $\ncand$, the positional scoring rule with the lowest manipulation potential is either Borda Count (if $ \ncand-2< \frac{\nvote-1}{2}$) or Plurality (otherwise). 
\end{theorem}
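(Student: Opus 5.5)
The plan is to derive the statement directly from the two-sided decomposition of the manipulation potential of a general positional scoring rule given by \Cref{thm:pos-lower} and \Cref{thm:pos-upper}. Fix a score vector $\score_1 \geq \score_2 \geq \cdots \geq \score_\ncand$ with $\score_1 \neq \score_2$; if $\score_1 = \score_2$, \Cref{thm:pos-infinity} already gives an infinite potential, so such rules never beat Borda Count or Plurality. For the remaining rules, \Cref{thm:pos-lower} lower bounds the potential by a maximum over positions $j$. Each term is a minimum of an $\nvote$-dependent quantity (roughly $\frac{\nvote-1}{2}$, the Plurality value of \Cref{thm:plur}, arising when $\score_j$ is far from the next score) and an $\nvote$-independent quantity governed by ratios of consecutive score gaps $(\score_{j}-\score_{j+1})$ relative to $(\score_1-\score_2)$.

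The key step is to show that, for every such vector, the maximum over positions is at least $\min\{\ncand-2, \lceil\frac{\nvote-1}{2}\rceil\}$. I would argue by the ``iterative application'' hinted at in the text. Suppose some position's term attains the $\nvote$-dependent branch. Then the potential is already at least the Plurality value, and we are done. Otherwise every position's term is on the $\nvote$-independent branch. I then chain these inequalities across positions $2,\dots,\ncand-1$. Each position forces its gap to be at least comparable to the previous one, unless the potential exceeds the bound at that step. Telescoping the gaps, the sum $\score_1-\score_\ncand$ yields a bound on the number of positions that can be ``absorbed'', giving potential $\geq \ncand-2$. Equality holds exactly for equal gaps, i.e.\ Borda Count, consistent with \Cref{thm:borda}. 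So every rule has potential at least $\min\{\ncand-2,\lceil\frac{\nvote-1}{2}\rceil\}$.

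For the matching upper bound, I would invoke \Cref{thm:borda} (Borda Count has potential exactly $\ncand-2$) and \Cref{thm:plur} (Plurality has potential $\lceil\frac{\nvote-1}{2}\rceil$). Whichever is smaller is therefore attained. The case split $\ncand-2<\frac{\nvote-1}{2}$ versus otherwise then selects Borda Count or Plurality respectively.

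The main obstacle is the chaining step: making the induction across positions rigorous. The $\nvote$-independent term at position $j$ presumably involves $\score_1-\score_{j}$ divided by gap sizes, and one must show that making any gap small, in an attempt to beat Borda Count, forces some other position's ratio to blow up past $\ncand-2$. I would first try a pigeonhole argument: among the $\ncand-1$ gaps summing to $\score_1-\score_\ncand$, take the position where the ratio of the accumulated score drop to the local gap is largest, and show it is at least $\ncand-2$. If that fails, I would fall back to an explicit induction on $j$ maintaining the invariant $\score_1-\score_j \le (j-1)\cdot(\text{potential bound})\cdot(\score_{j}-\score_{j+1})$, or a similar bound.
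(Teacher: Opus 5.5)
Your skeleton is the right one: dispose of $s_1=s_2$ via \Cref{thm:pos-infinity}, feed \Cref{thm:pos-lower}, and compare with \Cref{thm:borda,thm:plur}. But there is a genuine gap: you treat the hard step as easy and the easy step as the main obstacle.

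In \Cref{thm:pos-lower}, the $n$-independent entry at position $\ell$ is simply $\frac{1}{s_\ell-s_{\ell+1}}-1$ (with $s_1=1$, $s_m=0$), so no chaining or telescoping is needed. The $m-1$ gaps sum to $1$, so unless the rule is Borda Count some gap is $<\frac{1}{m-1}$, and at that position this entry exceeds $m-2$. What remains is the other entry of the minimum at that same position, $\min_t\lfloor\frac{(n-2)(\frac{1+s_2}{2}-s_{m-t+1})+1-s_{m-t}}{s_t-s_{\ell+1}}\rfloor$. This is \emph{not} ``roughly $\frac{n-1}{2}$''.

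Your step ``if some position attains the $n$-dependent branch, the potential is already at least the Plurality value'' is false. Take $m=5$, $s=(1,1-\epsilon,1-2\epsilon,\delta,0)$ with $\delta\ll\epsilon$, and $n$ even just above the threshold $\frac{2s_2}{1-s_2}\approx\frac{2}{\epsilon}$. At $\ell=4$ the first entry is $\frac1\delta-1$. Its $t=3$ entry, however, is $\frac{\frac32\epsilon(n-2)+\epsilon}{1-2\epsilon}\approx 3$, far below $\lceil\frac{n-1}{2}\rceil\approx\frac1\epsilon$.

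The paper avoids this by taking $\ell$ to be the \emph{first} index with $s_\ell-s_{\ell+1}<\frac{1}{m-1}$. All earlier gaps are then at least $\frac{1}{m-1}$, so $s_i\le\frac{m-i}{m-1}$ for every $i\le\ell$. A three-way case analysis ($t=1$; $t\ge 2$ with $\ell\le m-t$; $t\ge 2$ with $\ell\ge m-t+1$) then uses these bounds to show every $t$-entry at that $\ell$ is at least $\frac n2\ge\lceil\frac{n-1}{2}\rceil$. Your plan has neither this choice of position nor this computation.

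Second, \Cref{thm:pos-lower} only holds when $n$ is even and $n\ge\frac{2s_2}{1-s_2}$. For rules with $s_2$ near $1$ this excludes all but huge electorates, so deriving the statement ``directly from the decomposition'' leaves those rules uncovered. The paper handles $n<\frac{2s_2}{1-s_2}$ separately with a direct profile:
\begin{itemize}
\item $\frac n4$ voters, including voter 1, rank $c_1\succ c_2\succ\cdots$;
\item $\frac{3n}4$ voters, and the tiebreaker, rank $c_2\succ c_1\succ\cdots$.
\end{itemize}
With $k\le\frac n2$ truthful copies, $c_2$ stays weakly ahead of $c_1$ by $(\frac n2-k)(1-s_2)\ge 0$ and wins the tie. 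If instead voter 1 moves $c_2$ to the bottom, $c_1$ is ahead by $s_2-\frac n2(1-s_2)>0$. Hence the potential exceeds $\frac n2\ge\lceil\frac{n-1}{2}\rceil$. This case is also where the divisibility-by-4 hypothesis of the formal \Cref{thm:pos-best} comes from. Note that \Cref{thm:pos-upper} plays no role in the argument.
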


Our result indicates that these two rules alone make up the entire Pareto frontier of positional scoring rules in terms of (lowest) manipulation potential. Indeed, the manipulation potential of any other positional scoring rule is strictly worse and can even be unbounded (\Cref{thm:pos-infinity}), indicating that under some rules, manipulating can be more powerful than becoming an arbitrarily large fraction of the electorate.

\paragraph{(Biranking) Majority-Consistent Rules} Having established that no positional scoring rule can outperform Borda Count (in the $\nvote \gg \ncand$ regime), we turn to other rule classes to investigate how much this result generalizes. A ubiquitous notion in social choice theory---dating back to the 18$^{\text{th}}$ century~\citep{CaritatMarquisdeCondorcet85:Essai}---is that of a \emph{Condorcet winner}: a candidate who defeats everyone else in a pairwise matchup. A voting rule is said to be \emph{Condorcet-consistent} (or a \emph{Condorcet extension}) if it picks the Condorcet winner whenever one exists. Such rules have a trivial upper bound of $\nvote-1$ on their manipulation potential, as with that many copies, the voter's top choice becomes a Condorcet winner. We show that no such rule can achieve a manipulation potential better than $\Omega(\nvote)$. To strengthen this impossibility, we significantly weaken the Condorcet consistency axiom to what we call \emph{biranking majority consistency}: whenever an election only contains two voter types (\ie, each voter submits one of two specific rankings), a biranking-majority-consistent rule picks the top choice of the more-frequent ranking. This is a much wider class than Condorcet extensions; in particular, we show that they include all \emph{runoff rules}---another infinite-sized class---which generalize Instant Runoff.

\begin{theorem}[Informal statement of \Cref{thm:biranking,thm:majoritarian}]
    Any rule satisfying biranking majority consistency has a manipulation potential of at least $\frac{\nvote-1}{2}$. If the rule is additionally majoritarian (the winner is solely determined by pairwise defeats), then the manipulation potential is at least $\nvote-2$.
\end{theorem}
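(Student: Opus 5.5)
The plan is a lower-bound construction. I use only biranking majority consistency on profiles where the manipulator's lie makes the profile biranking, and a cyclic-symmetry case analysis to handle the one profile (the augmented truthful one) on which the rule's output is unconstrained.

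\textbf{Part 1: the $\frac{\nvote-1}{2}$ bound.} Fix three candidates $a,b,c$ and append the remaining candidates in a fixed order at the bottom of every ranking. Consider the three cyclic rankings
\[
P = a \vote{} b \vote{} c \vote{} \cdots, \qquad Q = b \vote{} c \vote{} a \vote{} \cdots, \qquad R = c \vote{} a \vote{} b \vote{} \cdots .
\]
Let $g\ge 1$ and let $\nvote = 2g+1$. Let $X$ be the profile with $g$ voters of each of $P,Q,R$, and let $w = f(X)$ for the rule $f$ under study. I pick the manipulator's type according to $w$:
\begin{itemize}
\item if $w=c$ or $w\notin\{a,b,c\}$, she is a $P$-voter and will report $Q$;
\item if $w=a$, she is a $Q$-voter and will report $R$;
\item if $w=b$, she is an $R$-voter and will report $P$.
\end{itemize}
Say she is of type $T$ and lies by reporting $T'$, and let $T''$ be the third type. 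The original election has the manipulator plus $g$ voters of each of $T'$ and $T''$, so $\nvote = 2g+1$.

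After her lie, the profile has exactly two rankings, with $T'$ reported $g+1>g$ times. By biranking majority consistency the winner is the top of $T'$. In every case above she strictly prefers that top to $w$; for instance, a $P$-voter prefers $b$ to $c$ and to every appended candidate.

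With $k=g-1$ truthful copies added, the truthful profile is exactly $X$, so the outcome is $w$. Hence $f$ is not $(g-1)$-augmentation strategyproof, and its manipulation potential is at least $g = \frac{\nvote-1}{2}$. Since $f$ is deterministic here, the tiebreaking issues from \Cref{sec:manipot} do not arise. For even $\nvote$ I would add one voter whose ranking equals one of the two non-manipulator types, or check that the statement only claims the bound for odd $\nvote$.

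\textbf{Part 2: the $\nvote-2$ bound for majoritarian rules.} The same skeleton applies, but now the augmented profile has $\nvote-2$ copies of the manipulator's ranking. That group is far too large for a cyclic profile with equal groups. The trick is that a majoritarian rule sees only the pairwise majority relation. So I only need the augmented profile to have a \emph{fixed} majority relation $G$, for instance a $3$-cycle on $\{a,b,c\}$ or one containing pairwise ties, while the manipulated profile is still biranking.

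Concretely, I keep the manipulated profile as $T'$ with $q+1$ voters versus $T''$ with $q$ voters, so $\nvote = 2q+1$. I then choose $T, T', T''$ (possibly no longer the pure cyclic triple, with the irrelevant candidates arranged to offset margins) so that $(\nvote-1)$ copies of $T$ together with $q$ copies each of $T'$ and $T''$ realize $G$. This needs no single pair to have a majority strictly in $T$'s favor beyond what the cycle allows. It is where tied majority edges, which are available with an even total count, or carefully interleaved positions of $a,b,c$ should help.

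Given $w = f(G)$, I would again do the three-way symmetric case analysis over the rotations of the construction. This picks the manipulator's role so that the top of $T'$ beats $w$ in her true ranking.

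The main obstacle is this second construction: finding rankings where a group of size about $\nvote$ fails to dictate any pairwise comparison it ``should'' win, while a single defection still produces a strict majority between two rankings. I would first try letting $T$ disagree with both $T'$ and $T''$ on the key pairs by placing the candidates of $T$ in the middle positions, and then solve the small linear system on pairwise margins.
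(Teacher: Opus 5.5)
Your Part~1 is correct and is essentially the paper's proof of the $\frac{n-1}{2}$ bound. It uses three cyclic groups of size $\frac{n-1}{2}$, a manipulator drawn from a group that ranks the winner third or lower, $\frac{n-1}{2}-1$ of her copies removed, and a lie that joins the next group in the cycle. This uses anonymity, which the formal statement assumes. Drop the even-$n$ patch, though. Putting the extra voter in $T''$ makes the lie produce only a tie. Putting it in $T'$ destroys the rotational symmetry of the augmented profile that your case analysis needs. The paper claims the bound only for odd $n$ and remarks that parity is essential there.

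Part~2, however, has a genuine gap: you never exhibit the profile, and the step you call ``the main obstacle'' is the substance of the proof. The missing observation is that the plain cyclic triple already works, so your worry that the manipulator's group is ``far too large'' is unfounded.

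Let the original profile be the manipulator $a\succ b\succ c\succ\cdots$ plus $\lceil\frac{n-1}{2}\rceil$ voters of $b\succ c\succ a\succ\cdots$ and $\lfloor\frac{n-1}{2}\rfloor$ voters of $c\succ a\succ b\succ\cdots$. This also covers even $n$, which your $n=2q+1$ setup omits. After adding $k=n-3$ truthful copies, her type has $n-2$ voters. It wins $a$ versus $b$ and $b$ versus $c$, since on each of these pairs one other group sides with it. It loses only $c$ versus $a$, where the two other groups together have $n-1>n-2$ voters. So the majority relation is the 3-cycle $a\to b\to c\to a$, with $a,b,c$ beating everyone else.

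Your count ``$(n-1)$ copies of $T$'' is off by one. At that count the $a$--$c$ pair ties, $a$ becomes a weak Condorcet winner, and nothing can be forced. The violation must be sought at $k=n-3$.

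The paper finishes with neutrality. Cyclically relabeling $a,b,c$ preserves this majority relation, so the output set contains all of $a,b,c$ or none of them. A tiebreaker ranking $c$ first then yields a winner below $b$, while joining the $b\succ c\succ a$ group gives $b$ by biranking majority consistency.

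Your rotation idea would also close the argument once this profile is in hand, and slightly more generally. Rotate the roles of the three rankings, always giving $T'$ the larger of the two other groups; this yields the same majority relation every time. A majoritarian rule therefore outputs the same $w$ in all three rotations, and you choose the rotation in which the manipulator ranks $w$ third or lower. That route needs neither neutrality nor a specially chosen tiebreaker. As written, though, Part~2 is only a plan.
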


We remark that Borda Count fails biranking majority consistency,\footnote{Consider a profile where a slight majority ranks the minority's top choice second, but the minority ranks their top choice last.} allowing it to achieve a much lower manipulation potential in the $\nvote \gg \ncand$ regime. \Cref{tab:rules-bounds} summarizes our results for various (classes of) rules. Overall, the key takeaways from our work are as follows: 
\begin{itemize}[leftmargin=0.5cm,topsep=2pt,itemsep=1pt]
    \item Manipulation potential provides a prior-free ``exchange rate'' between strategic voting and additional truthful support: a rule with low manipulation potential leaves little worst-case advantage to lying over finding like-minded voters or increasing voter weight through other means. The measure provides meaningful separations between voting rules and rule classes in terms of manipulability.
    \item Borda Count stands out among well-studied rules by having a manipulation potential independent of the number of voters. Under large electorates, this rule outperforms all positional scoring rules, runoff rules, and (biranking) majority-consistent rules, demonstrating its weak incentives for manipulation in settings where information on other voters is scarce and resources can be used to find more voters (or increase voter weight). The same is true for Plurality whenever candidates outnumber voters.
    
    \item There is a fundamental tradeoff between rules with low manipulation potential and rules that unequivocally follow majorities. Rules for which being the top choice of an (even slight) majority guarantees victory will leave room for manipulations that are more powerful than becoming a significant fraction of the electorate. Designing rules that curb the comparative advantage of a manipulator requires making use of information from the ballots of the entire electorate, not just of the majority. 
\end{itemize}

Our results and framework pave the way for many interesting directions for understanding the comparative advantage of manipulation under limited resources, some of which we discuss in \Cref{sec:future}.

\subsection{Related Work}\label{sec:related}

In this section, we first discuss the resource augmentation approach, in which our notion of manipulation potential is rooted. We then discuss how the manipulation potential differs from prior manipulability metrics.

\subsubsection{Resource Augmentation}\label{sec:bulow-klemperer} 
The definition of manipulation potential is inspired by a classical result in auction theory by \citet{bulow1994auctions}. Take a single-item auction involving bidders with i.i.d.\ valuations. Here, a mechanism designer trying to maximize (expected) revenue faces challenges analogous to a strategic voter: the optimal auction may not be simple and requires knowing the distribution of bidders' valuations. Further, the correct choice under one distribution can significantly hurt revenue under another~\citep{hartline2021mechanism}. Luckily for the designer, as \citeauthor{bulow1994auctions} show, running a second-price auction with $\nvote+1$ bidders will generate at least as much revenue as the revenue-maximizing auction with $\nvote$ bidders. Much like a voter reporting her truthful preferences, running a second-price auction is simple and requires no knowledge of preferences; in other words, it is \emph{prior-free}. Intuitively, the theorem shows that the auctioneer is better off finding an additional bidder than trying to learn the bidders' preferences and compute the optimal auction.

There is a large body of work that extends this result to various auction settings by asking \emph{how many} additional bidders are needed so that running the ``simplest'' auction (\eg, VCG) yields expected revenue at least as high as the optimal auction---\eg, \citep{Alon2017,Brustle2022Competition, brustle2024competition, ezra2025competition, 
ezra2024competition, liu2018competition, derakhshan2024settling,beyhaghi2019optimal}.
This measure is referred to as the \emph{competition complexity}. This same approach of comparing a simple algorithm to the optimal solution with fewer resources has been applied in other domains under the name \emph{resource augmentation} (\cf.\  \citep{roughgarden2020beyond} for an overview), including allocation problems~\citep{akrami2025fair, budish2011combinatorial}, selfish routing~\citep{roughgarden2002bad,friedman2004genericity}, scheduling~\citep{kalyanasundaram2000speed,anand2012resource}, and online paging~\citep{sleator1985amortized}.

Despite the parallels, our model also exhibits several key differences from the above work. For instance, in our setting, the manipulator is no longer the mechanism designer, but one of the \emph{participants} in the mechanism. Consequently, while the resource augmentation approach in auction theory typically corresponds to adding bidders (who do not share the designer's incentives, but still benefit the designer by participating), in our case the resource being augmented is the weight of the voter herself, via truthful copies.

\subsubsection{Degree(s) of Manipulability}\label{sec:manip-lit}

Strategic voting has long been one of the cornerstones of the (computational) social choice literature. Indeed, following the seminal result by \citet{gibbard1973manipulation} and \citet{satterthwaite1975strategy} showing that any nondictatorial and onto voting rule can be manipulated by voters (when $\ncand>2$), a rich line of work has focused on measuring the manipulability of individual rules. Broadly speaking, we can group these measures under those that study (1) the \textbf{cost} of, (2) the \textbf{gains} from, and (3) the \textbf{frequency} of manipulations.

\paragraph{Manipulation Cost} 
An extensive line of research has attempted to counter the pessimism of the Gibbard-Satterthwaite impossibility theorem by demonstrating that manipulation can have high \emph{costs} under common voting rules, including computational costs \citep{bartholdi1989computational, bartholdi1991single,conitzer2007elections,conitzer2011dominating,conitzer2002complexity,faliszewski2008copeland,obraztsova2011ties,faliszewski2010ai,walsh2011computational,Mattei2014,davies2014complexity} and the cost of obtaining sufficient information on other votes \citep{nisan2002communication, grigorieva2006communication, Communication2019Branzei,Babichenko2019communication,chevaleyre2009compiling,xia2010compilation,karia2021compilation,hartman2025s}. Despite these obstacles, voters may still misreport their vote if the potential \emph{gain} is sufficiently large (which we discuss next), especially as complexity results do not rule out heuristic manipulation algorithms that work well in practice~\citep{Conitzer06:Nonexistence,davies2014complexity}. Further, much like strategyproofness itself, most complexity results induce a binary classification on rules (manipulation is tractable or not), making it difficult to compare rules that fall on the same side of the dichotomy. The manipulation potential of a rule, on the other hand, is much more fine-grained, and can depend on $\nvote$ or $\ncand$ (or both, \eg, \Cref{thm:maximin}), on the positional scoring vector, or can be unbounded (\Cref{tab:rules-bounds}).

\paragraph{Gains From Manipulation} Another direction to study the manipulability of rules is to investigate how much a voter can \emph{gain} from misreporting, and hence their incentives to do so. While metrics such as the \emph{incentive ratio} can be used to study gains from manipulation in settings with cardinal utilities (\eg, \citep{chen2011profitable,chen2022incentive, li2024bounding, cheng2022tight, cheng2019improved, bei2025incentive, tao2024fair}), this is not applicable to our model, where only ordinal preferences are available. For such settings, \citet{campbell2009gains} propose measuring the manipulative gain of a rule by the largest improvement, measured in positions in the manipulator's
ranking, that a voter can obtain by misreporting; related works consider
similar ordinal improvements in expectation
\citep{smith1999manipulability,aleskerov1999degree} (see the discussion on manipulation frequency below). Much like our manipulation potential, the (manipulation) gain of \citep{campbell2009gains} is a worst-case measure over all profiles. However, their aim is less about distinguishing between rules (as we hope to do), and more about generalizing the Gibbard-Satterthwaite impossibility theorem: as the authors point out, ``for all non-dictatorial rules with at least three alternatives in the range, small maximal gains can only be achieved if some one individual has a great deal of power in determining [the winner]'', whereas we only consider rules that treat all voters equally (see \Cref{sec:prelim}). Further, the measures in \citep{smith1999manipulability,aleskerov1999degree,campbell2009gains} implicitly treat each ordinal improvement as equal and therefore do not account for the fact that the same preference order may be consistent with arbitrarily many utility functions; consequently, the same change in the ranking can correspond to very different utility gains (see interpretation (3) in \Cref{sec:how-many}). Our manipulation potential, on the
other hand, allows us to bound the comparative advantage of a manipulator over \emph{all} possible utility functions consistent with her ordinal ranking. It achieves this by measuring manipulative advantage through the effort needed to
reproduce it truthfully---specifically, by augmenting the voter weight. A similar ordinal-versus-cardinal tension appears in the rich literature on \emph{distortion}~\citep{procaccia2006distortion}, which measures the worst-case ratio (over all ordinal profiles and consistent cardinal utility functions) between the optimal social welfare and that induced by a given ordinal voting rule; see~\citep{anshelevich2021distortion} for a survey and, \eg,~\citep{charikar2022metric,goel2025metric,charikar2024breaking,kizilkaya2022plurality,kizilkaya2023generalized} for some of the work since then. Distortion (and its egalitarian variants~\citep{goel2017metric,gkatzelis2020resolving,ebadian2024optimized}) captures the worst-case welfare loss, whereas the manipulation potential captures the worst-case strategic advantage.

\paragraph{Manipulation Frequency} Separately, a substantial body of work studies the \emph{frequency} with which manipulations can occur. To test different rules under this degree of manipulability (also called the Nitzan-Kelly index due to \citep{nitzan1985vulnerability,kelly1993almost}), one can study the fraction of elections in which the rule can be manipulated by using empirical data~\citep{durand2015towards,durand2023coalitional}, fixed distributions of preferences~\citep{aleskerov2019bounds,durand2025instant}, analytical approaches~\citep{friedgut2008elections,favardin2002borda}, or exhaustive enumeration over small instances~\citep{aleskerov1999degree,aleskerov2012manipulability}. \citet{mossel2015quantitative} prove a quantitative version of the Gibbard-Satterthwaite theorem lower bounding the probability of manipulations under the impartial culture distribution (all preferences are equally likely). Along with the frequency of manipulations, some of this work measures the expected values of metrics such as the minimum coalition size necessary for manipulation~\citep{pritchard2007exact,chamberlin1985investigation} or the (ordinal) improvement of the outcome in the preferences of the manipulator~\citep{smith1999manipulability,aleskerov1999degree} (see the discussion on gains above). 
Measuring frequencies, however, inevitably requires an assumption over how the voters' preferences are distributed. In many cases, the assumed distribution is far from realistic, such as impartial culture or its variations. Manipulation potential, on the other hand, is \emph{prior-free}, a property that lies at the heart of the resource augmentation approach (and of the distortion literature discussed above). As such, our results bound the advantage of a manipulator in \emph{any} profile, which can be critical for information-scarce settings. We discuss how manipulation potential can be combined with existing metrics in \Cref{sec:future}(f).

\section{Preliminaries}\label{sec:prelim}

\paragraph{Election Instances} Let $\voters \coloneq \{1,2,\ldots, \nvote \}$ be a finite set of \emph{voters} and $\cands \coloneq \{\cand{1}, \cand{2}, \ldots, \cand{\ncand}\}$ be a finite set of \emph{candidates} for some $\nvote \geq 2$ and $\ncand \geq 3$. We assume that each voter $i \in \voters$ has a strict \emph{preference order} $\vote{i}$, which is a transitive, antisymmetric, and complete relation over candidates $\cands$. We denote the set of all possible preference orders by $\linprefs$. For any voter $i \in \voters$ and any two candidates $\cand{},\cand{}' \in \cands$, we denote $\cand{} \vote{i} \cand{}'$ to indicate $ (\cand{},\cand{}') \in \vote{i}$ (\ie, voter $i$ ranks $\cand{}$ strictly above $\cand{}'$) and $\cand{}' \voteq{i} \cand{}$ otherwise (\ie, either $\cand{}' \vote{i} \cand{}$ or $\cand{}'=\cand{}$). A \emph{preference profile} $\profile \coloneq (\vote{i})_{i \in \voters} \in \linprefs^\nvote$ is a vector containing the preference orders of all voters. Given a profile  $\profile$, a voter $i \in \voters$, and an alternative preference order $\vote{i}' \in \linprefs$, we use $(\succ'_i, \succ_{-i})$ to denote the preference profile where voter $i$'s preference order is replaced by $\vote{i}'$ and all other voters' preferences remain unchanged, \ie, 
\[ (\vote{i}', \vote{-i}) \coloneq (\vote{1}, \ldots, \vote{i-1}, \vote{i}', \vote{i+1}, \ldots, \vote{\nvote}). \]

We sometimes refer to $\manip{i}$ as a \emph{manipulation} by voter $i$, as opposed to her \emph{truthful vote} $\vote{i}$. For any profile $\profile$, voter $i \in \voters$, and integer $k \geq 0$, we use $\profile \vplus k(\vote{i})$ to denote $\profile$ with $k$ additional copies of voter $i$'s preference order $\vote{i}$, \ie,
\[ \kprofile{k}{i} \coloneq (\succ_1, \ldots, \succ_\nvote, \underbrace{\succ_i,\ldots,\succ_i}_{k\text{ times}}) 
\]
so that $\kprofile{k}{i}$ is a profile with $\nvote + k$ voters rather than $\nvote$.

\paragraph{Manipulability} A \emph{social choice function (SCF)} is a mapping $\scf: \linprefs^* \rightarrow \cands$ from each preference profile to a unique\footnote{In \Cref{sec:manipot}, we explain how we incorporate set-valued functions (that might return ties) into our framework.} \emph{winner} among the candidates $\cands$.\footnote{In this paper, we only consider SCFs that are defined on all profiles with \emph{any} number of voters.} Given a profile $\profile$ and an SCF $\scf$, a manipulation $\manip{i}$ is \emph{profitable} for voter $i$ if reporting it results in a winner she strictly prefers, \ie, 
\begin{align*}
    \scf(\vote{i}', \vote{-i}) \succ_i \scf(\profile).
\end{align*}
An SCF is \emph{strategy-proof} if it admits no profitable manipulations, and it is \emph{manipulable} otherwise.

\section{Augmentation Strategyproofness and the Manipulation Potential} \label{sec:manipot}

In this section, we will introduce the central metric of our paper, the \emph{manipulation potential} of social choice functions. To do so, we first extend the strategyproofness definition given in \Cref{sec:prelim}.

\begin{definition}[$k$-Augmentation Strategyproofness]\label{def:kmanip}
   Let $k$ be a non-negative integer. A social choice function $\scf$ is \emph{$k$-augmentation strategyproof ($\ksp{k})$} if for any profile $\profile \in \linprefs^\nvote$, voter $i \in \voters$, and alternative preference order $\manip{i} \in \linprefs$, we have  $\scf( \kprofile{k}{i}) \voteq{i} \scf\maniprofile{i}$.
\end{definition}
In words, an SCF is $\ksp{k}$ if a voter is never strictly better off misreporting her vote rather than having $k$ additional copies of her truthful self join the original profile. This indeed generalizes standard strategyproofness, which is equivalent to 0-ASP. We now define the manipulation potential of a rule. 

\begin{definition}[Manipulation Potential]\label{def:manipot}
    For any SCF $\scf$, we define its \emph{manipulation potential} as
    \[\spdeg{\scf}\coloneq \min \{k \in \mathbb{Z}_{\geq 0}: \scf\text{ is }\ksp{k'}\text{ for all }k'\geq k\}\]
and $\spdeg{\scf}\coloneq \infty$ if no such $k$ exists.
\end{definition}

In words, $\spdeg{\scf}$ is the minimum number of additional copies a voter would need in order to ensure that she prefers the outcome with her copies (even if more copies join) to that of any manipulation she could do in the original profile. From a resource augmentation perspective, an SCF with low manipulation potential limits the relative impact of a manipulation, because a voter can achieve the same impact by telling the truth with a few additional resources (\ie, votes).

\paragraph{Non-Monotonicities} 
\Cref{def:manipot} does not rule out a $\ksp{k}$ social choice function $\scf$ from still having $\spdeg{\scf} > k$, since $\scf$ may not be $\ksp{k'}$ for some $k' > k$. While this may seem unnatural, adding copies of a voter's truthful vote may initially \emph{hurt} her in certain profiles under many natural SCFs.\footnote{This is closely related to the \emph{no-show paradox}, where a voter would be better off if they had not voted; \cf.\ \citep{moulin1988condorcet}.} Some profiles may exhibit even more layers of non-monotonicities, as the next example shows.

\begin{example}\label{ex:nonmono}
    Consider a profile $\profile$ with $\nvote=27$ voters and $\ncand=5$ candidates, consisting of:
    \medskip
    \begin{center}
                \begin{tabular}{r|l}
            \cellcola Voter 1 & \cellcola $\cand{1} \vote{1} \cand{2} \vote{1} \cand{3} \vote{1} \cand{4} \vote{1} \cand{5}$\\
            \cellcolb 3 voters & \cellcolb $\cand{1} \vote{i}  \cand{3}  \vote{i}  \cand{4} \vote{i} \cand{2} \vote{i} \cand{5}$\\
            \cellcola 3 voters & \cellcola $ \cand{3}  \vote{i} \cand{5} \vote{i} \cand{4} \vote{i} \cand{2} \vote{i} \cand{1}$\\
            \cellcolb 5 voters & \cellcolb $\cand{4} \vote{i}  \cand{3}  \vote{i}  \cand{2} \vote{i} \cand{5} \vote{i} \cand{1}$\\
            \cellcola 6 voters & \cellcola $\cand{2} \vote{i}  \cand{3}  \vote{i}  \cand{4} \vote{i} \cand{5} \vote{i} \cand{1}$\\
            \cellcolb 9 voters & \cellcolb $\cand{5} \vote{i}   \cand{3}  \vote{i} \cand{4} \vote{i} \cand{2} \vote{i} \cand{1}$
        \end{tabular}
    \end{center}
    \medskip
    If we run Instant Runoff ($\irun$, as defined in \Cref{ex:intro}) on $\profile$, it will eliminate $\cand{3}$ first, followed by $\cand{1}$, $\cand{2}$, and $\cand{5}$, making $\cand{4}$ the winner. If voter 1 instead reported the preference order ${\color{purple} \cand{3}} \manip{1} \cand{1} \manip{1} \cand{2} \manip{1} \cand{4} \manip{1} \cand{5}$, she can ensure $\irun\maniprofile{1}=\cand{3}$, which is a better outcome for her. As seen from \Cref{fig:nonmono}, on the other hand, adding gradually more copies of voter 1 to the original profile will initially result in a better outcome than the manipulation, followed by the worst possible outcome for her, before eventually securing a better outcome than manipulating even if more copies join.
\end{example}

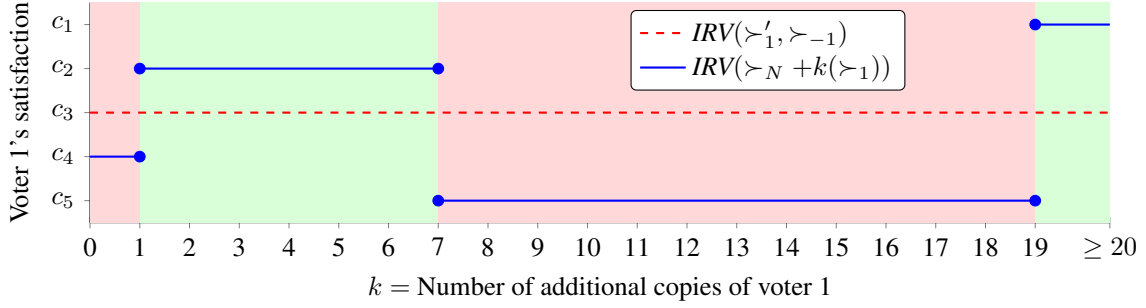
\begin{figure}
    \begin{center}
        \begin{tikzpicture}
\begin{axis}[
    width=15cm,
    height=4.5cm,
    xmin=0, xmax=20.5,
    ymin=0.5, ymax=5.5,
    xlabel={$k=$ Number of additional copies of voter 1},
    ylabel={Voter 1's satisfaction},
    xtick={0,1,2,3,4,5,6,7,8,9,10,11,12,13,14,15,16,17,18,19,20.5},
    xticklabels={
        0,1,2,3,4,5,6,7,8,9,
        10,11,12,13,14,15,16,17,18,19,$\geq 20$
    },
    ytick={1,2,3,4,5},
    yticklabels={$\,\cand{5}$,$\,\cand{4}$,$\,\cand{3}$,$\,\cand{2}$,$\,\cand{1}$},
    axis lines=left,
    axis line style={-},
    clip=false,
    legend style={
        at={(0.53,0.6)},
        anchor=south west,
        draw=black,
        fill=white,
        rounded corners=2pt,
        legend cell align=left
    }
]

\addplot[
    draw=none,
    fill=red!15,
    forget plot
] coordinates {(0,0.5) (1,0.5) (1,5.5) (0,5.5)};

\addplot[
    draw=none,
    fill=green!15,
    forget plot
] coordinates {(1,0.5) (7,0.5) (7,5.5) (1,5.5)};

\addplot[
    draw=none,
    fill=red!15,
    forget plot
] coordinates {(7,0.5) (19,0.5) (19,5.5) (7,5.5)};

\addplot[
    draw=none,
    fill=green!15,
    forget plot
] coordinates {(19,0.5) (20.5,0.5) (20.5,5.5) (19,5.5)};

\addplot[dashed, thick, red]
coordinates {(0,3) (20.5,3)};
\addlegendentry{$\irun\maniprofile{1}$}

\addplot[thick, blue]
coordinates {(0,2) (1,2)};
\addplot[thick, blue]
coordinates {(1,4) (7,4)};
\addplot[thick, blue]
coordinates {(7,1) (19,1)};
\addplot[thick, blue]
coordinates {(19,5) (20.5,5)};

\addplot[
    only marks,
    mark=*,
    mark size=2pt,
    blue
] coordinates {
    (1,2) (1,4)
    (7,4) (7,1)
    (19,1) (19,5)
};
\addlegendentry{$\irun(\kprofile{k}{1})$}

\end{axis}
\end{tikzpicture}
        \caption{For the profile from \Cref{ex:nonmono}, the $\irun$ winners of $\kprofile{k}{1}$ and $\maniprofile{1}$ as a function of $k$. Green regions indicate the values of $k$ for which voter 1 prefers the outcome of the former to that of the latter. Even though voter 1 prefers the outcome of $\irun(\kprofile{k}{1})$ to that of $\irun\maniprofile{1}$ for $1 < k <7$, this is not the case for $7<k<19$. As a result, the manipulation potential of $\irun$ is at least 19 for $\nvote=27$ and $\ncand=5$ (see \Cref{def:manipot}).}\label{fig:nonmono}
    \end{center}
\end{figure}
An alternative definition for manipulation potential would have been the minimum $k$ for which a rule is $\ksp{k}$, regardless of its behavior for any $k'>k$. However, this type of resource augmentation would suffer from miscoordination issues similar to strategic voting, namely that overshooting can result in undesirable outcomes~\citep{slinko2014safe}. As our goal is to benchmark manipulation against the (resource augmented version of) a simple strategy, \Cref{def:manipot} leaves no room for such miscoordination. In any case, as we will see in \Cref{sec:rules}, whenever we prove an SCF $\scf$ has $\spdeg{\scf}\geq k$ for some $k$, we will also show that $\scf$ is not $\ksp{k'}$ for \emph{any} $k'<k$, so the result under either definition would have been the same.\footnote{The one exception to this is our impossibility result for biranking-majority-consistent rules presented in \Cref{thm:biranking}.} This shows that while non-monotonicities can arise in specific profiles like \Cref{ex:nonmono}, they generally do not affect the manipulation potential in the worst case.

\paragraph{Tie(breaker)s} 
Many of the rules we consider in later sections are in fact \emph{social choice correspondences (SCCs)} of the form $\scc: \linprefs^* \rightarrow \powerset(\cands)$, where $\powerset(\cands)$ is the power set of $\cands$. In words, an SCC may return multiple winning candidates at once; in this case, we say the winners are \emph{tied}.
To extend \Cref{def:kmanip,def:manipot} to a social choice correspondence $\scc$, we equip it with a \emph{linear tie-breaker} $\tieorder \in \linprefs$ and define $\scc_\tiebreaker$ as the social choice function that maps a profile $\profile$ to the top choice of $\tieorder$ in $\scc(\profile)$, \ie,
\[
\scc_\tiebreaker(\profile) \coloneq \max_{\tieorder} \scc(\profile).\footnote{When breaking ties in this way, $\ksp{k}$ for SCCs is a generalization of \emph{strong Fishburn-strategyproofness} (with equivalence at $k=0$). For an excellent overview of different tiebreakers for SCCs and the corresponding manipulability notions, see \citep{brandt2022strategyproof}.}
\]

We then say an SCC $\scc$ is $\ksp{k}$ if $\scc_\tiebreaker$ is $\ksp{k}$ \emph{for all} $\tieorder \in \linprefs$, so that we have
\[\spdeg{\scc} \coloneq \max_{\tieorder \in \linprefs} \spdeg{\scc_\tiebreaker}.\]

In fact, the manipulation potential of the SCCs we consider will be independent of the specific tiebreaking order $\tieorder$, since they all satisfy \emph{neutrality}, \ie, their outputs are robust to relabeling the candidates. Formally, given a candidate permutation $\cperm: \cands \rightarrow \cands$ and profile $\profile$, define $\cperm(\profile)$ as the profile $\profile'$ where for any voter $i \in \voters$ and candidate pair $\cand{}, \cand{}' \in \cands$, we have $\cand{} \vote{i} \cand{}'$ if and only if $\cperm(\cand{}) \vote{i}' \cperm(\cand{}')$. Then, an SCC $\scc$ is \emph{neutral} if for any candidate permutation $\cperm$, profile $\profile$, and candidate $\cand{} \in \cands$, we have $\cand{} \in \scc(\profile)$ if and only if $\cperm(\cand{}) \in \scc(\cperm(\profile))$. It is worth noting that even if the set-valued SCC $\scc$ is neutral, the single-valued SCF $\scc_\tiebreaker$ (for a specific tiebreaker $\tieorder$) will not be, since $\tieorder$ will prioritize some candidates over others.
For simplicity, we sometimes slightly abuse terminology and refer to both SCFs and SCCs as ``rules'', in which case whether they are single- or multi-valued will be clear from context.

Additionally, all of the SCCs we consider satisfy \emph{anonymity}, \ie, their outputs are robust to relabeling the voters. Formally, given a voter permutation $\vperm: \voters \rightarrow \voters$ and a profile $\profile$, define $\vperm(\profile)$ as the profile $\profile'$ where for any voter $i \in \voters$ and candidate pair $\cand{}, \cand{}' \in \cands$, we have $\cand{} \vote{i} \cand{}'$ if and only if $\cand{} \vote{\vperm(i)}' \cand{}'$. Then, an SCC $\scc$ is \emph{anonymous} if for any voter permutation $\vperm$ and profile $\profile$, we have $\scc(\profile)= \scc(\vperm(\profile))$. Anonymous SCFs are defined analogously. With anonymous rules, only the number of times each vote is cast matters, and not the identities of the voters that cast them. This property will be helpful when dealing with varying number of total voters, \eg, when going from $\profile$ to $\kprofile{k}{i}$ for some voter $i \in \voters$.

\section{Computing the Manipulation Potential of Voting Rules}\label{sec:rules}

In this section, we report our results on the manipulation potential of known social choice correspondences, which are summarized in \Cref{tab:rules-bounds}.
For each SCC $\scc$, the proofs consist of establishing upper and lower bounds for $\spdeg{\scc}$. For clarity, we outline below the general structure of our proofs. 

\paragraph{Proof Structure} Leveraging anonymity and neutrality of the social choice correspondences we consider (\Cref{sec:manipot}), we will focus, without loss of generality, on manipulations by voter $1 \in \voters$ and assume that her true preferences are $\cand{1} \vote{1} \cand{2} \vote{1} \ldots \vote{1} \cand{\ncand}$.\footnote{It is also possible to instead fix the tiebreaker to be $\cand{1} \tieorder \cand{2} \tieorder \ldots \tieorder \cand{\ncand}$, which amounts to lexicographic tiebreaking. Since the manipulator's ranking will come up much more often in our proofs, we fix that rather than the tiebreaker for convenience. } To show that a given SCC $\scc$ is not $\ksp{k}$ (\Cref{def:kmanip}) it will suffice to provide a specific profile $\profile$, manipulation $\manip{1}$, and tiebreaker $\tieorder$ such that $\scc_\tiebreaker\maniprofile{1} \vote{1} \scc_\tiebreaker(\kprofile{k}{1})$, which will also imply that the manipulation potential of $\scc$ is strictly more than $k$, \ie, $\spdeg{\scc} > k$. For constructing these profiles, it will sometimes be helpful to assume a (constant) lower bound or divisibility requirement on $\nvote$ or $\ncand$; these will be clearly stated in the theorem statements. On the other hand, for proving upper bounds on the manipulation potential---\ie, $\spdeg{\scc} \leq k$---we will have to prove that \emph{for any} $\profile, \manip{1}, \tieorder$, and $k'\geq k$, we have  $\scc_\tiebreaker(\kprofile{k'}{1}) \voteq{1} \scc_\tiebreaker\maniprofile{1}$.

For example, an SCC $\scc$ is \emph{majority-consistent} if whenever a candidate $\cand{} \in \cands$ is the top choice of a strict majority (\ie, $>\frac{\nvote}{2}$) of voters in $\profile$, we have $\scc(\profile)=\{\cand{}\}$. For any majority-consistent rule, adding at least $\nvote-1$ copies of a voter will guarantee that her most-preferred candidate wins.
Since no manipulation can lead to a better result than this, the upper bound criterion given above is satisfied.

\begin{claim}\label{remark:majority-consistent}
    The manipulation potential of any majority-consistent rule is at most $\nvote-1$.
\end{claim}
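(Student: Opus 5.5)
The plan is to verify the upper-bound criterion from the proof structure directly: for every profile $\profile$, manipulation $\manip{1}$, tiebreaker $\tieorder$, and every $k' \geq \nvote-1$, show that $\scc_\tiebreaker(\kprofile{k'}{1}) \voteq{1} \scc_\tiebreaker\maniprofile{1}$. Since $\cand{1}$ is voter 1's top choice, it suffices to show $\scc_\tiebreaker(\kprofile{k'}{1}) = \cand{1}$. Then, whatever $\scc_\tiebreaker\maniprofile{1}$ is, voter 1 weakly prefers $\cand{1}$ to it.

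The key counting step is as follows. The augmented profile $\kprofile{k'}{1}$ has $\nvote + k'$ voters, and at least $1 + k'$ of them rank $\cand{1}$ first: the original voter 1 together with the $k'$ copies, plus possibly other voters. We need $1+k' > \frac{\nvote+k'}{2}$, which is equivalent to $k' > \nvote - 2$, that is, $k' \geq \nvote-1$. This holds for every $k' \geq \nvote-1$, so $\cand{1}$ is the top choice of a strict majority. Majority consistency then gives $\scc(\kprofile{k'}{1}) = \{\cand{1}\}$. Since the winning set is a singleton, every tiebreaker $\tieorder$ selects $\cand{1}$. This establishes $\ksp{k'}$ for all $k' \geq \nvote-1$ and all tiebreakers, so $\spdeg{\scc} \leq \nvote-1$ by \Cref{def:manipot} and the max-over-tiebreakers extension to SCCs.

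Two minor points need care. First, the definition of majority consistency must apply to profiles of any size; this holds because SCFs and SCCs are defined on $\linprefs^*$. Second, the reduction to voter 1 with ranking $\cand{1} \vote{1} \cdots \vote{1} \cand{\ncand}$ is not actually needed: the same argument works verbatim for any voter $i$ with top choice $t$, because $\kprofile{k'}{i}$ has $t$ as a strict-majority top choice. So neither anonymity nor neutrality is required. There is no real obstacle. The only step to get right is the strict inequality in the majority count, which is exactly what pins the threshold at $\nvote-1$ rather than $\nvote-2$.
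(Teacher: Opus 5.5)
Your proof is correct and matches the paper's argument. With any $k' \geq n-1$ copies, voter 1's top choice is ranked first by $1+k' > \frac{n+k'}{2}$ of the $n+k'$ voters, so majority consistency makes it the unique winner under every tiebreaker, and no manipulation can do better; you are also right that anonymity and neutrality are not needed here.
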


Many, but not all, rules we consider are majority-consistent (for a non-example, consider Borda Count). As we show next, for some majority-consistent rules, we can achieve a tighter upper bound than that of \Cref{remark:majority-consistent}.

\subsection{Plurality (with and without Runoff), Instant Runoff, and Borda Count}\label{sec:warmup}

We start our analysis of manipulation potential with four example rules.

\subsubsection{Plurality} 

Under Plurality  ($\plurality$), the winner is the candidate who is ranked top by the greatest number of voters (ties broken according to the tiebreaker $\tieorder$). While plurality is majority-consistent, as we show next, its manipulation potential is roughly half of the upper bound in \Cref{remark:majority-consistent}.
\begin{theorem}\label{thm:plur} The manipulation potential of plurality is $\spdeg{\plurality}=\lceil \frac{\nvote-1}{2} \rceil$. Further, $\plurality$ is not $\ksp{k}$ for any $0 \leq k<\lceil \frac{\nvote-1}{2} \rceil$.
\end{theorem}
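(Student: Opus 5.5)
The plan is to prove the upper bound $\spdeg{\plurality}\le\lceil\frac{\nvote-1}{2}\rceil$ by a direct score-counting argument, and the lower bound by one explicit three-candidate profile whose parameters depend on the parity of $\nvote-1$. As in the proof structure of \Cref{sec:rules}, fix voter $1$ with truthful ranking $\cand{1}\vote{1}\cand{2}\vote{1}\cdots\vote{1}\cand{\ncand}$. Fix a tiebreaker $\tieorder$. For each candidate $\cand{}$, let $s(\cand{})$ be the number of voters among $\{2,\ldots,\nvote\}$ ranking $\cand{}$ first, so that $\sum_{\cand{}} s(\cand{})=\nvote-1$.

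\textbf{Upper bound.} Fix any $k'\ge\lceil\frac{\nvote-1}{2}\rceil$, any manipulation $\manip{1}$, and let $w$ be the winner of $\kprofile{k'}{1}$.
\begin{itemize}
\item If $w=\cand{1}$, we are done, since $\cand{1}$ is voter $1$'s top choice.
\item Otherwise $w$ beats or ties $\cand{1}$ in the augmented profile, so $s(w)\ge s(\cand{1})+1+k'\ge k'+1\ge\frac{\nvote+1}{2}$.
\end{itemize}
In the second case, consider the manipulated profile $\maniprofile{1}$. There $w$ has score at least $s(w)$. Any other candidate $x$ has score at most $s(x)+1\le(\nvote-1-s(w))+1=\nvote-s(w)\le\frac{\nvote-1}{2}<s(w)$. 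Hence $w$ is the strict plurality winner of $\maniprofile{1}$ regardless of $\tieorder$, so $\plurality_\tiebreaker\maniprofile{1}=w$ and voter $1$ is indifferent. Since this holds for every $k'\ge\lceil\frac{\nvote-1}{2}\rceil$, \Cref{def:manipot} gives the upper bound.

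The key point is that "the augmented profile is won by someone other than $\cand{1}$" forces that candidate to hold a strict majority of the other voters. That majority makes the manipulation irrelevant. This inequality is where the threshold $\lceil\frac{\nvote-1}{2}\rceil$ comes from.

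\textbf{Lower bound.} For every $0\le k<\lceil\frac{\nvote-1}{2}\rceil$, I will exhibit a profile with $\ncand\ge3$ candidates. Voters $2,\ldots,\nvote$ are split so that $a=\lceil\frac{\nvote-1}{2}\rceil$ of them rank $\cand{3}$ first and $b=\lfloor\frac{\nvote-1}{2}\rfloor$ rank $\cand{2}$ first; their lower positions are arbitrary. The manipulation is to move $\cand{2}$ to the top, giving $\cand{2}$ score $b+1$ and $\cand{3}$ score $a$. We want the augmented profile $\kprofile{k}{1}$, where $\cand{1}$ has score $k+1\le a$, to be won by $\cand{3}$, which voter $1$ ranks below $\cand{2}$. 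The tiebreaker depends on parity.
\begin{itemize}
\item If $\nvote-1$ is odd, then $a=b+1$. Use $\cand{2}\tieorder\cand{3}\tieorder\cand{1}\tieorder\cdots$. The manipulation creates a tie between $\cand{2}$ and $\cand{3}$ that is broken to $\cand{2}$. In the augmented profile $\cand{3}$ wins, possibly through a tie with $\cand{1}$ broken to $\cand{3}$.
\item If $\nvote-1$ is even, then $a=b$. Use $\cand{3}\tieorder\cand{2}\tieorder\cand{1}\tieorder\cdots$. The manipulation makes $\cand{2}$ the strict winner with $b+1>a$. In the augmented profile, ties among $\cand{3}$, $\cand{2}$ and $\cand{1}$ all go to $\cand{3}$.
\end{itemize}
In both cases $\cand{2}\vote{1}\cand{3}$, so $\plurality$ is not $\ksp{k}$. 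This also covers the degenerate case $\nvote=2$, where $a=1$ and $b=0$.

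The main obstacle I expect is getting the ties exactly right in the lower bound, so that it reaches $k=\lceil\frac{\nvote-1}{2}\rceil-1$ and not one less. This is why the construction splits on the parity of $\nvote-1$ and chooses the tiebreaker differently in the two cases. The upper-bound inequality itself is routine once the case $w\ne\cand{1}$ is isolated.
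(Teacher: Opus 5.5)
Your proof is correct. The lower bound is the paper's construction: the same $\lceil\frac{n-1}{2}\rceil$ versus $\lfloor\frac{n-1}{2}\rfloor$ split between $c_3$ and $c_2$, the same parity-dependent tiebreakers, and the same manipulation to $c_2$.

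The upper bound is where you genuinely differ. You reach the same counting inequality from the opposite direction.

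The paper argues forward from the manipulation. It first dismisses two cases: $c_1$ is already a co-winner of the original profile, or the manipulation leaves the winner unchanged (using that extra copies can only move the winner toward $c_1$). It then shows that a manipulation moving the winner from $c$ to $c'$ must place $c'$ first and requires $n_{c'} \geq n_c - 1$. Hence $n_c \leq n/2$, and the $k+1 > n/2$ first places of $c_1$ win the augmented profile.

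You argue backward from the augmented outcome. If $c_1$ does not win with $k'$ copies, the augmented winner $w$ already holds at least $k'+1 > n/2$ first places among voters $2,\dots,n$. No single changed ballot can overturn that, so $w$ is the strict winner of every manipulated profile.

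Your version avoids the three-way case split, the monotonicity remark, and the need to identify what the manipulator ranks first. It also yields a slightly stronger dichotomy: with at least $\lceil\frac{n-1}{2}\rceil$ copies, either $c_1$ wins, or every manipulation produces exactly the augmented winner. The paper's forward argument instead makes explicit what a profitable manipulation requires: two near-tied leaders, neither of them $c_1$. That is exactly the situation its lower-bound profile instantiates.
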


Recall from \Cref{def:manipot} that to prove $\spdeg{\plurality} \geq \lceil \frac{\nvote-1}{2} \rceil$, it would have been sufficient to show $\plurality$ is not $\ksp{k}$ for $k=\lceil \frac{\nvote-1}{2} \rceil-1$. Hence, the second part of the theorem statement is indeed a stronger claim  than $\spdeg{\plurality} \geq \lceil \frac{\nvote-1}{2} \rceil$ (\cf.\ note on non-monotonicities in \Cref{sec:manipot}).

\begin{proof}
    We first prove that plurality is \emph{not} $\ksp{k}$ for any $0 \leq k <\lceil \frac{\nvote-1}{2} \rceil$. To do so, we fix any such $k$ and provide an instance (\ie, a specific profile, tiebreaker, and manipulation) where voter~1 strictly prefers the outcome of the manipulation to that of adding $k$ copies of her truthful vote. 
    Say the profile $\profile$ is:
    \begin{center}
    \begin{tabular}{c|l}
            \cellcola Voter $1$ & \cellcola  $\cand{1} \vote{1} \cand{2} \vote{1} \ldots \vote{1}\cand{\ncand}$\\
            \cellcolb $\lfloor \frac{\nvote-1}{2} \rfloor$ voters & \cellcolb $\cand{2} \vote{i} \ldots$ \\
            \cellcola  $\lceil \frac{\nvote-1}{2} \rceil$ voters & \cellcola  $\cand{3} \vote{i} \ldots $
    \end{tabular}
    \end{center}
     If $\nvote$ is odd, say the tiebreaker ranks $\cand{3} \tieorder \cand{2} \tieorder \cand{1}$; otherwise, say $\cand{2} \tieorder \cand{3} \tieorder \cand{1}$. If we add $k$  truthful copies of voter 1 (\ie, $\kprofile{k}{1}$), the number of voters that have $\cand{1}, \cand{2}, \cand{3}$ as their top candidate will be $k+1, \lfloor \frac{\nvote-1}{2} \rfloor$, and $\lceil \frac{\nvote-1}{2} \rceil$, respectively. We cannot have $\plurality_\tiebreaker(\kprofile{k}{1}) = \cand{1}$ since $k+1 \leq \lceil \frac{\nvote-1}{2} \rceil$ and $\cand{3} \tieorder \cand{1}$, so $\plurality_\tiebreaker$ will pick $\cand{3}$ over $\cand{1}$.  Similarly, $\cand{2}$ cannot win: if $\nvote$ is even, it has strictly fewer top-choice votes than $\cand{3}$; if $\nvote$ is odd, they are tied and the tiebreaker $\tieorder$ breaks the tie in favor of $\cand{3}$. We therefore have $\plurality_\tiebreaker(\kprofile{k}{1})= \cand{3}$. Now, go back to the original profile $\profile$ (with no additional copies of voter 1), and say voter 1 reports an alternative preference order $\manip{1}$ that ranks $\cand{2}$ top. If $\nvote$ is odd, $\cand{2}$ now has $\frac{\nvote+1}{2}$ voters ranking it top while $\cand{3}$ has $\frac{\nvote-1}{2}$. If $\nvote$ is even, they both have $\frac{\nvote}{2}$ voters ranking them top, and $\tieorder$ breaks the tie in favor of $\cand{2}$. Hence, $\plurality_{\tiebreaker}\maniprofile{1} = \cand{2} \vote{1} \cand{3} = \plurality_{\tiebreaker}(\kprofile{k}{1})$, implying plurality is not $\ksp{k}$ for any $0 \leq k < \lceil \frac{\nvote-1}{2} \rceil$. We thus have $\spdeg{\plurality} \geq \lceil \frac{\nvote-1}{2} \rceil$.

    For the upper bound, we will prove that plurality is $\ksp{k}$ for any $k \geq \lceil \frac{\nvote-1}{2} \rceil$. To do so, fix some profile $\profile$, tiebreaker $\tieorder$, manipulation $\manip{1}$, and integer $k \geq \lceil \frac{\nvote-1}{2} \rceil$. We want to show that $\plurality_\tiebreaker(\kprofile{k}{1}) \voteq{1} \plurality_\tiebreaker\maniprofile{1}$. First, we rule out two simple cases. If $\cand{1} \in \plurality(\profile)$ (\ie, $\cand{1}$ is \emph{one of the} plurality winners of $\profile$, pre-tiebreaking), then adding $k$ copies of $\vote{1}$ will make it the sole plurality winner (since $k \geq 1$), implying $\plurality_\tiebreaker(\kprofile{k}{1})  = \cand{1} \voteq{1} \plurality_\tiebreaker\maniprofile{1}$ since $\cand{1}$ is the top choice in $\vote{1}$. Similarly, if $\plurality_\tiebreaker(\profile)=\plurality_\tiebreaker\maniprofile{1}$ (\ie, voter 1's manipulation does not change the winner), we will indeed have $\plurality_\tiebreaker(\kprofile{k}{1}) \voteq{1} \plurality_\tiebreaker\maniprofile{1}$, since adding more copies of $\vote{1}$ to $\profile$ can either not change anything or make $\cand{1}$ the winner after getting sufficiently many votes. The only remaining case is when $\cand{1} \notin \plurality(\profile)$ and $\plurality_\tiebreaker(\profile) = \cand{} \neq \cand{}' = \plurality_\tiebreaker\maniprofile{1}$ for some $\cand{},\cand{}' \in \cands$. The only way voter 1 could have caused this change is by submitting a vote $\manip{1}$ that ranks $\cand{}'$ top instead of $\cand{1}$ (We cannot have $\cand{}'= \cand{1}$, since the number of voters ranking  $\cand{1}$ top can only decrease when going from $\profile$ to $\maniprofile{1}$, whereas this number can only increase for any other candidate, implying $\cand{1} \notin \plurality\maniprofile{1}$ since $\cand{1} \notin\ \plurality(\profile)$ by assumption). If $\nvote_{\cand{}}$ and $\nvote_{\cand{}'}$ are the number of voters that rank $\cand{}$ and $\cand{}'$ top in $\profile$, respectively, then we must have $\nvote_{\cand{}'} \geq \nvote_{\cand{}} -1$, since otherwise $\cand{}' \notin \plurality\maniprofile{1}$ (as the top-choice votes of $\cand{}'$ could have only gone up by one), which contradicts $\cand{}' = \plurality_\tiebreaker\maniprofile{1}$. Since voter 1 ranks $\cand{1} \notin \{\cand{},\cand{}'\}$ top in $\profile$, this implies we have $\nvote \geq \nvote_{\cand{}} + \nvote_{\cand{}'} +1 \geq 2  \nvote_{\cand{}} \Rightarrow  \nvote_{\cand{}} \leq \frac{\nvote}{2}$. Since $\cand{1}$ has $k+1 \geq \lceil \frac{\nvote-1}{2} \rceil+1 > \frac{\nvote}{2}$ voters that rank it top in $\kprofile{k}{1}$, this implies $\plurality_\tiebreaker(\kprofile{k}{1})  = \cand{1} \voteq{1} \plurality_\tiebreaker\maniprofile{1}$, completing our proof.
\end{proof}

That plurality beats the upper bound in \Cref{remark:majority-consistent} brings about a natural question: can we find a tighter bound for the manipulation potential of \emph{all} majority-consistent rules? To answer this question, we next consider adding a single or multiple \emph{runoff rounds} to Plurality. 

\subsubsection{Plurality with Runoff \& Instant Runoff}
Just like with Plurality, under Plurality with Runoff  ($\plurWrunoff$), each voter gives one point to her top-ranked candidate in the first round, but now the \emph{two} candidates with the highest scores advance to the runoff (\ie, the second round).
In the runoff, each voter gives one point to her preferred candidate among the two, and the candidate with the higher score wins. With Instant Runoff ($\irun$),\footnote{Also known as Single Transferable Vote (STV).} on the other hand, the winner is determined by iteratively eliminating the candidate with the least top-choice votes from the profile, as explained in \Cref{ex:intro}.
This continues until only one candidate remains, who is declared the winner.

For both rules, we define their set-valued output (prior to tiebreaking) using \emph{parallel-universes tiebreaking} \citep{conitzer2009preference}, which dictates that a candidate $c$ is one of the winners (\ie, $\cand{} \in \plurWrunoff(\profile)$ or $\cand{} \in \irun(\profile)$) if there exists \emph{some} way to break ties (when deciding who advances to the next round of the runoff) such that $\cand{}$ wins. As with other SCCs, tiebreaker $\tieorder$ then resolves the tie in favor of its highest-ranked candidate in $\plurWrunoff(\profile)$ and $\irun(\profile)$ to determine $\plurWrunoff_\tiebreaker(\profile)$ and $\irun_\tiebreaker(\profile)$, respectively.

Both $\plurWrunoff$ and $\irun$ are majority-consistent. Further, as we show next, their manipulation potential is exactly $\nvote-1$, showing that the upper bound in \Cref{remark:majority-consistent} is the tightest we can get for this class.

\begin{theorem}\label{thm:runoff}
    The manipulation potentials of Plurality with Runoff and Instant Runoff are $\spdeg{\plurWrunoff}=\spdeg{\irun}=\nvote-1$ (lower bound assumes $\ncand \geq 4$, $\nvote \geq 5$ and that $\nvote-1$ is divisible by 4). Further, neither rule is $\ksp{k}$ for any $0 \leq k< \nvote-1$.
\end{theorem}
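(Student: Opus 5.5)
The upper bound $\spdeg{\plurWrunoff},\spdeg{\irun}\leq \nvote-1$ follows from \Cref{remark:majority-consistent}, because both rules are majority-consistent: a candidate ranked top by a strict majority survives every elimination round of $\irun$, and under $\plurWrunoff$ it advances to the runoff and wins it. The real work is the stronger lower-bound statement, namely that neither rule is $\ksp{k}$ for any $0\leq k<\nvote-1$. For this I will reuse the profile of \Cref{ex:intro}, padded with candidates $\cand{5},\ldots,\cand{\ncand}$ ranked last by everyone, since they receive no top-choice votes. Write $q\coloneq\frac{\nvote-1}{4}$. The profile has the following groups:
\begin{itemize}
\item voter 1 with $\cand{1}\vote{}\cand{2}\vote{}\cand{3}\vote{}\cand{4}$;
\item voter 2 with $\cand{4}\vote{}\cand{1}\vote{}\cand{2}\vote{}\cand{3}$;
\item $q$ voters with $\cand{2}\vote{}\cand{3}\vote{}\cand{4}\vote{}\cand{1}$;
\item $q$ voters with $\cand{3}\vote{}\cand{4}\vote{}\cand{1}\vote{}\cand{2}$;
\item $2q-1$ voters with $\cand{4}\vote{}\cand{3}\vote{}\cand{2}\vote{}\cand{1}$.
\end{itemize}
I take the tiebreaker to rank $\cand{4}$ first. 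The manipulation is $\cand{3}\manip{1}\cand{1}\manip{1}\cand{2}\manip{1}\cand{4}$.

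First I check the manipulation. In $\maniprofile{1}$ the top-choice counts are $0,\ q,\ q+1,\ 2q$ for $\cand{1},\ldots,\cand{4}$. Since $\nvote\geq5$ gives $q\geq1$, all ties are strict. Under $\irun$, $\cand{1}$ is eliminated first and then $\cand{2}$. The $\cand{2}$ voters transfer to $\cand{3}$, which then has $2q+1>2q$ votes and is the unique winner. Under $\plurWrunoff$, the finalists are $\cand{4}$ and $\cand{3}$ (strictly, since $q+1>q$). Then $\cand{3}$ is preferred by voter 1, the $\cand{2}$-group and the $\cand{3}$-group, giving $2q+1$ against $2q$. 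So both rules output $\cand{3}$, which voter 1 strictly prefers to $\cand{4}$.

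Next I show that in $\kprofile{k}{1}$ with $0\leq k\leq \nvote-2$, the candidate $\cand{4}$ belongs to the parallel-universes winner set. Because the tiebreaker ranks $\cand{4}$ first, this gives $\cand{4}$ as the output, and it suffices to exhibit one tie-resolution path ending at $\cand{4}$. The initial counts are $k+1,\ q,\ q,\ 2q$.

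For $\irun$ I split into two cases.
\begin{itemize}
\item If $k+1\geq q$, eliminate $\cand{3}$ first (tied with $\cand{2}$). Its voters move to $\cand{4}$, giving $3q$. Then eliminate $\cand{2}$ (with $q\leq k+1$). Its voters skip the eliminated $\cand{3}$ and go to $\cand{4}$, giving $4q=\nvote-1\geq k+1$. So $\cand{4}$ wins, possibly by a tie at $k=\nvote-2$.
\item If $k+1<q$, then $\cand{1}$ is eliminated first and its copies go to $\cand{2}$, which then has $q+k+1<2q$. Next $\cand{3}$ is eliminated and its voters go to $\cand{4}$, giving $3q$. Finally $\cand{4}$ beats $\cand{2}$.
\end{itemize}

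For $\plurWrunoff$, $\cand{4}$ always advances. If $k+1\geq q$, choose the universe where $\cand{1}$ is the other finalist. In the runoff only the $k+1$ copies prefer $\cand{1}$ to $\cand{4}$, against $4q=\nvote-1$, so $\cand{4}$ wins, with a tie only at $k=\nvote-2$. If $k+1<q$, choose the universe where $\cand{2}$ advances. Then $\cand{2}$ gets $k+1+q<2q$ against $\cand{4}$'s $3q$, so $\cand{4}$ wins.

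The main obstacle is exactly this case analysis across all $k<\nvote-1$, in particular the boundary ties. These are $k+1=q$, where $\cand{1}$, $\cand{2}$ and $\cand{3}$ tie, and $k=\nvote-2$, where $\cand{1}$ and $\cand{4}$ tie in the final. Both are handled uniformly by the parallel-universes semantics together with the choice of a tiebreaker ranking $\cand{4}$ first, so I will be careful to always pick the favorable universe rather than a forced one.
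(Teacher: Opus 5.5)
Your proposal is correct and follows the paper's proof: the upper bound comes from \Cref{remark:majority-consistent}, and the lower bound uses the same profile (you only fix the lower rankings of the $\cand{4}$-voters as in \Cref{ex:intro}), the same tiebreaker and manipulation, and the same two-case analysis choosing favorable parallel-universe eliminations, with the boundary $k+1=q$ simply placed in the other case. One harmless imprecision: not all ties in $\maniprofile{1}$ are strict ($\cand{1}$ ties with the padding candidates at zero, and $\cand{3},\cand{4}$ tie when $q=1$), but none of these ties changes the outcome.
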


\begin{proof}
    The upper bound for both rules follows from \Cref{remark:majority-consistent}. It remains to prove that neither $\plurWrunoff$ nor $\irun$ is \ksp{$k$} for any $0 \leq k < \nvote-1$. Fix such a $k$, and consider the following profile $\profile$:
    \begin{center}
    \begin{tabular}{c|l}
            \cellcola Voter $1$ & \cellcola  $\cand{1} \vote{1} \cand{2} \vote{1} \ldots \vote{1}\cand{\ncand}$\\
            
            \cellcolb $\frac{\nvote-1}{4}$ voters & \cellcolb $\cand{2} \vote{i} \cand{3} \vote{i} \cand{4} \vote{i} \ldots$ \\
            
            \cellcola  $\frac{\nvote-1}{4}$ voters & \cellcola  $\cand{3} \vote{i} \cand{4} \vote{i} \cand{2} \vote{i} \ldots $\\
            
            \cellcolb $\frac{\nvote-1}{2}$ voters & \cellcolb  $\cand{4} \vote{i} \ldots $
    \end{tabular}
    \end{center}
    Say the tiebreaker $\tieorder$ ranks $\cand{4}$ top. We first show that $\plurWrunoff_\tiebreaker(\kprofile{k}{1})=\irun_\tiebreaker(\kprofile{k}{1})=\cand{4}$. To do so, it is sufficient to prove that $\cand{4}$ wins under \emph{some} way of breaking ties in each rule---\ie, we have $\cand{4} \in \plurWrunoff(\kprofile{k}{1}) \cap \irun(\kprofile{k}{1})$---as $\tieorder$ will break the tie in favor of $\cand{4}$. Consider two cases.
    
    \textbf{Case 1:} $0 \leq k \leq \frac{\nvote-5}{4}$: Here, $\cand{4}$ starts with strictly more voters ranking it top than any other candidate, $\cand{2}$ and $\cand{3}$ tie for second place, and $\cand{1}$ has weakly less points than any other candidate. Under Plurality with Runoff ($\plurWrunoff$), if $\cand{2}$ is advanced to the runoff with $\cand{4}$, then $\cand{4}$ wins. With Instant Runoff ($\irun$), if $\cand{1}$ is eliminated first, it will be followed by $\cand{3}$ and then $\cand{2}$, once again making $\cand{4}$ win. Hence, $\cand{4}$ is among the winners for both rules, and therefore is the final winner picked by $\tieorder$.
    
    \textbf{Case 2:} $\frac{\nvote-5}{4} < k < \nvote-1 $: Here, $\cand{2}$ and $\cand{3}$ tie for the fewest top-choice votes. With $\plurWrunoff$, $\cand{1}$ is advanced to the runoff with $\cand{4}$, but then loses to $\cand{4}$ since it has $k+1 \leq \nvote-1$ points in the second round. With $\irun$, if $\cand{3}$ is eliminated first, it will be followed by $\cand{2}$ and then $\cand{1}$, once again making $\cand{4}$ the winner. Hence, $\cand{4}$ is among the winners for both rules and thus the final winner picked by $\tieorder$. 
    
    Next, go back to the original profile (with no copies) and consider the case where voter $1$ reports an alternative preference order $\manip{1}$ ranking
    $\cand{3}$ as her top candidate. For Plurality with Runoff ($\plurWrunoff$), $\cand{3}$ and $\cand{4}$ are advanced to the runoff, and in the second round $\cand{3}$ wins. For Instant Runoff $(\irun)$, $\cand{1}$ is eliminated first, followed by $\cand{2}$ and then $\cand{4}$, once again making $\cand{3}$ the winner. Thus, $\plurWrunoff_\tiebreaker\maniprofile{1}=\irun_\tiebreaker\maniprofile{1}=\cand{3} \vote{1} \cand{4} = \plurWrunoff_\tiebreaker(\kprofile{k}{1})=\irun_\tiebreaker(\kprofile{k}{1})$. This shows that neither rule is $\ksp{k}$ for any $0 \leq k < \nvote-1$, completing the proof.
\end{proof}

All three of the rules we have seen so far ($\plurality, \plurWrunoff, \irun$) have manipulation potentials growing linearly with the number of voters $\nvote$, even if the coefficient of growth for $\plurality$ is half as big. Can there be a rule whose manipulation potential \emph{does not} grow with $\nvote$? We next answer this question in the affirmative.

\subsubsection{Borda Count} As introduced in \Cref{ex:intro}, under Borda Count ($\borda$), each voter contributes $\ncand-j$ points to their $j^\text{th}$-ranked candidate for $j \in \{1,2,\ldots, \ncand\}$ and the candidate with the highest total ``Borda score'' wins. As we show next, unlike the previous rules, the manipulation potential of $\borda$ grows linearly with the number of candidates and is independent of the number of voters.

\begin{theorem}\label{thm:borda}
    The manipulation potential of Borda Count is $\spdeg{\borda}=\ncand-2$ (lower bound assumes $\nvote$ is even). Further, $\borda$ is not $\ksp{k}$ for any $0 \leq k < \ncand-2$.
\end{theorem}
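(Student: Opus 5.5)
The plan is to prove the upper bound by a direct score-accounting argument, and the lower bound by an explicit profile with a tiebreaker, following the proof structure of \Cref{sec:rules}.

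\textbf{Upper bound.} Fix a profile $\profile$, a tiebreaker $\tieorder$, a manipulation $\manip{1}$ and an integer $k \geq \ncand-2$. Let $w' = \borda_\tiebreaker\maniprofile{1}$ and $a = \borda_\tiebreaker(\kprofile{k}{1})$, and suppose for contradiction that $w' \vote{1} a$. Write $s$, $s'$ and $s_k$ for the Borda scores in $\profile$, $\maniprofile{1}$ and $\kprofile{k}{1}$.
\begin{itemize}
\item Let $d \geq 1$ be the difference between the points voter 1's truthful ballot gives to $w'$ and to $a$. This holds because $w' \vote{1} a$.
\item Replacing the truthful ballot by $\manip{1}$ raises $w'$'s score relative to $a$'s by at most $(\ncand-1) - d$. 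Hence $s(w') - s(a) \geq s'(w') - s'(a) - (\ncand-1) + d$.
\item Adding $k$ truthful copies then adds $kd$ to this difference. This gives
\[ s_k(w') - s_k(a) \;\geq\; s'(w') - s'(a) + (k+1)d - (\ncand-1) \;\geq\; s'(w') - s'(a) + k - (\ncand-2) \;\geq\; s'(w') - s'(a). \]
\end{itemize}
Since $w'$ wins $\maniprofile{1}$, we have $s'(w') \geq s'(a)$, and if equality holds then $w' \tieorder a$. In either case $w'$ either strictly outscores $a$ in $\kprofile{k}{1}$, or ties with $a$ and is preferred by the same $\tieorder$. So $a$ cannot be the winner, a contradiction. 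This covers every $k' \geq \ncand-2$ at once.

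\textbf{Lower bound.} Fix $k < \ncand-2$. The accounting above shows what the extremal instance must look like. Voter 1 manipulates by moving $\cand{2}$ to the top and $\cand{3}$ to the bottom, so $d = 1$ for the pair $(\cand{2}, \cand{3})$ and the manipulation gains $\ncand-1$ points. I would build the other $\nvote-1$ voters so that their combined scores satisfy three conditions:
\begin{itemize}
\item $\cand{3}$ leads $\cand{2}$ by exactly $\ncand-1$ points.
\item All other candidates, and especially $\cand{1}$, trail far behind.
\item The tiebreaker ranks $\cand{2} \tieorder \cand{3}$.
\end{itemize}
Then in $\maniprofile{1}$, $\cand{2}$ and $\cand{3}$ tie and $\cand{2}$ wins. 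In $\kprofile{k}{1}$, $\cand{3}$ still leads by $(\ncand-1) - (k+1) \geq 1$, so $\cand{3}$ wins, which voter 1 likes less than $\cand{2}$.

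To keep the score gaps controlled I would use mutually reversing pairs of ballots, which add a uniform score to every candidate. Since $\nvote$ is even, $\nvote-1$ is odd, so there is one leftover voter. That voter, together with pairs that are reversals only on the block of remaining candidates while keeping $\cand{2}, \cand{3}$ on top and $\cand{1}$ at the bottom, should produce both the $\ncand-1$ gap and the deficit for the other candidates.

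\textbf{Main obstacle.} I expect the hardest step to be ensuring that $\cand{1}$ (and $\cand{4},\dots,\cand{\ncand}$) does not win $\kprofile{k}{1}$. The $k+1$ truthful ballots give $\cand{1}$ up to $(\ncand-2)(\ncand-1)$ points over $\cand{3}$, so $\cand{1}$ must trail $\cand{3}$ by that much among the other voters. It is not clear this is achievable for every even $\nvote$, in particular for small $\nvote$ relative to $\ncand$. My first attempt would be to place $\cand{1}$ last in all other ballots and check that the resulting deficit of roughly $(\nvote-1)(\ncand-2)$ suffices. If it does not, I would look for a different target pair than $(\cand{2},\cand{3})$, for instance $(\cand{j},\cand{j+1})$ lower in voter 1's ranking, so that the truthful copies help the bad winner less relative to $\cand{1}$.
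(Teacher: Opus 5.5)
\textbf{Upper bound.} Your upper bound is correct, and it is more direct than the paper's. You compare the manipulated winner $w'$ with the augmented winner $a$ head-on, via $s_k(w')-s_k(a)\ge s'(w')-s'(a)+(k+1)d-(m-1)$. The paper instead works through $c^*$, voter 1's favorite among the original Borda winners. It then runs a two-case analysis showing that any $c'\in B'$ and $c''\in B''$ satisfy either $c''\succeq_1 c'$ or $\{c',c''\}\subseteq B'\cap B''$. Your tie handling is also fine: if $s'(w')=s'(a)$, then $a\in B'$, so the tiebreaker prefers $w'$.

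\textbf{Lower bound: the gap.} You never give the profile, and the target pair you chose, $(c_2,c_3)$, cannot work for every even $n$. Take $n=2$, which the theorem covers.
\begin{itemize}
\item To create a lead of $m-1$ for $c_3$ over $c_2$, the single other voter must rank $c_3$ first and $c_2$ last. So $c_1$ gets at least $1$ point from that voter.
\item Each truthful ballot gives $c_1$ exactly $2$ points more than $c_3$. Your figure $(m-2)(m-1)$ is $c_1$'s total from the copies, not its advantage over $c_3$.
\item Hence for $k=m-3$ we get $s_k(c_1)-s_k(c_3)\ge 1-(m-1)+2(m-2)=m-2>0$. Voter 1's favorite $c_1$ then wins with copies, and there is no violation.
\end{itemize}
For even $n\ge 4$ your idea can be completed. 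A single pair $c_3\succ c_2\succ\cdots\succ c_1$, $c_2\succ c_3\succ\cdots\succ c_1$ already gives $c_3$ a lead of $2m-3$ over $c_1$. That exceeds the $2(k+1)\le 2(m-2)$ the copies can recover.

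Your fallback of moving the pair down to $(c_j,c_{j+1})$ goes the wrong way. Each truthful copy then gives $c_1$ a lead of $j$ points over the bad winner. In addition, $c_2,\dots,c_{j-1}$ become extra spoilers.

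\textbf{The missing idea.} Move the pair up to $(c_1,c_2)$. Voter 1 keeps $c_1$ on top and buries $c_2$. This shifts the $c_1$-versus-$c_2$ margin by $m-2$, exactly as your manipulation shifts the $c_2$-versus-$c_3$ margin. But now no candidate that voter 1 prefers can spoil the outcome. The paper's profile is:
\begin{itemize}
\item voter 2 with $c_2\succ c_m\succ\cdots\succ c_3\succ c_1$;
\item $(n-2)/2$ voters with $c_2\succ c_1\succ c_3\succ\cdots\succ c_m$;
\item $(n-2)/2$ voters with $c_1\succ c_2\succ c_m\succ\cdots\succ c_3$;
\item tiebreaker $\succ_1$.
\end{itemize}
With $k$ copies, $c_2$ leads $c_1$ by $(m-1)-(k+1)>0$ and beats every $c_\ell$. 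Under the manipulation $c_1\succ'_1 c_3\succ'_1\cdots\succ'_1 c_m\succ'_1 c_2$, $c_1$ and $c_2$ tie at the top, and the tiebreaker picks $c_1$. This construction works for $n=2$ as well.
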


\begin{proof}
     We first prove that Borda Count is not $\ksp{k}$ for any $0 \leq k < \ncand-2$, which also implies $\spdeg{\borda} \geq \ncand-2$. Fix any such $k$ and separate the candidates into two groups: $\{\bcand{1}, \bcand{2}\}$ and $\{\acand{3}, \acand{4}, \aldots, \acand{\ncand}\}$ (colored differently for presentation purposes). Consider a profile $\profile$, consisting of:
\begin{center}
    \begin{tabular}{c|l}
    \cellcola Voter 1 & \cellcola $\bcand{1} \vote{1} \bcand{2} \vote{1} \acand{3} \vote{1} \acand{4} \vote{1} \aldots \vote{1}\acand{\ncand}$
    \\ \cellcolb Voter 2 & \cellcolb $\bcand{2} \vote{2} \acand{\ncand} \vote{2} \acand{\ncand-1} \vote{2} \aldots\vote{2} \acand{3} \vote{2} \bcand{1}$
    \\\cellcola $\frac{\nvote-2}{2}$ voters (Group 1) & \cellcola $\bcand{2} \vote{i} \bcand{1} \vote{i} \acand{3} \vote{i} \acand{4} \vote{i} \aldots \vote{i} \acand{\ncand}$
    \\ \cellcolb $\frac{\nvote-2}{2}$ voters (Group 2) & \cellcolb $\bcand{1} \vote{i} \bcand{2} \vote{i} \acand{\ncand} \vote{i} \acand{\ncand-1} \vote{i} \aldots \vote{i} \acand{3}$
    \end{tabular} 
\end{center}
Assume that the tiebreaker is identical to voter 1's preference order, \ie, $\tieorder =\vote{1}$. For each candidate $ \cand{} \in \cands$, let $\bs_k(\cand{})$ denote its total Borda score from all voters in $\kprofile{k}{1}$. Then, we have 
\begin{align*}
 \bs_k(\bcand{1}) &= \underbrace{(k+1)(\ncand-1)}_{\text{Voter 1 and her copies}} + \underbrace{0}_{\text{Voter 2}}  +\underbrace{\frac{\nvote-2}{2} \cdot (\ncand-2)}_{\text{Group 1}} + \underbrace{\frac{\nvote-2}{2} \cdot (\ncand-1)}_{\text{Group 2}},\\
    \bs_k(\bcand{2}) &= \underbrace{(k+1)(\ncand-2)}_{\text{Voter 1 and her copies}} + \underbrace{(\ncand-1)}_{\text{Voter 2}} +\underbrace{\frac{\nvote-2}{2} \cdot (\ncand-1)}_{\text{Group 1}} + \underbrace{\frac{\nvote-2}{2} \cdot (\ncand-2)}_{\text{Group 2}}\text{, and}\\
    \forall \ell \in \{3,\ldots,\ncand\}:&\quad \bs_k(\acand{\ell})  = \underbrace{(k+1)(\ncand-\ell)}_{\text{Voter 1 and her copies}} + \underbrace{(\ell-2)}_{\text{Voter 2}} +  \underbrace{\frac{\nvote-2}{2} \cdot (\ncand-\ell)}_{\text{Group 1}} + \underbrace{\frac{\nvote-2}{2} \cdot (\ell-3)}_{\text{Group 2}}.
\end{align*}

As $\bs_k(\bcand{2}) - \bs_k(\bcand{1}) =(\ncand-1) -(k+1) >0$ and $\bs_k(\bcand{2}) - \bs_k (\acand{\ell})= k(\ell-2)+ \frac{\nvote\ncand}{2}-1>0$ for any $\ell \in \{3,\ldots, \ncand\}$, we get that $\borda_\tiebreaker(\kprofile{k}{1})=\bcand{2}$. Now consider an alternative preference order $\manip{1}$ defined as $\bcand{1} \manip{1} \acand{3} \manip{1} \acand{4} \manip{1} \aldots \manip{1} \acand{\ncand} \manip{1} \bcand{2}$ (\ie, voter 1 moves $\bcand{2}$ to the bottom of her ranking). For each $ \cand{} \in \cands$, let $\bs'(\cand{})$ denote its total Borda score in $\maniprofile{1}$. Then $\bs'(\bcand{1})- \bs'(\bcand{2})=0$ and $\bs'(\bcand{1})- \bs'(\acand{\ell})  = \frac{(\nvote-2)\ncand}{2} \geq 0$ for any $\ell \in \{3,\ldots, \ncand\}$, so $\bcand{1} \in \borda\maniprofile{1}$. Since $\tieorder$ ranks $\bcand{1}$ top, we have $ \borda_\tiebreaker\maniprofile{1}=\bcand{1} \vote{1} \bcand{2}=\borda_\tiebreaker(\kprofile{k}{1})$, proving $\borda$ is not $\ksp{k}$ for any $0 \leq k < \ncand-2$.

For the upper bound, fix some $k \geq \ncand-2$, profile $\profile$, manipulation $\manip{1}$ and tiebreaker $\tieorder$. We will show that $\borda_\tiebreaker(\kprofile{k}{1}) \voteq{1} \borda_\tiebreaker\maniprofile{1}$, proving $\borda$ is $\ksp{k}$. Say $B, B', B''$ are (all) the Borda winners in $\profile$, $\maniprofile{1}$, and $\kprofile{k}{1}$, respectively, prior to tiebreaking. 
For each candidate $\cand{} \in \cands$, let $\bs(\cand{}), \bs'(\cand{}), \bs''(\cand{})$ be its respective Borda scores in these three profiles, and let $\bs_1(\cand{}), \bs'_1(\cand{}), \bs''_1(\cand{})$ denote voter~1’s contributions to these scores.
As all other votes are unchanged, $\bs'(\cand{})= \bs(\cand{})-\bs_1(\cand{})+\bs'_1(\cand{})$ and $\bs''(\cand{})=\bs(\cand{})+k\bs_1(\cand{})$. 
We will show that for any $\cand{}' \in B'$ and $\cand{}'' \in B''$, either $\cand{}'' \voteq{1} \cand{}'$, or $\{\cand{}', \cand{}''\} \subseteq B' \cap B''$.
This will complete the proof, as taking $\cand{}' = \max_{\tieorder} B'$ and $\cand{}'' = \max_{\tieorder} B''$ implies $\borda_\tiebreaker(\kprofile{k}{1}) = \max_{\tieorder} B'' \voteq1 \max_{\tieorder} B' =  \borda_\tiebreaker\maniprofile{1}$
(either directly or by the consistency of the tiebreaker~$\tieorder$).
To prove our claim on \emph{any} $\cand{}' \in B'$ and $\cand{}'' \in B''$, let $c^* \coloneq \max_{\vote{1}} B$ (\ie, voter 1's top candidate in $B$), and consider the following two cases:

    \textbf{Case 1:} $c^* \voteq{1} \cand{}'$. Since $\bs''(\cand{}'') \geq \bs''(\cand{}^*)$ and  $\bs(\cand{}^*) \geq \bs(\cand{}'')$, we have $k\bs_1(\cand{}'')=\bs''(\cand{}'')-\bs(\cand{}'') \geq \bs''(\cand{}^*) -\bs(\cand{}^*)=k\bs_1(\cand{}^*)$, implying $\bs_1(\cand{}'') \geq \bs_1(\cand{}^*)$ (since $k \geq \ncand-2 \geq 1$) and therefore $\cand{}'' \voteq{1} c^*$. Since $c^* \voteq{1} \cand{}'$, this implies $\cand{}'' \voteq{1} \cand{}'$, and we are done.     

    \textbf{Case 2:} $\cand{}' \vote{1} c^*$ and thus $\bs_1(\cand{}')-\bs_1(\cand{}^*) \geq 1$. Since $\bs'_1(\cand{}') - \bs'_1(\cand{}^*) \leq \ncand-1$, this implies 
    \begin{align}
    \ncand-2 \geq (\bs'_1(\cand{}') - \bs_1(\cand{}')) - (\bs'_1(\cand{}^*) - \bs_1(\cand{}^*))&= (\bs'(\cand{}') - \bs(\cand{}')) - (\bs'(\cand{}^*) - \bs(\cand{}^*)) \label{eq:firstineq} \\&\geq \bs(\cand{}^*) - \bs(\cand{}'),\label{eq:firstineq2}\end{align}   where the last step follows from $\bs'(\cand{}') \geq \bs'(\cand{}^*)$, since $\cand{}' \in B'$. Assume $\cand{}' \vote{1} \cand{}''$ (otherwise we are done); we will show $\{\cand{}',\cand{}''\} \subseteq B' \cap B''$. Since $\bs''(\cand{}'') \geq \bs''(\cand{}')$ and $\bs_1(\cand{}') \geq \bs_1(\cand{}'')+1$, we have 
    \begin{align}
        \bs(\cand{}'')+k\bs_1(\cand{}'') =\bs''(\cand{}'') \geq \bs''(\cand{}')= \bs(\cand{}')+k\bs_1(\cand{}')\label{eq:ineq1} & \geq \bs(\cand{}^*) -(\ncand-2)+k(1+\bs_1(\cand{}''))\\& \geq \bs(\cand{}^*) +k \bs_1(\cand{}''),\label{eq:ineq2}
    \end{align}
    implying $\bs(\cand{}'') \geq \bs(\cand{}^*)$ and thus $\cand{}'' \in B$. In fact, we must have $\cand{}''=c^*$, since any $c \in B\setminus \{c^*\}$ will have $\bs''(\cand{}) = \bs(\cand{}) +k \bs_1(\cand{}) = \bs(\cand{}^*) +k \bs_1(\cand{}) < \bs(\cand{}^*) +k \bs_1(\cand{}^*)= \bs''(\cand{}^*)$, so $\cand{}$ cannot be in $B''$. This implies \eqref{eq:ineq1}-\eqref{eq:ineq2} is in fact an equality, and therefore $\bs''(\cand{}'')=\bs''(\cand{}')$ and $\bs(\cand{}')=\bs(\cand{}^*) -(\ncand-2)$. The latter implies \eqref{eq:firstineq}-\eqref{eq:firstineq2} is also an equality, and therefore $\bs'(\cand{}')=\bs'(\cand{}^*)$. Since $\bs''(\cand{}')=\bs''(\cand{}'')$ and $\bs'(\cand{}'')=\bs'(\cand{}^*)=\bs'(\cand{}')$, and since $\cand{}' \in B'$ and $\cand{}'' \in B''$, this implies $\{\cand{}', \cand{}''\} \subseteq B' \cap B''$, as desired.

\end{proof}

\begin{figure}
    \begin{center}
        \begin{tikzpicture}[line cap=round, line join=round]

\def\Gap{7.2} 


\newcommand{\AsymLever}[9]{
  \begin{scope}[shift={(#1,0)}]

    \def\PivotX{0}
    \def\PivotY{0}
    \def\PivotW{0.60}
    \def\PivotH{0.9}

    \coordinate (P) at (\PivotX,\PivotY);
    \coordinate (L) at (\PivotX-#2,\PivotY);
    \coordinate (R) at (\PivotX+#3,\PivotY);

    \draw[gray!70, line width=0.9pt] (L) -- (R);

    \node[above=2pt] at (L) {$\manip{1}$};
    \node[above=2pt] at (R) {{\color{#6}#9}$(\vote{1})$};

    \draw[<->, line width=0.7pt, color=#6]
      ($(P)+(-0.2,-0.1)$) -- ($(L)+(0,-0.1)$);
    \draw[<->, line width=0.7pt, color=#6]
      ($(P)+(0.2,-0.1)$) -- ($(R)+(0,-0.1)$);
    
    \node[below=1.6pt] at ($(L)!0.5!(P) - (0.1,0)$) {\color{#6} #4};
    \node[below=1.6pt] at ($(R)!0.5!(P) + (0.2,0)$) {\color{#6} 1};
    
    \draw[#6, line width=0.9pt, fill={#6!15}]
      (\PivotX-\PivotW,\PivotY-\PivotH) --
      (\PivotX+\PivotW,\PivotY-\PivotH) --
      (P) -- cycle;

    \node at (\PivotX,{\PivotY-#7}) {\scriptsize {#5}};
    \def\MaybeNode{#8}
    \ifx\relax#8\relax
    \else
      \node at (\PivotX,{\PivotY-#7-0.2}) {\scriptsize {#8}};
    \fi
        
  \end{scope}
}

\AsymLever{0}{1.8}{1}{$\approx \nvote/2$}{$\plurality$}{blue}{0.55}{}{$\frac{\nvote}{2}$}
\AsymLever{6cm}{3.6}{1}{$\approx \nvote$}{$\irun$}{blue}{0.53}{/$\plurWrunoff$}{$\nvote$}
\AsymLever{10cm}{1.6}{1}{$\approx m$}{$\borda$}{orange}{0.55}{}{$\ncand$}
\end{tikzpicture}
        \caption{Visualization of our results from \Cref{sec:warmup}. For each rule, the longer the left arm of the scale, the more truthful copies a single manipulation can ``lift''. That is, the voter will need (in the worst case) a larger number of truthful copies to produce an outcome as desirable as that of the manipulation. The left arm of the middle scale is double that of the left scale, whereas comparing it to the rightmost scale depends on the values of
        ${\color{blue} \nvote}$ and ${\color{orange} \ncand}$.}\label{fig:scale}
    \end{center}
\end{figure}

Our analysis of the rules of this subsection $(\plurality, \plurWrunoff, \irun, \borda)$ already displays that the ``effective weight'' of a single manipulation (in terms of truthful copies needed to create the same impact) varies significantly between rules (\Cref{fig:scale}). In the next two subsections, we shift our focus to characterizations and impossibility results on the manipulation potentials of broad rule \emph{classes}. 

\subsection{Characterization for Positional Scoring Rules}\label{sec:pos}

Out of the rules studied so far, the lowest manipulation potential belongs to either Plurality (if $\left\lceil\frac{\nvote-1}{2}\right\rceil < \ncand-2$) or Borda Count (vice versa). As we will show, this in fact remains the case for much broader classes of rules. To formalize this claim, we first introduce a class that includes $\plurality$ and $\borda$. 

\begin{definition}[Positional Scoring Rules]\label{def:pos} A (monotonic) \emph{positional scoring rule} is an SCC $\pos$ associated to a vector  ${\score}\coloneq(\score_{j})_{j \in \{1,\ldots,\ncand\}}$ with $1=\score_1\geq \score_2 \geq \ldots \geq \score_\ncand=0$. On input profile $\profile$, the output of $\pos$ is computed as follows: 
each voter $i$ contributes $\score_{\ell}$ points to their ${\ell}^\text{th}$-ranked candidate in $\vote{i}$, and $\pos$ returns the candidate(s) with highest total score.
\end{definition}

Out of the rules we have studied so far, Plurality and Borda Count correspond to the positional scoring rules associated with vectors $\score=(1,0,\ldots,0)$ and $\score=(1, \frac{\ncand-2}{\ncand-1}, \frac{\ncand-3}{\ncand-1}, \ldots, \frac{1}{\ncand-1},0)$, respectively (Note that scaling all the scores by $\frac{1}{\ncand-1}$ for Borda Count does not change the output). Our first result indicates that unlike these two rules, the manipulation potential of some positional scoring rules can be unbounded!  

\begin{theorem}\label{thm:pos-infinity}
    For any positional scoring rule $\pos$ with $\score_1=\score_2$, we have $\spdeg{\pos} =\infty$. In particular, $\pos$ is not $\ksp{k}$ for any $k \geq 0$.
\end{theorem}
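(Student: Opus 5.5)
The plan is to exploit the fact that, when $\score_1=\score_2$, a truthful copy of voter~1 gives her top two candidates $\cand{1}$ and $\cand{2}$ exactly the same number of points. Adding copies therefore never changes the score difference between $\cand{1}$ and $\cand{2}$. So if $\cand{2}$ beats $\cand{1}$ in $\profile$, it keeps beating $\cand{1}$ in $\kprofile{k}{1}$ for every $k$. A manipulation that demotes $\cand{2}$ to the bottom, however, costs $\cand{2}$ a full $\score_2-\score_\ncand=1$ point relative to $\cand{1}$, and this can flip the outcome. I will fix an arbitrary $k\ge 0$ and give one profile and tiebreaker that work for all $k$ simultaneously. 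This also yields the claim that $\pos$ is not $\ksp{k}$ for any $k$, hence $\spdeg{\pos}=\infty$.

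The profile $\profile$ I would use is as follows. Voter~1 is truthful with $\cand{1}\vote{1}\cand{2}\vote{1}\cdots\vote{1}\cand{\ncand}$. Voter~2 has $\cand{2}\vote{2}\cand{3}\vote{2}\cdots\vote{2}\cand{\ncand}\vote{2}\cand{1}$. The remaining $\nvote-2$ voters each rank $\{\cand{1},\cand{2}\}$ in the top two positions and order $\cand{3},\ldots,\cand{\ncand}$ below them so that each $\cand{j}$ with $j\ge3$ is ranked last by at least one of these voters. This cyclic placement needs roughly $\nvote-2\ge\ncand-2$; if the statement is meant for small $\nvote$, I would instead route the zeros through voter~2 and the manipulated ballot, possibly with dummy structure. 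The tiebreaker is $\tieorder=\vote{1}$.

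The computations are then routine, using $\score_1=\score_2=1$ and $\score_j\le 1$.
\begin{itemize}
\item In $\kprofile{k}{1}$: $\cand{2}$ receives $(n-2)+1+1+k=n+k$ points and $\cand{1}$ receives $n-1+k$, so $\cand{2}$ strictly beats $\cand{1}$.
\item Each $\cand{j}$ with $j\ge3$ receives at most $(n-3)+1+1$ points from $\profile$ and at most $k\score_j\le k$ from the copies, for a total of at most $n-1+k<n+k$. Hence $\pos_\tiebreaker(\kprofile{k}{1})=\cand{2}$.
\item Under the manipulation $\cand{1}\manip{1}\cand{3}\manip{1}\cdots\manip{1}\cand{\ncand}\manip{1}\cand{2}$: $\cand{1}$ gets $n-1$, $\cand{2}$ gets $n-1$, and each $\cand{j}$ gets at most $(n-3)+1+1=n-1$. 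So $\cand{1}$ is among the winners, and $\tieorder$ selects it, giving $\cand{1}\vote{1}\cand{2}$.
\end{itemize}

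The main obstacle is controlling the candidates $\cand{3},\ldots,\cand{\ncand}$ for completely general score vectors. If, for example, $\score_3=\cdots=\score_{\ncand-1}=1$, these candidates gain from copies as fast as $\cand{2}$ does, and they are almost as strong as $\cand{1}$ after the manipulation. This is why each of them must receive a $0$ from at least one voter, so that they sit at least one point behind $\cand{2}$ in $\profile$ and no more than level with $\cand{1}$ after the manipulation, with the tiebreaker resolving ties in favour of $\cand{1}$. I would check these boundary inequalities carefully and, if needed, adjust the number of padding voters.
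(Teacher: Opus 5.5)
\textbf{There is a genuine gap.} Your construction is correct when $n\ge m$. Every ballot gives $c_2$ a full point, so with $k$ copies $c_2$ scores $n+k$ while $c_1$ scores $n+k-1$. After the manipulation, $c_1$, $c_2$ and (thanks to the auxiliary last places) every $c_j$ with $j\ge 3$ have at most $n-1$ points, so the tiebreaker $\succ_1$ selects $c_1$.

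However, the statement has no hypothesis relating $n$ and $m$. Your argument needs the $n-2$ auxiliary voters to place each of $c_3,\dots,c_m$ last, i.e.\ $n\ge m$. For $n=2$ there are no auxiliary voters at all. After the manipulation $c_3$ collects $s_2+s_2=2$ points against $c_1$'s $1$, so voter~1 ends up with $c_3$, which she ranks below $c_2$.

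Your proposed repair cannot work. The only positions guaranteed to score $0$ are last places, and those of voter~2 and of the manipulated ballot are already used on $c_1$ and $c_2$. For the veto rule $s=(1,\dots,1,0)$ there are no other zeros.

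In fact the obstacle is the tiebreaker $\succ_1$ itself, not the profile. Take the veto rule with $m>n$. Some candidate is always unvetoed, so the tied winners are exactly the unvetoed candidates. Copies of voter~1 only veto $c_m$ again, so they change nothing. A deviation by voter~1 can only remove a candidate from the unvetoed set or add $c_m$, her last choice. With the tiebreaker equal to $\succ_1$, she therefore never gains, for any profile and any $k$, and no choice of profile can rescue the argument.

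\textbf{What is missing.} Neutrality lets you fix either voter~1's ranking or the tiebreaker, but not both. The potential of an SCC is a maximum over all tiebreakers, so once you fix the ranking you should use the freedom in the tiebreaker: rank $c_2$ first and $c_1$ second. The paper then uses the unanimous profile in which every voter reports $c_1\succ c_2\succ\cdots\succ c_m$.
\begin{itemize}
\item With $k$ copies, $c_2$ attains the maximum possible score $n+k$ and wins the tiebreak.
\item After voter~1 moves $c_2$ to the bottom, $c_1$ attains the maximum possible score $n$ and $c_2$ falls to $n-1$. The tiebreaker then picks $c_1$ among the remaining tied candidates.
\end{itemize}
Because both winners are certified by achieving the maximum possible score, the candidates $c_3,\dots,c_m$ never need to be controlled. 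That is exactly the step your approach could not handle for general $n$, $m$ and $s$.
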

\begin{proof}
     Assume $\score_1=\score_2=1$. We will give a profile $\profile$, manipulation $\manip{1}$, and tiebreaker $\tieorder$ such that $\pos_\tiebreaker\maniprofile{1} \vote{1} \pos_\tiebreaker(\kprofile{k}{1})$ for any $k \geq 0$. Consider a profile $\profile$ where \emph{every} voter votes $\cand{1} \vote{} \cand{2} \vote{} \ldots \vote{}\cand{\ncand}$ and the tiebreaking order ranks $\cand{2} \tieorder \cand{1} \tieorder \cand{}$ for any $\cand{} \in \cands \setminus \{\cand{1}, \cand{2}\}$. For any $\cand{} \in \cands$, say $S_k(\cand{})$ is the total score assigned to $\cand{}$ by $\pos$ in $\kprofile{k}{1}$. Then we have $S_k(\cand{2})=(k + \nvote)\score_2=k+\nvote \geq S_k(\cand{})$ for any $\cand{} \in \cands \setminus \{\cand{2}\}$. Since $\cand{2} \tieorder \cand{}$ for any $\cand{} \in \cands\setminus\{\cand{2}\}$, this implies $\pos_\tiebreaker(\kprofile{k}{1})=\cand{2}$ for all $k \geq 0$. Going back to the original profile (with no copies), assume voter 1 instead reported an alternative preference order $\manip{1}$ such that $\cand{1} \manip{1} \cand{} \manip{1} \cand{2}$ for any $\cand{} \in \cands \setminus \{\cand{1},\cand{2}\}$. For any $\cand{} \in \cands{}$, say $\bs'(\cand{})$ is the total score assigned to $\cand{}$ by $\pos$ in $\maniprofile{1}$. We have $\bs'(\cand{2})=(\nvote-1)\score_2+\score_\ncand = \nvote-1$ and $\bs'(\cand{1})=\nvote \score_1 = \nvote \geq \bs'(\cand{})$ for any $\cand{} \in \cands$. Since $\bs'(\cand{1}) \geq \bs'(\cand{})$ and $\cand{1}\tieorder \cand{}$ for any $\cand{} \in \cands \setminus \{\cand{1},\cand{2}\}$ and since $\bs'(\cand{1})>\bs'(\cand{2})$, this implies $\pos_\tiebreaker\maniprofile{1}=\cand{1} \vote{1} \cand{2} = \pos_\tiebreaker(\kprofile{k}{1})$ for any $k \geq 0$, completing our proof.
\end{proof}

In words, if a positional scoring rule treats the top two positions equally, adding more copies of a voter cannot help her top choice gain an edge over her second top choice, while misreporting can.

How about when $\score_1>\score_2$? Ideally, we would like to generalize our approach to $\plurality$ (\Cref{thm:plur}) and $\borda$ (\Cref{thm:borda}). To do so, we introduce some additional notation. For any $\ell\in \{1,\ldots, \ncand\}$, we define
\[\upave{\ell} \coloneq \frac{1}{\ell}\sum_{\ell'=1}^\ell \score_{\ell'} \quad \text{and} \quad  \downave{\ell} \coloneq \frac{1}{\ell}\sum_{\ell'=\ncand-\ell+1}^\ncand \score_{\ell'}\]
as the average score assigned to the top and bottom $\ell$ positions, respectively. For any $\ell \in \{ 2,3,\ldots,\ncand-1\}$, we denote $\posbar{\ell} \coloneq \max\{\ell' \in \{1,\ldots,\ell-1\}: \score_{\ell'} > \score_{\ell+1}\}$,  which is well-defined since $\score_1 > \score_2 \geq \score_{\ell+1}$. For example, for Plurality, $\upave{\ell} =\frac{1}{\ell}$; $\downave{\ell} = \frac{1}{\ncand} $ if $\ell =\ncand$ and 0 otherwise; and $\posbar{\ell}=1$ for any $\ell$. With our notation in place, we now present a general upper bound for positional scoring rules.

\begin{restatable}{theorem}{posup}\label{thm:pos-upper}
    For any $\pos$ with $\score_1 \neq \score_2,$ we have $\displaystyle \spdeg{\pos}\leq \max_{\ell \in \{1,2,\ldots, \ncand-1\}} \upperpos{\ell}$, where \[ \upperpos{\ell}\coloneq \begin{cases} \frac{1}{\score_\ell-\score_{\ell+1}}-1 &\text{if }\ell=1\\
     \min\left(\frac{1}{\score_\ell-\score_{\ell+1}}-1,  \frac{(\nvote-1)(\upave{2}-\downave{\posbar{\ell}})+\frac{1}{2}}{\upave{\posbar{\ell}}-\score_{\ell+1}} \right) & \text{otherwise}
    \end{cases}.
   \]
\end{restatable}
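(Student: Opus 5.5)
Write $K:=\max_{\ell}\upperpos{\ell}$. The plan is to adapt the upper-bound half of the proof of \Cref{thm:borda}. Fix $k\ge K$, a profile $\profile$, a manipulation $\manip{1}$ and a tiebreaker $\tieorder$. As before, let $B,B',B''$ be the sets of (pre-tiebreak) winners of $\profile$, $\maniprofile{1}$ and $\kprofile{k}{1}$, with scores $S,S',S''$, and let $S_1,S'_1$ denote voter~1's contributions. These satisfy
\[
S'=S-S_1+S'_1, \qquad S''=S+kS_1 .
\]
The goal is the same as in the Borda proof: for every $\cand{}'\in B'$ and $\cand{}''\in B''$, show that either $\cand{}''\voteq{1}\cand{}'$ or both candidates lie in $B'\cap B''$; consistency of $\tieorder$ then finishes. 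Put $c^*:=\max_{\vote{1}}B$. Case~1 ($c^*\voteq{1}\cand{}'$) goes through verbatim: $S''(\cand{}'')\ge S''(c^*)$ and $S(c^*)\ge S(\cand{}'')$ give $S_1(\cand{}'')\ge S_1(c^*)$. Strictness of the scores is not available, so this needs care. Any candidate $c$ with $S_1(c)=S_1(c^*)$ and $c\prec_1 c^*$ ties $c^*$ in voter~1's scores. Such a $c$ is not in $B''$ unless $S(c)=S(c^*)$, in which case $c\in B$, contradicting the choice of $c^*$. Hence $\cand{}''\voteq{1}c^*$.

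Case~2 is where the $\ell$-dependent quantities enter. Here $\cand{}'\vote{1}c^*$; suppose also $\cand{}'\vote{1}\cand{}''$. Let $\ell$ be the position of $\cand{}''$ in voter~1's true ranking, chosen so that $\cand{}'$ sits at a position $\le\ell$ and $c^*$ at a position $\ge\ell+1$. Equivalently, I take $\ell$ to be the position of $c^*$ minus one, with $\cand{}'$ strictly above. I will derive two separate contradictions from $\cand{}''\in B''$, $\cand{}'\in B'$, $c^*\in B$, one for each term of the $\min$.

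\emph{First term.} I mimic \eqref{eq:firstineq}--\eqref{eq:ineq2}. From $\cand{}'\in B'$ we get $S(c^*)-S(\cand{}')\le (S'_1-S_1)(\cand{}')-(S'_1-S_1)(c^*)\le 1-(\score_{\ell}-\score_{\ell+1})$, using $S'_1$ differences $\le 1$ and $S_1(\cand{}')-S_1(c^*)\ge \score_\ell-\score_{\ell+1}$. Feeding this into $S''(\cand{}'')\ge S''(\cand{}')$ together with $S_1(\cand{}')-S_1(\cand{}'')\ge \score_\ell-\score_{\ell+1}$ (after choosing $\ell$ appropriately between them) gives
\[
S(\cand{}'')\ge S(c^*)-(1-\delta)+(k-?)\delta, \qquad \delta:=\score_\ell-\score_{\ell+1},
\]
so $k\ge \frac1\delta-1$ forces $\cand{}''\in B$ and then $\cand{}''=c^*$ with all inequalities tight. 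The tie analysis is the same as in \Cref{thm:borda}. For $\ell=1$ this is the only term.

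\emph{Second term ($\ell\ge2$).} This term should come from an averaging argument independent of $\delta$. The plan is to bound how far $c^*$ can exceed $\cand{}'$ in $\profile$ via the other $\nvote-1$ voters, who jointly give at most $(\nvote-1)(\upave{2}-\downave{\posbar{\ell}})$ more to one candidate than to another when averaged appropriately. We then compare the total score of the top $\posbar{\ell}$ candidates of voter~1 (each gaining at least $\score_{\posbar\ell}>\score_{\ell+1}$ per copy, on average $\upave{\posbar\ell}$) against $c^*$ at position $\ge \ell+1$. An averaging step shows that some candidate among voter~1's top $\posbar{\ell}$ beats every candidate at position $\ge\ell+1$ once
\[
k(\upave{\posbar{\ell}}-\score_{\ell+1})>(\nvote-1)(\upave{2}-\downave{\posbar{\ell}})+\tfrac12 .
\]
The $\tfrac12$ absorbs voter~1's own contribution and ties. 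Consequently the winner $\cand{}''$ lies in voter~1's top $\ell$. A finer version compares against $\cand{}'$, which makes $\cand{}''\voteq{1}\cand{}'$ or reduces to the tight case.

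\textbf{Main obstacle.} Making the second-term averaging precise is the hard step: correctly bounding $\max_{c}S(c)-\frac1{\posbar\ell}\sum_{\text{top }\posbar\ell}S$ by $(\nvote-1)(\upave2-\downave{\posbar\ell})$, and then linking ``winner in the top $\ell$'' to ``winner weakly above $\cand{}'$''. I would first try to pin down $\cand{}'$ against the manipulation using the bound $S'(c)\le S(c)+1$, and treat ties last.
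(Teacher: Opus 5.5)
Your proposal has genuine gaps, and the first is in the case you treat as routine. Case~1 does not carry over from Borda. When $\score$ has repeated entries (Plurality, or any vector with $\score_2=\score_3$), $S_1(c'')\ge S_1(c^*)$ does not imply $c''\succeq_1 c^*$. Your repair also fails: a candidate $c\in B$ with $c\prec_1 c^*$ does not contradict $c^*=\max_{\succ_1}B$, since every other member of $B$ is such a candidate. For a concrete instance, take $\nvote=3$ and $\score=(1,\frac12,\frac12,0)$. Let the two other voters rank $c_2\succ c_3\succ c_4\succ c_1$ and $c_3\succ c_2\succ c_4\succ c_1$, and let voter~1 manipulate to $c_2\succ c_1\succ c_4\succ c_3$. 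Then $B=\{c_2,c_3\}$ and $B'=\{c_2\}$, and for $k=1$ we get $B''=\{c_2,c_3\}$. So $c''=c_3\prec_1 c'=c^*=c_2$ and $c_3\notin B'$, even though your Case~1 argument used only $k\ge1$. Thus even the case $c^*\succeq_1 c'$ needs the bound on $k$; here it is the averaging argument that lets $c_1$ overtake.

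The paper avoids the set-valued dichotomy altogether. Let $w$ be the tie-broken truthful winner. Copies only help candidates voter~1 ranks above $w$, and any $c\prec_1 w$ that stays tied with $w$ already lost to $w$ under $\tieorder$. Hence $\pos_\tiebreaker(\kprofile{k}{1})\succeq_1 w$, ties included. One may therefore assume the tie-broken manipulation winner $c_\ell$ satisfies $c_\ell\succ_1 w=c_i$. It then suffices to show that no $c_j$ with $j\ge\ell+1$ is the tie-broken winner of $\kprofile{k}{1}$.

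This also settles your choice of $\ell$: it should be the position of the manipulation winner $c'$. Your three descriptions (position of $c''$, position of $c^*$ minus one, any $\ell$ between $c'$ and $c^*$) are not equivalent. With $\ell=\mathrm{pos}(c^*)-1$, your second term would only place the copies-winner above $c^*$, not above $c'$.

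With the right choice, the first term replaces your ``$(k-?)$'' with a one-line computation. From $c_\ell\in B'$ one gets $S(c_\ell)\ge S(c_i)-1+\score_\ell-\score_i$. Combined with $S(c_j)\le S(c_i)$ and $\score_i,\score_j\le\score_{\ell+1}$, this gives $S''(c_\ell)\ge S''(c_j)+(k+1)(\score_\ell-\score_{\ell+1})-1\ge0$. The inequality is strict whenever $c_i\tieorder c_\ell$ or $c_j\tieorder c_i$, which handles ties.

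For the second term, the bound you name as the main obstacle does not hold in general. Without using the manipulation, a single candidate can collect $\nvote-1$ points from the other voters, which exceeds $(\nvote-1)\upave{2}$. The missing idea is to pair the threatening candidate $c_j$ with the manipulation winner. Since two candidates jointly receive at most $\score_1+\score_2$ from each other voter,
\[
2S_{-1}(c_j)-1\le S_{-1}(c_j)+S(c_i)-\score_j-1\le S_{-1}(c_j)+S_{-1}(c_\ell)+\score_i-\score_j\le 2(\nvote-1)\upave{2}+\score_i-\score_j .
\]
This is where $\upave{2}$ and the $\frac12$ come from. Now combine it with $\frac{1}{\posbar{\ell}}\sum_{t\le\posbar{\ell}}S_{-1}(c_t)\ge(\nvote-1)\downave{\posbar{\ell}}$ and the per-copy gap $\upave{\posbar{\ell}}-\score_j\ge\upave{\posbar{\ell}}-\score_{\ell+1}$. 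Averaging then yields some $t\le\posbar{\ell}$ with $S''(c_t)>S''(c_j)$ strictly, so no tie analysis is needed in this case.
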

We interpret $\frac{1}{\score_\ell - \score_{\ell+1}}=\infty$ whenever $\score_\ell =\score_{\ell+1}$. Since $\score_1 > \score_2$, we have $\upperpos{1} < \infty$. For all other $\ell \in \{2,\ldots, \ncand-1\}$, we have $\upave{\posbar{\ell}} > \score_{\ell+1}$ by the definition of $\posbar{\ell}$, so $U_s(\ell) < \infty$ since the second entry of the $\min$ is finite. Hence, \Cref{thm:pos-upper} gives a finite upper bound on $\spdeg{\pos}$, showing that the sufficient condition in \Cref{thm:pos-infinity} $(\score_1=\score_2)$ for an unbounded manipulation potential is also necessary. For $\plurality$ and $\borda$, the upper bound from \Cref{thm:pos-upper} corresponds to $\frac{\nvote}{2}$ and $\ncand-2$, respectively; hence this theorem indeed generalizes the upper bounds from \Cref{thm:plur,thm:borda} (with a minor loss for $\plurality$ if $\nvote$ is odd).

\begin{proof}[Proof sketch]
    The proof of \Cref{thm:pos-upper} can be found in \Cref{app:pos-upper}. We start by considering any manipulation where voter 1 changes the winner from $\cand{\ell+1}$ to $\cand{\ell}$ for some $\ell \in \{1,\ldots,\ncand-1\}$ (We show this is WLOG). Since voter 1 by herself can decrease the total score of $\cand{\ell+1}$ by at most $\score_{\ell+1}$ (by ranking it last) and increase the total score of $\cand{\ell}$ by at most $1-\score_{\ell}$ (by ranking it first), this implies that $\cand{\ell+1}$, prior to the manipulation, is ahead of $\cand{\ell}$ by at most $1-(\score_{\ell}-\score_{\ell+1})$ points. The first term inside the $\min$ in $\upperpos{\ell}$ (and the only term for $\upperpos{1}$) then captures the number of truthful copies it will take (in the worst case) for $\cand{\ell}$ to catch up with $\cand{\ell+1}$, as the difference decreases by $\score_{\ell}-\score_{\ell+1}$ with each added copy. For $\ell>1$, however, we must also consider the possibility of some candidate in $\{\cand{1},\cand{2},\ldots,\cand{\ell-1}\}$ reaching the score of $\cand{\ell+1}$ before $\cand{\ell}$ does (for example, this will definitely be the case if $\score_\ell=\score_{\ell+1}$), as this would also be preferred by voter 1 to the outcome of the manipulation ($\cand{\ell}$). In the worst case, all voters except for voter 1 rank candidates $\cand{\ell}$ and $\cand{\ell+1}$ in the top two positions and also rank the candidates in $\{\cand{1}, \cand{2},\ldots,\cand{\posbar{\ell}}\}$ in the bottom $\posbar{\ell}$ positions (We do not have to worry about any candidate $\cand{\ell'}$ for $\posbar{\ell} < \ell' < \ell$, since $\score_{\ell'}=\score_{\ell+1}$ by definition, so $\cand{\ell'}$ will not come closer to $\cand{\ell+1}$ with additional copies). Then the total score of $\cand{\ell+1}$ in the original profile is roughly $(\nvote-1)\cdot \upave{2}$, whereas the \emph{average} score of candidates 1 to $\posbar{\ell}$ is roughly $\downave{\posbar{\ell}}$. Further, in the worst case, the initial scores of candidates 1 to $\posbar{\ell}$ are distributed in such a way that they all require (roughly) the same number of copies to reach the score of $\cand{\ell+1}$, which occurs when their average score reaches that of $\cand{\ell+1}$ (To see why: if any of these candidates reaches $\cand{\ell+1}$ before others do, we could decrease their initial score and distribute it among the other candidates, thereby increasing the minimum number of copies required for some candidate to reach $\cand{\ell+1}$). Since the average score of the candidates in $\{\cand{1}, \cand{2},\ldots,\cand{\posbar{\ell}}\}$ gets closer to that of $\cand{\ell+1}$ by $\upave{\posbar{\ell}}-\score_{\ell+1}$ with each copy, the number of copies required to reach this threshold is roughly $ \frac{(\nvote-1)(\upave{2}-\downave{\posbar{\ell}})}{\upave{\posbar{\ell}}-\score_{\ell+1}}$, giving us the second term of the $\min$ in $\upperpos{\ell}$ (where the additional $1/2$ results from a more precise analysis). The maximization over all $\upperpos{\ell}$ follows from the fact that we should, in the worst case, consider manipulations that lead to any $\cand{\ell}$ as the new winner.
\end{proof}

\begin{remark}
    \Cref{thm:pos-upper} also shows that having a manipulation potential that does not grow with $\nvote$ is not unique to Borda Count. In particular, for any scoring vector $\score$ such that $\score_i \neq \score_j$ for any $i \neq j$, $\spdeg{\pos}$ will eventually stop growing with $\nvote$ due to the first term inside the $\min$ in each $\upperpos{\ell}$. Still, out of any such rule, Borda Count is the one with the lowest manipulation potential, as we show further below (\Cref{thm:pos-best}).
\end{remark}
We next present a general lower bound on the manipulation potential of positional scoring rules. 

\begin{restatable}{proposition}{posdown}\label{thm:pos-lower}
    For any $\pos$ with $\score_1 \neq \score_2$, we have $\spdeg{\pos}\geq \max_{\ell \in \{1,2,\ldots, \ncand-1\}} \lowerpos{\ell}$, where
    \[  \lowerpos{\ell}\coloneq \begin{cases}\frac{1}{\score_\ell-\score_{\ell+1}}-1 &\text{if }\ell=1\\
     \min\left(\frac{1}{\score_\ell-\score_{\ell+1}}-1, \min_{t \in \{1,2,\ldots,\posbar{\ell}\}}\left\lfloor \frac{(\nvote-2)(\upave{2}-\score_{\ncand-t+1})+1-\score_{\ncand-t}}{\score_t-\score_{\ell+1}} \right\rfloor \right) & \text{otherwise}\end{cases},\]
    assuming $\nvote$ is even and at least $\frac{2\score_2}{1-\score_2}$. Further, $\pos$ is not $\ksp{k}$ for any $k < \max_{\ell \in \{1,2,\ldots, \ncand-1\}} \lowerpos{\ell}$.
\end{restatable}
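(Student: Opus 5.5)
For each fixed $\ell \in \{1,\ldots,\ncand-1\}$, I would show that $\pos$ is not $\ksp{k}$ for every $0 \le k < \lowerpos{\ell}$. This yields both claims of the proposition. The construction generalizes the Borda profile from \Cref{thm:borda}, with $\cand{\ell}, \cand{\ell+1}$ playing the roles of $\bcand{1}, \bcand{2}$:
\begin{itemize}
\item in the manipulated profile $\maniprofile{1}$, voter 1 makes $\cand{\ell}$ win;
\item in $\kprofile{k}{1}$, the winner stays $\cand{\ell+1}$;
\item the tiebreaker ranks $\cand{\ell} \tieorder \cand{\ell+1} \tieorder \cdots$, with $\cand{\ell+1}$ placed above every $\cand{t}$ with $t<\ell$.
\end{itemize}

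\textbf{Profile.} Voter 1 votes truthfully, $\cand{1} \vote{1} \cdots \vote{1} \cand{\ncand}$.
\begin{itemize}
\item Voter 2 ranks $\cand{\ell+1}$ first and $\cand{\ell}$ last. For each $t \le \posbar{\ell}$, voter 2 puts $\cand{t}$ in position $\ncand-t$. The remaining candidates fill the middle.
\item The other $\nvote-2$ voters form $\frac{\nvote-2}{2}$ pairs. In each pair, the two voters rank $\cand{\ell}, \cand{\ell+1}$ in the top two positions in opposite orders. Both voters put each $\cand{t}$, $t \le \posbar{\ell}$, in position $\ncand-t+1$, and the remaining candidates in between.
\end{itemize}
For $\ell=1$ only $\cand{1},\cand{2}$ matter, and the bottom block is omitted.

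\textbf{Scores.} The pairs give $\cand{\ell}$ and $\cand{\ell+1}$ the same score, $(\nvote-2)\upave{2}$. Hence, in $\kprofile{k}{1}$, the lead of $\cand{\ell+1}$ over $\cand{\ell}$ is
\[ 1-(k+1)(\score_\ell-\score_{\ell+1}). \]
This is positive exactly when $k < \frac{1}{\score_\ell-\score_{\ell+1}}-1$, which is the first term of $\lowerpos{\ell}$. For $t \le \posbar{\ell}$, the lead of $\cand{\ell+1}$ over $\cand{t}$ is
\[ (\nvote-2)\bigl(\upave{2}-\score_{\ncand-t+1}\bigr)+1-\score_{\ncand-t}-k(\score_t-\score_{\ell+1}), \]
up to an $O(1)$ voter-1 term $\score_{\ell+1}-\score_t$. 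I expect this to match the floor term exactly once the positions are tuned. Since $\cand{\ell+1}$ beats $\cand{t}$ in the tiebreaker, only nonnegativity of this lead is needed, which gives the floor.

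Candidates $\cand{\ell'}$ with $\posbar{\ell} < \ell' < \ell$ satisfy $\score_{\ell'}=\score_{\ell+1}$, so copies never bring them closer to $\cand{\ell+1}$. Candidates ranked below $\cand{\ell+1}$ by voter 1 only fall further behind. So for every $k<\lowerpos{\ell}$, the winner of $\kprofile{k}{1}$ is $\cand{\ell+1}$.

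\textbf{Manipulation.} Voter 1 reports $\cand{\ell}$ first and $\cand{\ell+1}$ last. Without voter 1, $\cand{\ell+1}$ leads $\cand{\ell}$ by exactly $1=\score_1-\score_\ncand$, so after the manipulation they tie. The tiebreaker then selects $\cand{\ell}$, provided no other candidate scores strictly more.

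\textbf{Main obstacle.} The hard step is checking that every other candidate stays weakly below $\cand{\ell}$ in $\maniprofile{1}$, and below $\cand{\ell+1}$ in the copy profiles. The middle candidates receive at most $\score_2$ per vote; by my estimate they get up to $\nvote\score_2$ in total, while $\cand{\ell}$ scores about $(\nvote-2)\upave{2}+1$. I expect the hypothesis $\nvote \ge \frac{2\score_2}{1-\score_2}$ to be precisely what closes this comparison, since it is equivalent to $(\nvote-2)\upave{2}+1 \ge \nvote\score_2$. To finish, I would verify this inequality for the worst-placed middle candidate, and then choose the exact bottom positions of the $\cand{t}$ so that the constants in the lead over $\cand{t}$ match the stated floor expression.
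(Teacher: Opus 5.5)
Your proposal is correct and is essentially the paper's own proof: the profile is the same (voter 2 with $\cand{\ell+1}$ first, $\cand{\ell}$ last and $\cand{t}$ in position $\ncand-t$; two halves of the remaining $\nvote-2$ voters with $\cand{\ell},\cand{\ell+1}$ on top in opposite orders and $\cand{t}$ in position $\ncand-t+1$), as are the tiebreaker, the manipulation, and the bound $2\score_2+(\nvote-2)\score_3\le\nvote\score_2$, which $\nvote\ge\frac{2\score_2}{1-\score_2}$ closes. The steps you left as expectations already go through as written: since $k$ is an integer below the floor, $(k+1)(\score_t-\score_{\ell+1})\le(\nvote-2)(\upave{2}-\score_{\ncand-t+1})+1-\score_{\ncand-t}$, which absorbs the voter-1 term; and your one deviation, pushing only $\cand{1},\ldots,\cand{\posbar{\ell}}$ to the bottom instead of all of $\cand{1},\ldots,\cand{\ell-1}$, is harmless, because the remaining ones are weakly below $\cand{\ell+1}$ on every non-manipulator ballot and receive the same score as it from voter 1 and her copies.
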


\begin{proof}[Proof sketch]
    The proof can be found in \Cref{app:pos-lower}. To lower bound the manipulation potential, we construct a family of profiles, one for each $\ell \in \{1,\ldots,\ncand-1\}$. The profile corresponding to any $\ell$ follows a similar structure to the one we used for proving the lower bound on Borda Count (\Cref{thm:borda}) except $\cand{1}$ and $\cand{2}$ are replaced with $\cand{\ell}$ and $\cand{\ell+1}$, respectively. Further, the candidates in $\{\cand{1},\cand{2},\ldots,\cand{\ell-1}\}$ are placed in a way that approximates the worst-case situation described in the proof sketch of \Cref{thm:pos-upper} above, while making minimal assumptions on $\nvote, \ncand$, and the scoring vector $\score$.
\end{proof}

It is worth pointing out two shortcomings of \Cref{thm:pos-lower}: First, the second term inside the $\min$ in $\lowerpos{\ell}$ does not exactly match the corresponding term inside $\upperpos{\ell}$ from \Cref{thm:pos-upper}. Second, unlike the minimal lower bound / divisibility requirements we have been making on $\nvote$ or $\ncand$ so far, requiring $\nvote \geq \frac{2\score_2}{1-\score_2}$ can significantly limit the ``feasible region'' of \Cref{thm:pos-lower} for certain rules $\pos$. For example if $\score_2\geq \frac{\ncand-2}{\ncand-1}$ (which is satisfied by $\borda$), this condition then requires $\nvote \geq 2(\ncand-2)$, whereas we would ideally like to find the positional scoring rule(s) with the lowest manipulation potential regardless of whether we have more voters or candidates. Both of these limitations follow from the fact that constructing the ``worst case'' profile described in the proof sketch of \Cref{thm:pos-upper} is more difficult for some positional scoring rules than others, and in fact impossible for some rules (for example, requiring every candidate in $\{\cand{1},\cand{2},\ldots,\cand{\posbar{\ell}}\}$ to reach the score of $\cand{\ell+1}$ after the same number of turns might require some of them to start from negative scores in the original profile, which is not possible). Overcoming these difficulties to get a tighter bound may require much more complex and unnatural restrictions on $\nvote,\ncand,$ and the scoring vector $\score$. In any case, as we show next, the lower bound established in \Cref{thm:pos-lower} is sufficient for our main goal in this subsection, which is to prove that the positional scoring rule with the smallest manipulation potential will always be either Plurality or Borda Count, without any assumptions on whether there are more voters or candidates.

\begin{restatable}{theorem}{posbest}\label{thm:pos-best}
    For any positional scoring rule $\pos \notin \{\plurality,\borda\}$, given some $\nvote \geq 4$ that is divisible by 4, we have either $\spdeg{\pos} > \spdeg{\borda}$ or $\spdeg{\pos} \geq \spdeg{\plurality}$. Further, if $\nvote \geq \max(8, \frac{2}{\score_2}+4)$, then either  $\spdeg{\pos} > \spdeg{\borda}$ or $\spdeg{\pos} > \spdeg{\plurality}$ (\ie, the second inequality is also strict).
\end{restatable}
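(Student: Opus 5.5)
Since \Cref{thm:pos-infinity} gives $\spdeg{\pos}=\infty$ whenever $\score_1=\score_2$, I will assume $\score_1>\score_2$ and compare the lower bound of \Cref{thm:pos-lower} with the exact values $\spdeg{\borda}=\ncand-2$ (\Cref{thm:borda}) and $\spdeg{\plurality}=\lceil\frac{\nvote-1}{2}\rceil=\frac{\nvote}{2}$ (\Cref{thm:plur}, using that $\nvote$ is even). It suffices to find a single $\ell$ with $\lowerpos{\ell}>\ncand-2$ or $\lowerpos{\ell}\geq \frac{\nvote}{2}$. Each $\lowerpos{\ell}$ is a minimum of two kinds of terms: the ``local gap'' term $\frac{1}{\score_\ell-\score_{\ell+1}}-1$ and the $\nvote$-dependent terms. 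The idea is that a non-Borda vector must have some gap large enough to beat Borda on the local term, while the $\nvote$-dependent terms must beat Plurality.

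\textbf{Step 1 (local term).} The gaps $\score_\ell-\score_{\ell+1}$ for $\ell=1,\ldots,\ncand-1$ sum to $1$. If every gap were at least $\frac{1}{\ncand-1}$, all would equal $\frac{1}{\ncand-1}$, and $\pos=\borda$. So some $\ell$ has $\score_\ell-\score_{\ell+1}<\frac{1}{\ncand-1}$, which gives $\frac{1}{\score_\ell-\score_{\ell+1}}-1>\ncand-2$. If $\ell=1$, then $\lowerpos{1}>\ncand-2=\spdeg{\borda}$ and we are done. Otherwise, I take the \emph{smallest} such $\ell\geq 2$. Then all earlier gaps are at least $\frac{1}{\ncand-1}$, which controls $\score_t$ for $t\leq \ell$ and $\upave{2}$ from below.

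\textbf{Step 2 (the $\nvote$-dependent term).} Either the min in $\lowerpos{\ell}$ is attained by the local term, and we are done against Borda, or some $t\le\posbar{\ell}$ gives the value
\[\left\lfloor \frac{(\nvote-2)(\upave{2}-\score_{\ncand-t+1})+1-\score_{\ncand-t}}{\score_t-\score_{\ell+1}}\right\rfloor,\]
and I must show this is at least $\frac{\nvote}{2}$. The denominator is at most $\score_t\le 1$. For the numerator I need $\upave{2}-\score_{\ncand-t+1}\geq \frac12$, roughly, because $\upave{2}=\frac{1+\score_2}{2}$ and the bottom positions $\ncand-t+1$ with $t\le\posbar{\ell}<\ell$ should have small score. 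This is where the structure from Step 1 enters: since $\ell$ is the first small gap and $t<\ell$, the positions $\ncand-t+1$ lie near the bottom, and I would argue that $\score_{\ncand-t+1}\le \score_2$ minus enough, or else treat the configuration separately. I would try the bound $\score_{\ncand-t+1}\le \frac{\score_2}{?}$ first. If that fails, I would instead use $\score_t-\score_{\ell+1}\le 1-\score_{\ell+1}$ together with $\score_{\ncand-t+1}\le\score_{\ell+1}$ whenever $\ncand-t+1\ge\ell+1$, so that numerator and denominator trade off and the ratio becomes $\ge(\nvote-2)\cdot\frac{1+\score_2-2\score_{\ell+1}}{2(1-\score_{\ell+1})}+\ldots\ge\frac{\nvote-2}{2}+\text{slack}$. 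The slack $1-\score_{\ncand-t}$ over the denominator, together with divisibility by $4$, should push the floor to $\frac{\nvote}{2}$.

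\textbf{Step 3 (strictness and the main obstacle).} For the second claim, the same chain should give strict inequality once $\nvote\ge \frac{2}{\score_2}+4$. The extra $(\nvote-2)\frac{\score_2}{2}$ contributed by $\upave{2}$ then exceeds the rounding loss, so the floor exceeds $\frac{\nvote}{2}$. I must also check the hypothesis $\nvote\geq\frac{2\score_2}{1-\score_2}$ of \Cref{thm:pos-lower}. When it fails, $\score_2$ is close to $1$, so $\score_1-\score_2$ is small and $\lowerpos{1}=\frac{1}{1-\score_2}-1$ is large. I will show directly that this is $\geq\frac{\nvote}{2}$ in that regime, so that case is handled by $\ell=1$ without the proposition's restriction. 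I expect Step 2 to be the main obstacle, specifically lower-bounding the numerator for the minimizing $t$ uniformly over all scoring vectors. That likely needs a case split on whether $\ncand-t+1>\ell$, that is, whether the bottom $t$ positions overlap the top $\ell$, which can only happen when $\ncand$ is small relative to $\ell$.
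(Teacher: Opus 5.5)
Your skeleton matches the paper's. You dispose of $\score_1=\score_2$ via \Cref{thm:pos-infinity}, take the smallest $\ell$ with $\score_\ell-\score_{\ell+1}<\frac{1}{\ncand-1}$, beat Borda with the local term, and beat Plurality with the $\nvote$-dependent terms. Your trade-off bound in Step~2 is sound whenever $\ell\le\ncand-t$. It reduces to $(\nvote-4)(\score_2-\score_{\ell+1})+2(\score_2-\score_{\ncand-t})\ge 0$, which holds since $\ncand-t\ge\ell\ge 2$. It thus covers the paper's subcases $t=1$ and ($t\ge 2$, $\ell\le\ncand-t$) at once for the non-strict claim. Evenness then suffices for the floor; divisibility by $4$ plays no role there.

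There are, however, two genuine gaps. \textbf{First}, your treatment of the regime $\nvote<\frac{2\score_2}{1-\score_2}$ is invalid. Knowing that the number $\lowerpos{1}=\frac{\score_2}{1-\score_2}$ exceeds $\frac{\nvote}{2}$ says nothing about $\spdeg{\pos}$. \Cref{thm:pos-lower} is proved only under that hypothesis, and its construction really fails without it. For $\ell=1$, the manipulated ballot puts $\cand{\ncand}$ second, so $\cand{\ncand}$ scores $2\score_2+(\nvote-2)\score_3$ against $1+(\nvote-2)\frac{1+\score_2}{2}$ for $\cand{1}$. For example, $\nvote=4$ and $\score_2=\score_3=0.99$ give $3.96>2.99$.

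You need a separate construction in which no third candidate can profit from the manipulation. The paper uses $\frac{\nvote}{4}$ voters (including voter~1) with $\cand{1}\succ\cand{2}\succ\cdots$, and $\frac{3\nvote}{4}$ voters plus the tiebreaker with $\cand{2}\succ\cand{1}\succ\cdots$. For every $k\le\frac{\nvote}{2}$ the copies leave $\cand{2}$ winning. If voter~1 instead moves $\cand{2}$ to the bottom, the score difference becomes $\score_2-\frac{\nvote}{2}(1-\score_2)>0$, exactly in this regime. Since $\cand{1}$ receives at least $\score_2$ on every ballot, no third candidate overtakes it. This gives $\spdeg{\pos}>\frac{\nvote}{2}\ge\spdeg{\plurality}$ directly, and it is where divisibility by $4$ is used.

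\textbf{Second}, you flag the overlap subcase $t\ge 2$, $\ell\ge\ncand-t+1$ but leave it open, and it is not a technicality. There $\score_{\ncand-t+1}\ge\score_\ell\ge\score_{\ell+1}$, so your replacement goes in the wrong direction, and the placeholder ``$\score_{\ncand-t+1}\le\score_2/?$'' has no basis. The missing idea is to reuse the minimality of $\ell$. All gaps before position $\ell$ are at least $\frac{1}{\ncand-1}$, so $\score_i\le\frac{\ncand-i}{\ncand-1}$ for all $i\le\ell$; apply this to $i=t$, $\ncand-t$, $\ncand-t+1$. Likewise $\score_2-\score_{\ncand-t+1}\ge\frac{\ncand-t-1}{\ncand-1}$. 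Substituting these and dropping $+\nvote\score_{\ell+1}$ bounds the numerator of $\lowerposP{\ell,t}-\frac{\nvote}{2}$ below by $\frac{(\nvote-2)(\ncand-t-1)-2}{\ncand-1}\ge 0$, using $t\le\posbar{\ell}\le\ell-1\le\ncand-2$.

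Finally, the strict claim does not follow from ``the same chain''. At $t=1$ your replacement of $\score_{\ncand}=0$ by $\score_{\ell+1}$ is too lossy: for $\score=(1,\frac12,\ldots,\frac12,0)$ your bound is exactly $\frac{\nvote}{2}$. You must keep the exact expression there, which yields slack at least $\frac{(\nvote-4)\score_2}{2}\ge 1$ when $\nvote\ge\frac{2}{\score_2}+4$. You also need separate estimates, requiring $\nvote\ge 6$ and $\nvote\ge 8$, in the two $t\ge 2$ subcases.
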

\begin{proof}
    Fix some positional scoring rule $\pos\notin \{\plurality,\borda\}$. If $\score_1=\score_2$, then $\spdeg{\pos}=\infty$ by \Cref{thm:pos-infinity}, and we are done. Otherwise, we will consider two cases based on the value of $\frac{2\score_2}{1-\score_2}$.

    \textbf{Case 1:} $\nvote < \frac{2\score_2}{1-\score_2}$. Then consider a profile $\profile$ where $\frac{\nvote}{4}$ voters (including voter 1) all rank $\cand{1} \succ \cand{2} \succ \cand{}$ for all $\cand{} \in \cands \setminus \{\cand{1},\cand{2}\}$, and the remaining $\frac{3\nvote}{4}$ voters and the tiebreaker $\tieorder$ all rank  $\cand{2} \succ \cand{1} \vote{} \cand{}$ for all $\cand{} \in \cands \setminus \{\cand{1},\cand{2}\}$. Fix any $0 \leq k \leq \frac{\nvote}{2}$ and for any $\cand{} \in \cands$ say $\bs_k(\cand{})$ is the total score assigned by $\pos$ to $\cand{}$ in $\kprofile{k}{1}$. Then $\bs_k(\cand{2})-\bs_k(\cand{1})=\left(\frac{3\nvote}{4}-\left(\frac{\nvote}{4}+k\right)\right)(1-\score_2)= \left(\frac{\nvote}{2}-k\right)(1-\score_2) \geq 0$. Similarly, $\bs_k(\cand{2}) \geq \bs_k(\cand{})$ for all $\cand{} \in \cands \setminus \{\cand{1},\cand{2}\}$ since all voters rank $\cand{2} \vote{} \cand{}$. As $\cand{2}$ is also ranked top by the tiebreaker $\tieorder$, this implies $\pos_\tiebreaker(\kprofile{k}{1})=\cand{2}$. 

    On the other hand, say voter 1 instead submits a preference order $\manip{1}$ that ranks $\cand{1}$ first and $\cand{2}$ last, and for any $\cand{} \in \cands$ say $\bs'(\cand{})$ is the total score assigned to $\cand{}$ by $\pos$ in $\maniprofile{1}$. Then $\bs'(\cand{1})-\bs'(\cand{2})=\left( \frac{\nvote}{4}+\frac{3\nvote}{4}\score_2\right)-\left( \left(\frac{\nvote}{4}-1\right)\score_2+\frac{3\nvote}{4}\right)=\score_2 -\frac{\nvote}{2}(1-\score_2)>0$. Further, for any $\cand{} \in \cands{}\setminus \{\cand{1},\cand{2}\}$, we have $\bs'(\cand{1}) \geq \bs'(\cand{})$ and $\cand{1} \tieorder \cand{}$, implying $\pos\maniprofile{1}=\cand{1} \vote{1} \cand{2} =\pos(\kprofile{k}{1})$. Therefore, $\pos$ cannot be $\ksp{k}$ for any $k \leq \frac{\nvote}{2}$, implying $\spdeg{\pos} > \frac{\nvote}{2} \geq \spdeg{\plurality}$, as desired.

    \textbf{Case 2:} $\nvote \geq \frac{2\score_2}{1-\score_2}$. In this case, we can invoke \Cref{thm:pos-lower}. We cannot have $\score_{i+1}\leq \score_{i} - \frac{1}{\ncand-1}$ for all $i \in \{1,\ldots,\ncand-1\}$; if we did, this would imply $\score_i \leq \score_1-\frac{i-1}{\ncand-1}=\frac{\ncand-i}{\ncand-1}$ and $\score_i \geq \score_\ncand+\frac{\ncand-i}{\ncand-1}=\frac{\ncand-i}{\ncand-1}$ for all $i$, meaning $\pos= \borda$, which is a contradiction. Hence, let $\ell = \min\{ i\in \{1,\ldots,\ncand-1\}: \score_{i}-\score_{i+1} < \frac{1}{\ncand-1} \}$. If $\ell=1$, \Cref{thm:pos-lower} would imply $\spdeg{\pos} \geq \lowerpos{1}= \frac{1}{\score_1-\score_2}-1>\ncand -2 = \spdeg{\borda}$, and we are done. Otherwise, say $\ell \geq 2$, and define 
    \begin{align*}
        \lowerposB{\ell} &\coloneq  \frac{1}{\score_\ell-\score_{\ell+1}}-1\text{, and}\\
       \forall t \in \{1,2,\ldots,\posbar{\ell}\}: \lowerposP{\ell, t}&\coloneq \frac{(\nvote-2)(\upave{2}-\score_{\ncand-t+1})+1-\score_{\ncand-t}}{\score_t-\score_{\ell+1}}. 
    \end{align*}
    By assumption, we have $  \lowerposB{\ell} > \ncand-2=\spdeg{\borda}$. To complete the proof, we will show that for each $t \in \{1,2,\ldots, \posbar{\ell}\}$ we have $\lowerposP{\ell,t} \geq \spdeg{\plurality}$ (and $ \lowerposP{\ell,t} \geq \spdeg{\plurality}+1$ if $\nvote \geq\max(8, \frac{2}{\score_2}+4)$). This is sufficient to prove the theorem statement since $\spdeg{\pos} \geq \lowerpos{\ell} = \min( \lowerposB{\ell}, \min_t \lfloor \lowerposP{\ell,t} \rfloor)$ by \Cref{thm:pos-lower}. By assumption, we have $\score_i - \score_{i+1} \geq \frac{1}{\ncand-1}$ for any $i < \ell$. Also recall $\posbar{\ell} \leq \ell -1$ by definition. Fix any $t \in \{1,2,\ldots, \posbar{\ell}\}$, and consider three subcases.

    \textbf{Case 2a:} $t=1$. Then we have 
    \begin{align*}
        \lowerposP{\ell, 1} - \spdeg{\plurality} &\geq \lowerposP{\ell, 1} - \frac{\nvote}{2} = \frac{(\nvote-2)(\score_1+\score_2-2\score_{\ncand})+2-2\score_{\ncand-1}-\nvote(\score_1-\score_{\ell+1})}{2(\score_1-\score_{\ell+1})}  \\
        &= \frac{(\nvote-2)\score_2-2\score_{\ncand-1}+\nvote\score_{\ell+1}}{2(1-\score_{\ell+1}) } \geq  \frac{(\nvote-4)\score_2}{2}\geq 0 \quad\left(\text{and }\geq 1\text{ if } \nvote\geq \frac{2}{\score_2} +4\right).
    \end{align*}

    \textbf{Case 2b:} $t\geq 2$, $\ell \leq \ncand-t$. In this case we have $\score_{\ell} \geq \score_{\ell+1} \geq \score_{\ncand-t+1}$ and $\score_{\ell} \geq \score_{\ncand-t} \geq \score_{\ncand-t+1}$. Thus 
        \begin{align*}
        \lowerposP{\ell, t} - \spdeg{\plurality} \geq \lowerposP{\ell, t} - \frac{\nvote}{2} &= \frac{(\nvote-2)(\score_1+\score_2-2\score_{\ncand-t+1})+2-2\score_{\ncand-t}-\nvote(\score_t-\score_{\ell+1})}{2(\score_t-\score_{\ell+1})}\\
        &\geq \frac{(\nvote-2)(1+\score_2-2\score_{\ell+1})+2-2\score_{\ell}-\nvote(\score_2-\score_{\ell+1})}{2(\score_2-\score_{\ell+1})}\\&= \frac{\nvote-2\score_2 -(\nvote-4)\score_{\ell+1} -2\score_{\ell}}{2(\score_2-\score_{\ell+1})}\geq \frac{\nvote-2\score_2  -2\score_{\ell}}{2\score_2}\\& \geq \frac{\nvote(\ncand-1)-2(\ncand-\ell)}{2(\ncand-2)} -1 \geq \frac{4(\ncand-1)-2(\ncand-3)}{2(\ncand-2)} -1 \geq 0,
    \end{align*}
    since $\score_\ell \leq \score_1 -\frac{\ell-1}{\ncand-1}= \frac{\ncand-\ell}{\ncand-1}$,  $\score_2 \leq \score_1 -\frac{1}{\ncand-1}= \frac{\ncand-2}{\ncand-1}$, $\nvote \geq 4$, and $\ell \geq 3$ (as $\posbar{\ell} \geq t\geq 2$). Further, if $\nvote \geq 6$, we get a lower bound of 1 instead.

    \textbf{Case 2c:} $t\geq 2$, $\ell \geq \ncand-t+1$. Since $\score_i \leq \frac{\ncand-i}{\ncand-1}$ for all $i \leq \ell$ and since $\score_2-\score_{\ncand-t+1} \geq \frac{\ncand-t-1}{\ncand-1}$, we get
        \begin{align*}
        \lowerposP{\ell, t} - \spdeg{\plurality} \geq \lowerposP{\ell, t} - \frac{\nvote}{2} &= \frac{(\nvote-2)(\score_1+\score_2-2\score_{\ncand-t+1})+2-2\score_{\ncand-t}-\nvote(\score_t-\score_{\ell+1})}{2(\score_t-\score_{\ell+1})}\\
        &\geq \frac{\nvote+(\nvote-2)(\score_2-2\score_{\ncand-t+1})-2\score_{\ncand-t}-\nvote\score_t}{2(\score_t-\score_{\ell+1})}\\
        &\geq \frac{\nvote(\ncand-1)+(\nvote-2)( (\ncand-t-1) -(t-1))-2t-\nvote(\ncand-t)}{2(\score_t-\score_{\ell+1})(\ncand-1)}\\
        & = \frac{(\nvote-2)( \ncand-t-1)-2}{2(\score_t-\score_{\ell+1})(\ncand-1)} \geq 0,
    \end{align*}
    where the last inequality follows from the fact that $t \leq \posbar{\ell} \leq \ell-1 \leq \ncand-2$ and $\nvote \geq 4$, and therefore $(\nvote-2)( \ncand-t-1) \geq 2$. Further, if $\nvote \geq 8$, we get
    \begin{align*}
        \lowerposP{\ell, t} - \spdeg{\plurality} &\geq  \frac{(\nvote-2)( \ncand-t-1)-2}{2(\score_t-\score_{\ell+1})(\ncand-1)} \geq \frac{(\nvote-2)( \ncand-t-1)-2}{2\score_t(\ncand-1)}\geq\frac{(\nvote-2)( \ncand-t-1)-2}{2(\ncand-t)} \\
        &\geq \frac{6( \ncand-t-1)-2}{2(\ncand-t)} = 3 -\frac{8}{2(\ncand-t)}\geq 1 
    \end{align*}
    since $\ncand-t \geq 2$, completing the proof for Case 2c.

    Hence, for each possible case, we have shown that $\lowerposP{\ell, t} \geq \spdeg{\plurality}$ (and $\lowerposP{\ell, t} \geq \spdeg{\plurality} + 1$ whenever $\nvote \geq \max(8, \frac{2}{\score_2}+4)$) for all $t \in \{1,2,\ldots,\posbar{\ell}\}$, completing the proof of the theorem.
\end{proof}

Having shown that either Plurality or Borda Count (depending on $\nvote$ and $\ncand$) has the lowest manipulation potential of an infinite-sized class of rules they belong to, we next move on to another class of rules---those that are \emph{Condorcet-consistent}---which contains neither $\plurality$ nor $\borda$. Below, we will show that, once again, any of the rules in this class will have a higher manipulation potential than one or both of $\plurality$ and $\borda$.
 
\subsection{Condorcet- (and Biranking-Majority-) Consistent Rules}\label{sec:condorcet}

In this section, we study the manipulation potential of Condorcet-consistent rules. After introducing Condorcet consistency, we first analyze two example rules. We then prove two general impossibility results on rules satisfying a weaker form of majority consistency (hence any Condorcet-consistent rule): First, any such rule that determines the winner solely on pairwise defeats (the so-called \emph{majoritarian} rules) will have a manipulation potential of at least $\nvote-2$. Second, when $\nvote$ is odd, the manipulation potential of \textbf{any rule satisfying this weaker form of majority consistency (and thus any Condorcet-consistent rule)} is as high as that of Plurality. As a consequence, the manipulation potential of Borda Count is significantly lower than any of these rules whenever $\nvote \gg \ncand$.

\paragraph{Condorcet Consistency} A \emph{Condorcet winner} is a candidate who pairwise defeats every other candidate. Formally, for a profile $\profile$, its \emph{margin matrix} $\margin{\profile}$ is a $\ncand \times \ncand$ matrix with its entry corresponding to candidates $\cand{},\cand{}'$ equal to $\margin{\profile}[\cand{},\cand{}']\coloneq |\{i \in \voters: \cand{} \vote{i} \cand{}'\}|- |\{i \in \voters: \cand{}' \vote{i} \cand{}\}|$. Then $\cand{}$ is the Condorcet winner if $\margin{\profile}[\cand{},\cand{}'] > 0$ for all $\cand{}' \in \cands \setminus \{\cand{}\}$. An SCC/SCF is \emph{Condorcet-consistent} (or a \emph{Condorcet extension}) if it only picks the Condorcet winner whenever one exists.

Any Condorcet extension is majority-consistent, since a majority winner is also a Condorcet winner. Consequently, the  upper bound of $\nvote-1$ from \Cref{remark:majority-consistent} also applies to Condorcet extensions. On the other hand, not every majority-consistent rule is a Condorcet extension, \eg, $\plurality, \irun,$ and $\plurWrunoff$ all fail Condorcet consistency. We now present two example Condorcet extensions.

\subsubsection{Black's Rule} 
As we have seen in \Cref{sec:warmup}, Borda Count ($\borda$) stands out among the rules we have studied so far by having a manipulation potential that does not grow with the number of voters, which, in practice, can be significantly larger than the number of alternatives. However, $\borda$ fails even majority-consistency. Could there be a Condorcet extension that shares the same manipulation potential as $\borda$? A natural candidate is Black's Rule ($\black$), which is simply defined as follows: if there is a Condorcet winner, output that. Otherwise, output the Borda Count winner(s). However, as we show next, $\black$ has a manipulation potential as large as any Condorcet extension can have.

\begin{restatable}{theorem}{blackthm}\label{thm:black}
    The manipulation potential of Black's Rule is $\spdeg{\black}=\nvote-1$ (lower bound assumes $\ncand \geq 4$ and odd $\nvote$). Further, $\black$ is not $\ksp{k}$ for any $0 \leq k < \nvote -1$.
\end{restatable}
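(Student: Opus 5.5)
The upper bound is immediate. Black's Rule is a Condorcet extension and hence majority-consistent, so \Cref{remark:majority-consistent} gives $\spdeg{\black}\le \nvote-1$. The work is in the lower bound. For odd $\nvote$, $\ncand\ge 4$ and each fixed $0\le k<\nvote-1$, I would build one profile $\profile$, one manipulation $\manip{1}$ and one tiebreaker $\tieorder$ such that:
\begin{itemize}
\item in $\kprofile{k}{1}$ there is no Condorcet winner, and the Borda winner is a candidate that voter~1 ranks below $\cand{2}$;
\item in $\maniprofile{1}$, $\cand{2}$ becomes a Condorcet winner.
\end{itemize}
This gives $\black_\tiebreaker\maniprofile{1}=\cand{2}\vote{1}\black_\tiebreaker(\kprofile{k}{1})$ for every $k<\nvote-1$, which proves both the lower bound and the stronger claim that $\black$ is not $\ksp{k}$ for any such $k$.

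For the construction I would use the truthful vote $\cand{1}\vote{1}\cand{2}\vote{1}\cand{3}\vote{1}\cdots$ and two groups of $\frac{\nvote-1}{2}$ voters each.
\begin{itemize}
\item Group A: $\cand{4}\vote{}\cand{1}\vote{}(\text{others})\vote{}\cand{2}$.
\item Group B: $\cand{2}\vote{}\cand{4}\vote{}(\text{others})\vote{}\cand{1}$.
\end{itemize}
The "others" are the remaining candidates $\cand{3},\cand{5},\ldots$ in a fixed order. The manipulation is $\manip{1}$, which puts $\cand{2}$ first and $\cand{1}$ last. I would then check the pairwise margins:
\begin{itemize}
\item $\cand{1}$ beats $\cand{2}$ by exactly $1$ in $\profile$.
\item $\cand{2}$ beats $\cand{4}$ and every other candidate by exactly $1$, since group B plus voter~1 form a bare majority.
\item $\cand{4}$ beats $\cand{1}$ by $\nvote-2$.
\end{itemize}
With $k$ copies, the margin of $\cand{4}$ over $\cand{1}$ is $\nvote-2-k\ge 0$, an odd-minus-$k$ quantity that stays nonnegative for all $k\le \nvote-2$. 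So $\cand{1}$ is never a Condorcet winner. $\cand{2}$ still loses to $\cand{1}$, and $\cand{4}$ still loses to $\cand{2}$. For the others I need to confirm they lose to $\cand{4}$ or $\cand{2}$. Hence no Condorcet winner exists in $\kprofile{k}{1}$. In $\maniprofile{1}$, voter~1 now prefers $\cand{2}$ to $\cand{1}$. Group B also ranks $\cand{2}$ above $\cand{1}$, so $\cand{2}$ beats $\cand{1}$ by $1$ and keeps all its other bare-majority wins, making it the Condorcet winner.

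The main obstacle is the Borda step: I must show that $\cand{4}$ (or some other candidate below $\cand{2}$ in $\vote{1}$) is the Borda winner of $\kprofile{k}{1}$ for every $k$ up to $\nvote-2$, not only for small $k$.
\begin{itemize}
\item Each copy raises $\cand{1}$ relative to $\cand{4}$ by $3$ points and $\cand{2}$ relative to $\cand{4}$ by $2$ points, so the initial leads must absorb up to about $3(\nvote-2)$.
\item In the construction above, $\cand{4}$'s lead over $\cand{1}$ is roughly $\frac{\nvote-1}{2}(\ncand-1)$. Its lead over $\cand{2}$ is roughly $\frac{\nvote-1}{2}(\ncand-2)$. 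Both leads scale with $\ncand$.
\item So for $\ncand$ moderately large the inequalities close, but the construction as stated may fail for $\ncand=4$ or $5$.
\end{itemize}
My first attempt to fix small $\ncand$ would be to reshape the groups while preserving the three bare-majority margins. One option is to move $\cand{3}$ (which voter~1 also ranks below $\cand{2}$) into a position where it, rather than $\cand{4}$, collects the large Borda lead. Another is to let voter~1's worst candidates sit near the top of both groups. Once the construction is adjusted, the remaining checks are to recompute all margins and Borda scores as explicit linear functions of $k$ and to choose the tiebreaker so that ties (for example $\cand{4}$ against $\cand{1}$ at $k=\nvote-2$) break against voter~1.
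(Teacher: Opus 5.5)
Your plan has the same architecture as the paper's proof. The upper bound comes from \Cref{remark:majority-consistent}. The lower bound uses a profile where voter~1 can make $c_2$ a Condorcet winner, while with $k\le n-2$ copies there is no Condorcet winner and the Borda fallback picks a candidate she ranks below $c_2$. Your margin computations are also correct. The gap is the Borda step, which you flag but do not close, and it fails on a wider range than you suspect. Write $h=\frac{n-1}{2}$. Your profile gives $B_k(c_4)-B_k(c_1)=h(m-1)-3(k+1)$ and $B_k(c_4)-B_k(c_2)=h(m-2)-2(k+1)$, which at $k=n-2$ equal $h(m-7)$ and $h(m-6)$. So for every $m\in\{4,5,6\}$, $c_1$ overtakes $c_4$ once $k$ is near $n-2$. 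Since there is no Condorcet winner, Black's rule then returns voter~1's top choice, and there is no violation. (Your construction does work for $m\ge 7$, with the tiebreaker favoring $c_4$.) The theorem claims the bound for all $m\ge 4$, so the proposal does not prove it. Your first suggested repair is still not enough: if $c_3$ plays $c_4$'s role while $c_1$ stays second in Group A, the lead over $c_1$ becomes $h(m-1)-2(k+1)$, which equals $-h$ at $m=4$, $k=n-2$.

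The missing idea is to make both per-copy drifts as small as possible and both initial leads as large as possible at the same time. Use voter~1's third choice $c_3$ as the Borda target, so each copy gains only $2$ points for $c_1$ and $1$ point for $c_2$ against it. Then put $c_1$ at the very bottom of both groups. The paper's profile does exactly this: $h$ voters with $c_2\succ c_3\succ\cdots\succ c_1$ and $h$ voters with $c_3\succ\cdots\succ c_1\succ c_2$, with a tiebreaker ranking $c_3$ first. This gives $B_k(c_3)-B_k(c_1)=(n-1)(m-2)-2(k+1)$ and $B_k(c_3)-B_k(c_2)=h(m-2)-(k+1)$. Both are nonnegative for all $k\le n-2$ exactly when $m\ge 4$. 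The margins form the cycle $c_1\to c_2\to c_3\to c_1$, with the margin of $c_3$ over $c_1$ equal to $n-2-k\ge 0$, so there is no Condorcet winner. If voter~1 moves $c_2$ to the top, $c_2$ beats every candidate by one vote and becomes the Condorcet winner.
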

\begin{proof}[Proof sketch]
    The proof can be found in \Cref{app:black} and uses the following profile:
        \begin{center}
        \begin{tabular}{c|l}
            \cellcola Voter 1 & \cellcola $\cand{1} \vote{1} \cand{2} \vote{1} \cand{3} \vote{1} \ldots$
            \\ \cellcolb $\frac{\nvote-1}{2}$ voters (Group 2) & \cellcolb $\cand{2} \vote{i} \cand{3} \vote{i} \ldots \vote{i} \cand{1}$
            \\\cellcola $\frac{\nvote-1}{2}$ voters (Group 3) &  \cellcola $\cand{3} \vote{i} \ldots \vote{i} \cand{1} \vote{i} \cand{2}$
        \end{tabular} 
    \end{center}
    Voter 1 can make $\cand{2}$ the Condorcet (and thus $\black$) winner by ranking it first. However, with $k < \nvote-1$ copies, there is no Condorcet winner, and the Borda Count (and thus $\black$) winner is $\cand{3}$.
\end{proof}

As \Cref{thm:black} shows, adding the conditional on Condorcet winner can be sufficient to go from a manipulation potential that grows linearly with $\ncand$ (Borda Count) to one that grows linearly with $\nvote$ instead (Black's Rule). Every rule we have studied so far has fallen exclusively to one of these two categories (except for the general characterization of positional scoring rules in \Cref{sec:pos}). To show this is not always the case, we next present a Condorcet extension with a manipulation potential that depends on both $\ncand$ and $\nvote$.

\subsubsection{Maximin} The SCC known as Maximin\footnote{Maximin is also referred to as minimax (as it minimizes the worst defeat) or the Simpson-Kramer rule~\citep{brandt2025condorcet}.} is defined using the margin matrix $\margin{\profile}$ as \[\maximin(\profile)= \argmax_{\cand{} \in \cands} \min_{\cand{}' \in \cands} \margin{\profile}[\cand{},\cand{}'].\]
In words, Maximin scores candidates according to their lowest margin against other candidates and then outputs the candidate(s) with the highest score.

\begin{restatable}{theorem}{maximinthm}\label{thm:maximin}
    The manipulation potential of Maximin is at most $\frac{(\ncand-2)\nvote}{\ncand-1}+2$ and (assuming $\nvote-2$ is divisible by $\ncand-1$) at least $\frac{(\ncand-2)(\nvote-2)}{\ncand-1}+1$. Further, $\maximin$ is not $\ksp{k}$ for any $k<\frac{(\ncand-2)(\nvote-2)}{\ncand-1}+1$.
\end{restatable}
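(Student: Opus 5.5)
The plan has two parts: a direct construction for the lower bound, and a margin-matrix argument for the upper bound. Throughout, $\margin{\profile}$, $\margin{\profile'}$ and $\margin{\profile''}$ denote the margin matrices of $\profile$, $\maniprofile{1}$ and $\kprofile{k}{1}$, respectively. Two elementary facts drive everything.
\begin{itemize}
\item Changing voter~1's ballot changes every entry of the margin matrix by at most $2$. Hence $|\margin{\profile'}[\cand{},\cand{}']-\margin{\profile}[\cand{},\cand{}']|\leq 2$.
\item Adding $k$ truthful copies shifts $\margin{\profile}[\cand{},\cand{}']$ by exactly $+k$ if $\cand{}\vote{1}\cand{}'$, and by exactly $-k$ otherwise.
\end{itemize}

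\textbf{Lower bound.} I would mimic the cyclic constructions used for Black's rule. Put $\nvote-2$ voters into $\ncand-1$ equal groups of size $\frac{\nvote-2}{\ncand-1}$, which is where the divisibility assumption enters. These groups vote the cyclic rotations of an order on $\{\cand{2},\ldots,\cand{\ncand}\}$, with $\cand{1}$ placed near the bottom. Add voter~1 and one auxiliary voter whose ballot pins down the tie structure.

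In such a profile the worst defeat of every candidate in the cycle is about $\frac{(\ncand-2)(\nvote-2)}{\ncand-1}$. Candidate $\cand{1}$ loses badly to everyone, so its score has to be raised by $k$ copies. The manipulation, moving $\cand{2}$ up and its main rival to the bottom, swings one margin by $2$ and makes $\cand{2}$ the unique Maximin winner.

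With $k$ truthful copies, $\cand{1}$'s worst margin improves by $k$. However, while $k$ is below $\frac{(\ncand-2)(\nvote-2)}{\ncand-1}+1$, some candidate that voter~1 ranks below $\cand{2}$ still has the best Maximin score, or ties it and wins by the tiebreaker. The choice of the auxiliary ballot and the tiebreaker must make this hold for every such $k$, not just the largest one. I would verify it by tabulating $\min_{\cand{}'}\margin{\profile''}[\cand{},\cand{}']$ as a piecewise-linear function of $k$ for each candidate.

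\textbf{Upper bound.} Fix $k\geq\frac{(\ncand-2)\nvote}{\ncand-1}+2$, and let $\cand{}'$ be the manipulated winner with score $S'$. Let $A$ be the set of candidates that voter~1 ranks weakly above $\cand{}'$. The goal is to show that every candidate outside $A$ has a strictly smaller Maximin score in $\kprofile{k}{1}$ than some candidate of $A$.

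\begin{itemize}
\item \emph{Candidates outside $A$.} For $d\notin A$, the copies give
\[
\min_{\cand{}}\margin{\profile''}[d,\cand{}]\leq\margin{\profile''}[d,\cand{}']=\margin{\profile}[d,\cand{}']-k\leq -\margin{\profile'}[\cand{}',d]+2-k\leq -S'+2-k.
\]
\item \emph{Candidates inside $A$.} Every $a\in A$ gains $k$ against each candidate outside $A$. So it suffices to find $a\in A$ whose worst margin against the other members of $A$ in $\kprofile{k}{1}$ exceeds $-S'+2-k$.
\end{itemize}

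For the inside case I would prove an averaging lemma. For any margin matrix generated by $N$ voters on $j$ candidates, some candidate has all its margins at least $-N\frac{j-2}{j-1}$. This is attained by cyclic profiles, and should follow by summing, over candidates, the number of voters ranking each candidate above its worst opponent, then using pigeonhole on the positions. Inside $A$ the copies also reinforce voter~1's order, so the top of $A$ benefits. Applying the lemma to $A$ with $N=\nvote+k$ and $j\leq\ncand$, combined with $S'\geq -\nvote$, should give the stated threshold once $k\geq\frac{(\ncand-2)\nvote}{\ncand-1}+2$. Ties would then be handled as in the Borda proof, by showing that any tied pair lies in both winner sets.

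\textbf{Main obstacle.} I expect the hardest step to be making this averaging lemma interact correctly with the $+k$ drift inside $A$. Its naive form gives a bound in terms of $\nvote+k$ rather than $\nvote$. The fix I would try first is to apply the lemma only to the original $\nvote$ voters restricted to $A$, and then add the copies' contribution separately: the copies contribute $+k$ to the $A$-top candidate against everyone.
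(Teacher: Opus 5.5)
\textbf{The upper bound has a genuine gap.} Write $y$ for the manipulated winner (your $c'$) and $z$ for voter~1's top candidate. Your estimate $\min_c M''[d,c]\le -S'+2-k$ for every $d$ ranked below $y$ is correct. It is the paper's bound for these candidates, written in margin form. The reduction that follows it fails, however. The step ``every $a\in A$ gains $k$ against each candidate outside $A$, so it suffices'' to control margins inside $A$ is unjustified. $M''[a,d]=M[a,d]+k$ can be as low as $2-n+k$, and $2-n+k>-S'+2-k$ needs $2k>n-S'$. That is exactly a lower bound on $S'$, which you never prove; with only $S'\ge -n$ it needs $k>n$. The averaging lemma cannot supply this either. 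The candidate it produces is in general not $z$, and every $a\in A\setminus\{z\}$ \emph{loses} $k$ against $z$ and the other members of $A$ ranked above it. So your proposed fix (apply the lemma to the original voters, then credit the copies' $+k$ to the top of $A$) gives the copies' help to a different candidate than the one the lemma found. Incidentally, cyclic profiles attain $-N\frac{j-2}{j}$, not $-N\frac{j-2}{j-1}$.

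\textbf{The missing idea.} Compare every candidate below $y$ with $z$ alone, since all of $z$'s margins rise by exactly $k$, and lower-bound $S'$ through the \emph{original} winner $x$. Use count scores $s(c)=\min_{c'}|\{i: c\succ_i c'\}|$ in the original profile, and $s'$ for the manipulated profile. Then the requirement $\min_c M''[z,c]>-S'+2-k$ becomes $s(z)+s'(y)+k>n+1$. Since $s'(y)\ge s'(x)\ge s(x)-1$, it suffices that $s(z)+s(x)+k>n+2$. Now follow argmin edges $c^{(i)}\to c^{(i+1)}$ until they close a cycle. Every voter agrees with at least one edge of this cycle, so the scores along it sum to at least $n$. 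If the cycle misses some candidate, then $s(x)\ge\frac{n}{m-1}$. If it uses all $m$ candidates, it contains $z$, so $s(x)\ge\frac{n-s(z)}{m-1}$. The resulting loss of $\frac{s(z)}{m-1}$ is absorbed by the full $s(z)$ on the left, using $s(z)\ge 1$. This case split is what produces the denominator $m-1$. A plain averaging bound over all $m$ candidates only gives a threshold of roughly $\frac{(m-1)n}{m}$, which exceeds $\frac{(m-2)n}{m-1}+2$ once $n$ is large compared with $m^2$.

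\textbf{The lower bound.} Your outline has the same shape as the paper's profile, but the feature that makes it work is left unspecified. In the paper, voter~1's third choice is the cycle-predecessor of her second choice. Copies therefore raise both candidates' worst margins at the same rate, and their Maximin scores coincide for every $k$. The tiebreaker favours the third choice. The manipulation drops the third choice to the bottom, which produces a tie (not a unique winner) that the tiebreaker resolves in favour of the second choice.
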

\begin{proof}[Proof sketch]
    The proof can be found in \Cref{app:maximin} and formalizes the ``worst-case'' for Maximin as a profile where the manipulator's bottom $\ncand-1$ candidates form a ``Condorcet cycle'' (\ie, each candidate pairwise defeats the next one in the cycle) of roughly equal edge weights, and the manipulator's top candidate is ranked bottom by almost every other voter. The manipulator can make her second-favorite candidate win by giving it a slight edge in the Condorcet cycle, whereas adding her truthful copies to the original profile increase multiple edges in the cycle by the same amount, so gives her no improvement (until her top choice eventually becomes the winner despite being overwhelmingly ranked bottom by other voters).
\end{proof}

\subsubsection{Impossibility Results} Both of the example Condorcet extensions above have a manipulation potential that is always larger than that of Plurality (for $\maximin$, note $\frac{\ncand-2}{\ncand-1} \geq \frac{1}{2}$) and thus larger than that of Borda Count whenever $\nvote \geq 2 \ncand$. To study this trend more broadly, we now turn our attention to general impossibility results. The main question we seek to answer is this subsection is: \emph{Can there be a Condorcet extension with a lower manipulation potential than (both) Plurality and Borda Count?} The answer, as we will show, is no. To strengthen our negative results, we first significantly weaken Condorcet consistency, to a property that is even weaker than majority consistency. To the best of our knowledge, the following property has not been studied before.

\begin{definition}[Biranking-Majority-Consistent]
    Consider a profile consisting of only two types of rankings, \ie, there exists $\vote{a},\vote{b} \in \linprefs$ such that for any voter $i \in \voters$, we have $\vote{i} \in \{\vote{a},\vote{b}\}$. We say an SCC/SCF is \emph{biranking-majority-consistent} if for any such profile where one of the rankings appears strictly more times than the other, it (only) picks the top choice of the more-frequent ranking. 
\end{definition}

Majority consistency (and thus Condorcet consistency) implies biranking majority consistency. To see the reverse is false, we turn our attention to another infinite-sized class, \emph{runoff rules}, which generalizes $\irun$.

\begin{definition}[Runoff rules~\citep{freeman2014axiomatic}]\label{def:runoff}
    A \emph{runoff rule} is an SCC $\runoff$ associated with a sequence of scoring vectors $r=(\score^t)_{t \in \{2,\ldots,\ncand\}}$, where each $\score^t=(\score^t_j)_{j \in [t]}$ has length $t$ with $1=\score^t_1 \geq \score^t_2 \geq \ldots \geq \score^t_t=0$. On input profile $\profile$, the output of $\runoff$ is computed by iteratively eliminating candidates as follows. Whenever there are $t$ candidates remaining (for $t \in \{2,\ldots,\ncand\}$), each voter contributes $\score^t_j$ points to the candidate ranked $j^\text{th}$ among the remaining candidates. A candidate among those with the lowest total score is then eliminated. The final remaining candidate is the winner. As with other SCCs, ties are handled using parallel-universes tiebreaking: a candidate belongs to $\runoff(\profile)$ if there exists some way to break ties among the lowest-scoring candidates at each elimination step that leaves this candidate as the final winner.
\end{definition}

It is straightforward to see that Instant Runoff ($\irun$) is simply the runoff rule with $\score^t_j=\mathbb{I}[j=1]$ for all $t \in \{2,\ldots,\ncand\}$. We next show that \emph{all} runoff rules, regardless of $r$, are biranking-majority-consistent.

\begin{proposition}\label{prop:runoff-biranking}
    Any runoff rule is biranking-majority-consistent.
\end{proposition}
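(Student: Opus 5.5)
Fix a runoff rule $\runoff$ with scoring vectors $r=(\score^t)_{t}$. Take a profile $\profile$ in which every voter submits either $\vote{a}$ or $\vote{b}$. Let $n_a > n_b$ be the number of voters submitting each, and let $\cand{}^*$ be the top choice of $\vote{a}$. The case $\vote{a}=\vote{b}$ is trivial: then all voters agree, and the argument below with $n_b=0$ covers it. The plan is to show that in every branch of the parallel-universes elimination process, $\cand{}^*$ is never eliminated. Then $\cand{}^*$ is the unique element of $\runoff(\profile)$.

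The key invariant concerns a stage with $t \geq 2$ remaining candidates $R \ni \cand{}^*$. Restricting $\vote{a}$ to $R$ still ranks $\cand{}^*$ first, so each $\vote{a}$ voter gives $\cand{}^*$ exactly $\score^t_1=1$ point. Suppose $\cand{}^*$ were among the lowest-scoring candidates. In particular its score would be at most the score of $d$, where $d$ is the last-ranked candidate of $\vote{a}$ restricted to $R$; note $d \neq \cand{}^*$ since $t\geq 2$. From the $\vote{a}$ voters, $\cand{}^*$ gets $n_a$ points and $d$ gets $n_a\score^t_t=0$. From the $\vote{b}$ voters, $\cand{}^*$ gets some $n_b x \geq 0$ points and $d$ gets $n_b y \leq n_b$, where $x,y\in[0,1]$ are the scores of the positions these two candidates occupy in $\vote{b}$ restricted to $R$. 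The resulting inequality would be $n_a \leq n_a + n_b x \leq n_b y \leq n_b$, which contradicts $n_a>n_b$.

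The comparison with $d$ is the only non-routine step. It matters that $\vote{a}$ places $d$ at position $t$, which scores $0$, while $\cand{}^*$ sits at position $1$, which scores $1$. This gives a margin of $n_a$ that the $n_b$ opposing voters, each contributing a score difference of at most $1$, cannot overcome. Strictly, $\cand{}^*$ therefore scores strictly more than $d$, so $\cand{}^*$ is never a lowest-scoring candidate and is never eliminated under any tie-breaking choice. By induction on the number of eliminations, $\cand{}^*$ survives every branch and is the final winner of each, so $\runoff(\profile)=\{\cand{}^*\}$. I expect no real obstacle beyond stating this invariant cleanly and noting that it holds regardless of how ties were broken earlier.
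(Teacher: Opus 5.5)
Your proof is correct and is essentially the paper's argument. At each stage, the majority ranking's top choice gets at least $n_a$ points, while that ranking's last-ranked remaining candidate gets at most $n_b < n_a$. So the top choice is never among the lowest scorers and survives every tie-breaking branch.
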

\begin{proof}
    Fix a runoff rule $\runoff$ with $r=(\score^t)_{t \in \{2,\ldots,\ncand\}}$ and consider a profile $\profile$ consisting of only two types of rankings, $\{\vote{a},\vote{b}\}$, such that $\ell > \frac{\nvote}{2}$ voters submit the ranking $\vote{a}$. Let $\cand{a}$ be the top choice of  $\vote{a}$. We will show that at each round of running $\runoff$, there will always be a candidate with \emph{strictly} less total score than $\cand{a}$; hence, $\cand{a}$ can never be eliminated, and we have $\runoff(\profile)=\{\cand{a}\}$. Say there are $t$ candidates remaining, and $\cand{a}'$ is the bottom choice of  $\vote{a}$ among them. Since $\cand{a}'$ gets $\score^t_t=0$ points from each of the $\ell$ voters ranking $\vote{a}$, its total score is at most $(\nvote-\ell)\score^t_1=\nvote-\ell$. Since $\cand{a}$ gets $\score^t_1=1$ points from each of the $\ell$ voters ranking $\vote{a}$, its total score is at least $\ell \score^t_1+(\nvote-\ell)\score^t_t=\ell$. As $\ell>\frac{\nvote}{2}$, we have $\ell> \nvote-\ell$, completing the proof.
\end{proof}
    
Consequently, all of our impossibility results below for biranking-majority-consistent rules apply to runoff rules. \Cref{prop:runoff-biranking} also demonstrates that biranking majority consistency is strictly weaker than majority (and therefore Condorcet) consistency. As an example, consider Coombs' Rule, which is the runoff rule that in each round, the candidate with the most bottom-choice votes is eliminated (\ie, $\score^t_j=\mathbb[j\neq t]$ for all $t \in \{1,\ldots,\ncand\}$ and $j \in \{1,\ldots,t\}$). Using this rule, even if a majority of voters agree on their top-choice candidate, this candidate can get eliminated if the same group of voters disagree about their bottom choices; however, this will not happen if they all submit the same ranking. Hence, Coombs' Rule is biranking-majority-consistent but not majority-consistent.

Our first impossibility result focuses on so-called \emph{majoritarian} rules, which decide on a winner solely based on the majority defeats between pairs of candidates. Formally, an SCC $\scc$ is majoritarian if for any two profiles $\profile$, $\profile'$ we have $\sign\left(\margin{\profile}[\cand{},\cand{}']\right)=\sign\left(\margin{\profile'}[\cand{},\cand{}']\right)$ for any two candidates $\cand{},\cand{}' \in \cands$, then we have $\scc(\profile)=\scc(\profile')$. Our first impossibility result shows that majoritarian biranking-majority-consistent rules (and thus majoritarian Condorcet extensions) cannot get much better than the worst case for Condorcet extensions given in \Cref{remark:majority-consistent}. 

\begin{theorem}\label{thm:majoritarian}
    Any SCC $\scc$ that is (1) neutral, (2) biranking-majority-consistent, and (3) majoritarian will have $\spdeg{\scc} \geq \nvote-2$. Further, $\scc$ is not $\ksp{k}$ for any $k < \nvote-2$. 
\end{theorem}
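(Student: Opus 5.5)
The plan is to reduce everything to biranking profiles. Majoritarianity says $\scc$ depends only on the sign pattern of the margin matrix. So it suffices to build a profile $\profile$ and a manipulation $\manip{1}$ such that, for every $0\le k<\nvote-2$:
\begin{itemize}
\item the sign pattern of $\margin{\kprofile{k}{1}}$ equals that of some \emph{two-ranking} profile $Q_k$ with a strict majority ranking whose top is a candidate $\cand{}^{\mathrm{bad}}$;
\item the sign pattern of $\margin{\maniprofile{1}}$ equals that of a two-ranking profile $Q'$ whose majority ranking has top $\cand{}^{\mathrm{good}}$, with $\cand{}^{\mathrm{good}}\vote{1}\cand{}^{\mathrm{bad}}$.
\end{itemize}
Biranking majority consistency then gives $\scc(\kprofile{k}{1})=\{\cand{}^{\mathrm{bad}}\}$ and $\scc\maniprofile{1}=\{\cand{}^{\mathrm{good}}\}$. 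Both outputs are singletons, so the tiebreaker $\tieorder$ plays no role. This shows $\scc$ is not $\ksp{k}$ for each such $k$, and hence $\spdeg{\scc}\ge\nvote-2$. Neutrality is only used to fix voter 1's ranking as $\cand{1}\vote{1}\cand{2}\vote{1}\cdots$ without loss of generality.

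For the construction, I take $\cand{}^{\mathrm{bad}}=\cand{3}$ and $\cand{}^{\mathrm{good}}=\cand{2}$. Let $\sigma$ be a ranking with $\cand{3}\succ\cand{2}\succ\cdots\succ\cand{1}$, and put $\nvote-2$ voters on $\sigma$. One further ``swing'' voter votes a ranking $\tau$ that agrees with $\sigma$ on every pair on which voter 1 and $\sigma$ disagree, except on $\{\cand{2},\cand{3}\}$, where $\tau$ puts $\cand{2}$ above $\cand{3}$.

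\textbf{Copies.} In $\kprofile{k}{1}$ there are $k+1\le\nvote-2$ copies of $\vote{1}$ against $\nvote-2$ copies of $\sigma$. On every pair where $\vote{1}$ and $\sigma$ disagree, the $\sigma$-side weakly outnumbers, and $\tau$ breaks any tie in $\sigma$'s favour. The exception is $\{\cand{2},\cand{3}\}$, where $\sigma$ prefers $\cand{3}$ and both $\vote{1}$ and $\tau$ prefer $\cand{2}$. For this pair I must adjust so that it still resolves toward $\cand{3}$ (or is irrelevant). Concretely, I would let $\tau$ prefer $\cand{3}$ on that pair too, and instead make the manipulation act on a pair where the two groups are balanced. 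With that fix, the resulting sign pattern is exactly that of $(k+1)\cdot\vote{1}$ plus $(\nvote-1)\cdot\sigma$, in which $\sigma$ is a strict majority. So the winner is $\cand{3}$.

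\textbf{Manipulation.} Voter 1 reports a $\manip{1}$ chosen so that the $\nvote$ voters of $\maniprofile{1}$ have the sign pattern of a biranking profile whose majority ranking is $\sigma$ with $\cand{2}$ and $\cand{3}$ swapped. This requires the $\{\cand{2},\cand{3}\}$ margin in $\profile$ to be small enough (at most about $1$ in absolute value) that a single vote flips it. So the composition of the non-$\sigma$ voters has to be tuned so that this one margin is near-tied, while all other margins keep the signs above for every $k$ up to $\nvote-3$.

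The main obstacle is exactly this tension. The $\{\cand{2},\cand{3}\}$ margin must be flippable by a single manipulated vote, yet stay on the $\cand{3}$ side after adding up to $\nvote-3$ truthful copies, each of which favours $\cand{2}$. That is impossible if only $\{\cand{2},\cand{3}\}$ moves. So the real design will instead have the manipulation flip a different near-tied pair on which voter 1's truthful vote is \emph{aligned} with the $\sigma$-majority. That way truthful copies only reinforce the existing pattern, while a lie reverses one edge and changes which biranking profile the tournament mimics. I would first try this with $\cand{1}$ ranked last by $\sigma$ and a near-tie involving $\cand{2}$ against a third candidate. I would then check carefully, with parity handled through the swing voter, that both tournaments are realisable by two-ranking profiles with a strict majority, and that the bound degrades to exactly $k<\nvote-2$.
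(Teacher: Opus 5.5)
\textbf{Your reduction cannot work for any choice of profile.} In a two-ranking profile where one ranking has a strict majority, the majority relation is exactly that ranking. Pairs on which the two rankings disagree go to the majority ranking, with margin equal to the difference in counts; the remaining pairs are unanimous. So your first bullet forces $\cand{}^{\mathrm{bad}}$ to be a Condorcet winner of $\kprofile{k}{1}$, and your second bullet forces $\cand{}^{\mathrm{good}}$ to be a Condorcet winner of $\maniprofile{1}$. These two requirements are incompatible:
\begin{itemize}
\item Since $\cand{}^{\mathrm{good}}\vote{1}\cand{}^{\mathrm{bad}}$, every copy of $\vote{1}$ lowers the margin of $\cand{}^{\mathrm{bad}}$ over $\cand{}^{\mathrm{good}}$ by one. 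Hence $\margin{\profile}[\cand{}^{\mathrm{bad}},\cand{}^{\mathrm{good}}]=\margin{\kprofile{k}{1}}[\cand{}^{\mathrm{bad}},\cand{}^{\mathrm{good}}]+k>0$.
\item Voter 1's truthful ballot already counts against $\cand{}^{\mathrm{bad}}$ on this pair. So any misreport can only weakly increase $\margin{\profile}[\cand{}^{\mathrm{bad}},\cand{}^{\mathrm{good}}]$, and $\cand{}^{\mathrm{good}}$ never beats $\cand{}^{\mathrm{bad}}$ in $\maniprofile{1}$.
\end{itemize}
This is just the fact that a single voter cannot profitably manipulate away a Condorcet winner. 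The tension you noticed on $\{\cand{2},\cand{3}\}$ is therefore not a tuning problem. Your proposed fix of flipping some other near-tied pair cannot help either. For $\cand{}^{\mathrm{good}}$ to top a linear majority relation, it must beat $\cand{}^{\mathrm{bad}}$, and that is exactly the edge voter 1 can never push her way.

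\textbf{The missing idea} is that the copies profile must have \emph{no} Condorcet winner. This is where neutrality, which you use only as a relabeling convenience, and the tiebreaker, which you say plays no role, do the real work. The paper's profile is:
\begin{itemize}
\item voter 1 with $\cand{1}\vote{1}\cand{2}\vote{1}\cand{3}\vote{1}\cdots$;
\item $\lceil(\nvote-1)/2\rceil$ voters with $\cand{2}\succ\cand{3}\succ\cand{1}\succ\cand{4}\succ\cdots$;
\item $\lfloor(\nvote-1)/2\rfloor$ voters with $\cand{3}\succ\cand{1}\succ\cand{2}\succ\cand{4}\succ\cdots$.
\end{itemize}
For $1\le k<\nvote-2$, the majority relation of $\kprofile{k}{1}$ is the strict cycle $\cand{1}\to\cand{2}\to\cand{3}\to\cand{1}$, and all three beat everyone else. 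The binding margin is $\margin{\kprofile{k}{1}}[\cand{3},\cand{1}]=\nvote-2-k$, which is where the bound comes from. The cyclic relabeling $\cand{1}\mapsto\cand{2}\mapsto\cand{3}\mapsto\cand{1}$ preserves this sign pattern. Majoritarianity plus neutrality then show that $\scc(\kprofile{k}{1})$ contains either all of $\cand{1},\cand{2},\cand{3}$ or none of them. With $\tieorder$ ranking $\cand{3}$ first, $\scc_\tiebreaker(\kprofile{k}{1})\notin\{\cand{1},\cand{2}\}$. For the manipulation, voter 1 joins the $\cand{2}\succ\cand{3}\succ\cand{1}\succ\cdots$ group. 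This yields a genuine two-ranking profile with a strict majority, so biranking majority consistency gives $\cand{2}$ directly, without invoking majoritarianity. The case $k=0$, where the $\cand{1}$--$\cand{2}$ margin can be $0$ for even $\nvote$, is covered separately by Gibbard--Satterthwaite.
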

\begin{proof}
    Fix any $1 \leq k < \nvote-2$ and an $\scc$ satisfying the conditions above. We will show that $\scc$ is not $\ksp{k}$ (the claim for $k=0$ follows from the Gibbard-Satterthwaite theorem, as any Condorcet extension equipped with a tiebreaker is onto and nondictatorial). Consider the following profile:
    \begin{center}
        \begin{tabular}{c|l}
            \cellcola Voter 1 & \cellcola $\bcand{1} \vote{1} \bcand{2} \vote{1} \bcand{3} \vote{1} \acand{4} \vote{1} \aldots \vote{1} \acand{\ncand}$
            \\ \cellcolb  $\left\lceil (\nvote-1)/2 \right\rceil$ voters (Group 1) & \cellcolb $\bcand{2} \vote{i} \bcand{3} \vote{i} \bcand{1} \vote{i} \acand{4} \vote{i} \aldots \vote{i} \acand{\ncand}$
            \\\cellcola $\left\lfloor (\nvote-1)/2 \right\rfloor$ voters (Group 2)&  \cellcola $\bcand{3} \vote{i} \bcand{1} \vote{i} \bcand{2} \vote{2} \acand{4} \vote{i} \aldots \vote{i} \acand{\ncand}$
        \end{tabular} 
    \end{center}
    Assume that the tiebreaker ranks $\bcand{3} \tieorder \bcand{2} \tieorder \bcand{1} \tieorder \acand{4} \tieorder \aldots \tieorder \acand{\ncand}$. Then we have
    \begin{align*}
        \margin{\kprofile{k}{1}}[\bcand{1}, \bcand{2}]&= (k+1) + \left\lfloor (\nvote-1)/2 \right\rfloor-\left\lceil (\nvote-1)/2 \right\rceil \geq 1,\\
        \margin{\kprofile{k}{1}}[\bcand{2}, \bcand{3}]&= (k+1) +\left\lceil (\nvote-1)/2 \right\rceil- \left\lfloor (\nvote-1)/2 \right\rfloor \geq 2\text{, and}\\
        \margin{\kprofile{k}{1}}[\bcand{3}, \bcand{1}]&= \left\lceil (\nvote-1)/2 \right\rceil+ \left\lfloor (\nvote-1)/2 \right\rfloor - (k+1) \geq 1.
    \end{align*}
    Thus, $\bcand{1}, \bcand{2}$, and $\bcand{3}$ form a 3-cycle in terms of majority defeats (\ie, a Condorcet cycle), and each of them defeats $\acand{\ell}$ for any $\ell \geq 4$. By neutrality, this implies that $\{\bcand{1},\bcand{2},\bcand{3}\} \cap \scc(\kprofile{k}{1}) \neq \emptyset$ if and only if $\{\bcand{1},\bcand{2},\bcand{3}\} \subseteq \scc(\kprofile{k}{1})$, as we can permute these three candidates in a cycle without changing the graph of majority defeats. As $\tieorder$ ranks $\cand{3}$ first, this implies that we have $\scc_\tiebreaker(\kprofile{k}{1}) \notin \{\cand{1},\cand{2} \}$ and so $\cand{2} \vote{1} \scc_\tiebreaker(\kprofile{k}{1})$. Consider a manipulation where voter 1 also reports $\bcand{2} \manip{1} \bcand{3} \manip{1} \bcand{1} \manip{1} \acand{4} \manip{1} \aldots \manip{1} \acand{\ncand}$, joining Group 1. Then, by biranking-majority-consistency, we have $\scc_\tiebreaker\maniprofile{1}= \cand{2} \vote{1} \scc_\tiebreaker(\kprofile{k}{1})$, completing the proof.
\end{proof}

Can we hope for a much lower manipulation potential if we remove the majoritarian requirement? After all, Plurality satisfies (biranking) majority consistency, and outperforms the lower bound of \Cref{thm:majoritarian} by a factor of (roughly) $\frac{1}{2}$. As our next and final theorem shows, this is essentially the best we can achieve with \emph{any} biranking-majority-consistent rule (and thus with any runoff rule or any Condorcet extension).

\begin{theorem}\label{thm:biranking}
    Whenever $\nvote$ is odd, any anonymous biranking-majority-consistent SCF $\scf$ has $\spdeg{\scf} \geq \frac{\nvote-1}{2}$.
\end{theorem}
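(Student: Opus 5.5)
The plan is to exhibit a single value $k=\frac{\nvote-3}{2}$ for which $\scf$ is not $\ksp{k}$. By \Cref{def:manipot}, this gives $\spdeg{\scf} > \frac{\nvote-3}{2}$, that is, $\spdeg{\scf}\geq\frac{\nvote-1}{2}$. This is why the statement, unlike the others, does not claim that $\scf$ fails $\ksp{k'}$ for every smaller $k'$. The difficulty is that biranking majority consistency constrains $\scf$ only on two-ranking profiles, whereas the augmented profile will necessarily involve three rankings, about which we know nothing. We also cannot use neutrality, since $\scf$ is an SCF. The idea is therefore to fix a \emph{single} three-ranking profile $Q$, look at whatever $\scf(Q)$ turns out to be, and only then choose which ranking plays the role of the manipulator's truth.

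\textbf{Construction.} Take the three cyclic rankings
\[R_1: \cand{1}\succ\cand{2}\succ\cand{3}\succ\cdots,\quad R_2: \cand{2}\succ\cand{3}\succ\cand{1}\succ\cdots,\quad R_3: \cand{3}\succ\cand{1}\succ\cand{2}\succ\cdots,\]
with the remaining candidates ordered identically below. Let $Q$ consist of $\frac{\nvote-1}{2}$ copies of each of $R_1,R_2,R_3$, and set $x\coloneq\scf(Q)$. For any $t\in\{1,2,3\}$, consider the $\nvote$-voter profile $\profile^{(t)}$ in which voter~1 reports $R_t$ and each of the other two rankings appears $\frac{\nvote-1}{2}$ times. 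Adding $k=\frac{\nvote-3}{2}$ copies of voter~1's vote yields $\frac{\nvote-1}{2}$ copies of each ranking. By anonymity this is $Q$ up to a permutation of voters, so $\scf(\profile^{(t)}\vplus k(R_t))=x$ regardless of $t$. (We need $\nvote\geq 3$ so that $k\geq 0$; for $\nvote=1$ the bound is trivial.)

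\textbf{Choosing the manipulator.} In $\profile^{(t)}$, if voter~1 switches to one of the two other rankings $R_s$, the resulting profile contains only $R_s$ (with $\frac{\nvote+1}{2}$ voters) and the remaining ranking (with $\frac{\nvote-1}{2}$ voters). Biranking majority consistency then forces the winner to be the top choice of $R_s$. The choice of $t$ and $s$ depends on $x$:
\begin{itemize}
\item If $x\notin\{\cand{1},\cand{2},\cand{3}\}$, take $t=1$ and $s=2$; the outcome becomes $\cand{2}$, which $R_1$ ranks above $x$.
\item If $x=\cand{1}$, take $t=2$ and $s=3$; this gives $\cand{3}$, which $R_2$ ranks above $\cand{1}$.
\item If $x=\cand{2}$, take $t=3$ and $s=1$; this gives $\cand{1}$, which $R_3$ ranks above $\cand{2}$.
\item If $x=\cand{3}$, take $t=1$ and $s=2$; this gives $\cand{2}$, which $R_1$ ranks above $\cand{3}$.
\end{itemize}
In each case the manipulator's truth $R_t$ ranks $x$ lowest among $\{\cand{1},\cand{2},\cand{3}\}$, and she manipulates toward her second choice among the three. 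Her manipulation strictly beats the augmented outcome, so $\scf$ is not $\ksp{k}$ for $k=\frac{\nvote-3}{2}$.

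The only delicate step is the realization that the unknown value $\scf(Q)$ must be handled by the cyclic symmetry of the three rankings, with the manipulator chosen afterwards. Once that is in place, the remaining checks are short verifications of vote counts and of the preference comparisons listed above.
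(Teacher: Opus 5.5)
Your proposal is correct and follows the paper's proof: the same three-cycle profile $Q$, the same choice of a group that receives neither its first nor second choice, and the same manipulation in which voter~1 joins the group topped by her second choice. Your explicit case analysis on $x=\scf(Q)$ spells out what the paper compresses into ``WLOG, say this is Group 1.'' Since $\scf$ need not be neutral, that relabeling really does rest on letting the choice of group determine voter~1's true ranking, as you make explicit.
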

\begin{proof}
    Fix some odd $\nvote$, and consider a profile with $\frac{3(\nvote-1)}{2}$ voters divided into three groups:
    \begin{center}
        \begin{tabular}{c|l}
            \multicolumn{2}{c}{\cellcolor{green!10} Profile 1}\\
            \cellcola $\frac{\nvote-1}{2}$ voters (Group 1) & \cellcola $\bcand{1} \vote{i} \bcand{2} \vote{i} \bcand{3} \vote{i} \acand{4} \vote{i} \aldots \vote{i} \acand{\ncand}$
            \\ \cellcolb $\frac{\nvote-1}{2}$ voters (Group 2) & \cellcolb $\bcand{2} \vote{i} \bcand{3} \vote{i} \bcand{1} \vote{i} \acand{4} \vote{i} \aldots \vote{i} \acand{\ncand}$
            \\\cellcola $\frac{\nvote-1}{2}$ voters (Group 3) &  \cellcola $\bcand{3} \vote{i} \bcand{1} \vote{i} \bcand{2} \vote{i} \acand{4} \vote{i} \aldots \vote{i} \acand{\ncand}$
        \end{tabular} 
    \end{center}
    Since $\scf$ will output only one winner, at least one out of the three groups of voters will get neither their first nor their second choice as the winner. WLOG, say this is Group 1, \ie, the winner that $\scf$ outputs on Profile 1 is not in $\{\cand{1},\cand{2}\}$. Then remove $k=\frac{\nvote-1}{2}-1$ copies of Group 1, in order to get a second profile:
    \begin{center}
        \begin{tabular}{c|l}
            \multicolumn{2}{c}{\cellcolor{green!10} Profile 2}\\
            \cellcola Voter 1 & \cellcola $\bcand{1} \vote{1} \bcand{2} \vote{1} \bcand{3} \vote{1} \acand{4} \vote{1} \aldots \vote{i} \acand{\ncand}$
            \\ \cellcolb $\frac{\nvote-1}{2}$ voters (Group 2) & \cellcolb $\bcand{2} \vote{i} \bcand{3} \vote{i} \bcand{1} \vote{i} \acand{4} \vote{i} \aldots \vote{i} \acand{\ncand}$
            \\\cellcola $\frac{\nvote-1}{2}$ voters (Group 3) &  \cellcola $\bcand{3} \vote{i} \bcand{1} \vote{i} \bcand{2} \vote{i} \acand{4} \vote{i} \aldots \vote{i} \acand{\ncand}$
        \end{tabular} 
    \end{center}
    If we label Profile 2 as $\profile$, then Profile 1 is just $\kprofile{k}{1}$. Consider a manipulation in $\profile$ (Profile 2) where voter 1 also reports the ranking of Group 2. Then, by biranking-majority-consistency, we have $\scf\maniprofile{1}= \cand{2} \vote{1} \scf(\kprofile{k}{1})$, showing that $\scf$ is not $\ksp{\left(\frac{\nvote-1}{2}-1\right)}$.
\end{proof}

Since a biranking-majority-consistent SCC equipped with any tiebreaker induces a biranking-majority-consistent SCF, \Cref{thm:biranking} also applies to multi-valued rules (\cf.\ discussion on ties in \Cref{sec:manipot}).

\begin{remark}
    It is worth pointing out that the odd $\nvote$ assumption is quite essential to \Cref{thm:biranking}. This is unlike our previous results, in which divisibility assumptions were mostly employed to construct simple profiles achieving the desired lower bounds, and it is likely possible to achieve similar bounds with minimal loss using a more cumbersome analysis with floor and ceiling functions (see, for example, \Cref{thm:plur,thm:majoritarian}). A direct adaptation of the proof of \Cref{thm:biranking} to even $\nvote$ would enable the manipulator to make her second-favorite candidate a \emph{weak Condorcet winner} (\ie, it defeats \emph{or ties with} any other candidate). Thus, a similar bound can be proven, regardless of the parity of $\nvote$, for rules that pick the unique weak Condorcet winner whenever one exists. It is currently an open problem whether we can lower bound the manipulation potential of all Condorcet extensions for even $\nvote$. The proof of \Cref{thm:biranking} also does not make any assumptions about the behavior of the rule when we have strictly less than $k=\frac{\nvote-1}{2}-1$ copies, since a violation of $\ksp{k}$ is sufficient to prove $\spdeg{\scf} \geq k+1$ (\cf.\ discussion on non-monotonicities in \Cref{sec:manipot}).
\end{remark}

\section{Discussion and Future Work}\label{sec:future}

In this paper, we proposed to measure the comparative advantage of a strategic voter from a resource augmentation perspective, asking how many additional truthful copies of her vote would suffice to ensure an outcome at least as good as her best manipulation. This novel approach to manipulability gives rise to natural future directions and open problems, some of which we discuss below.

\paragraph{(a) Beating Borda Count}
As we have seen in \Cref{sec:rules}, in elections with many voters, the manipulation potential of Borda Count is significantly lower than any other rule we have studied. While \Cref{thm:pos-best,thm:biranking} rule out two wide classes of voting rules from defeating Borda Count in this regime, a natural next question is to investigate how far this result generalizes. On the positive side, can we identify voting rules---either from prior literature or by designing new ones---with manipulation potentials that are independent of $\nvote$ and that outperform Borda Count? On the negative side, can we identify some minimal properties that are sufficient for further impossibility results, showing that such rules will not outperform Borda Count? Analogously, in elections where candidates outnumber voters, can we find or design rules that achieve lower manipulation potential than Plurality?

\paragraph{(b) Automating the Search} 
A promising avenue to investigate (a) and related open problems is to adopt an \emph{automated reasoning} approach, \ie, to use computer-generated proofs to gain intuition on small instances, and then generalize these via theoretical analysis. Indeed, after \citet{Tang08:A,Tang09:Computer} demonstrated the suitability of this approach to social choice theory by reproving the Gibbard–Satterthwaite impossibility theorem (as well as the impossibility theorem of \citet{Arrow63:Social}), tools such as SAT solvers have paved the way to many new and important results \citep{Geist11:Automated,Brandt16:Finding,Brandl21:Distribution,moulin1988condorcet,Brandt17:Optimal,Brandl19:Strategic,Endriss20:Analysis}. Similarly, one can use such tools to search for rules with low manipulation potential, or for proofs strengthening Gibbard–Satterthwaite (which already proves that no nondictatorial rule can have $\spdeg{\scf}=0$) to give stronger lower bounds. A key challenge is that manipulation potentials inherently deal with elections with a varying number of voters, which may be difficult to encode with SAT solvers and, even if encoded, can lead to unnatural rules that behave differently under different $\nvote$. A potential solution to this obstacle is to strengthen our anonymity requirement (\Cref{sec:manipot}) to \emph{homogeneous rules}, for which only the relative frequency of each ranking matters. By encoding these frequencies, one can seek to minimize manipulation potential via, for example, (mixed integer) linear programs~\citep{Peters25:Core,Berker25:Edge}\footnote{Linear programs have also been adopted in the adjacent field of \emph{automated mechanism design}; see, \eg, \citep{Guo10:Computationally}.} or SMT solvers~\citep{Brandl18:Proving}.

\paragraph{(c) Beyond Single-Winner}
While our work focuses on analyzing the manipulation potential of rules that map voters' rankings to a single winner (or a tie, later to be broken), our framework can be easily adapted to rules with other input formats (\eg,
ratings/scores, approval ballots) and desired outputs (an aggregate ranking, a committee of winners, or a probability distribution over candidates). As a concrete example, take approval-based multiwinner elections, where the goal is to output a committee of candidates of a fixed size. Here, even weak forms of proportionality are incompatible with strategyproofness~\citep{Peters18:Proportionality}. A natural question is to investigate the manipulation potential of multiwinner rules. Can we identify or design rules with strong proportionality guarantees (\eg, see properties such as JR, EJR, EJR+, or FJR~\citep{Lackner22:Multi}) \emph{and} low manipulation potential? Or does assuming such properties impose strong lower bounds on the manipulation potential? What is the manipulation potential of known multiwinner rules, such as Proportional Approval Voting, or the Method of Equal Shares~\citep{Peters20:Proportionality}? We consider such extensions to be valuable next steps for the study of manipulation potential.

\paragraph{(d) Budgeted Manipulations} One can view the manipulation potential as the \emph{exchange rate} between a single manipulation and adding truthful copies. Given a rule with manipulation potential $k$, a voter is better off finding $k$ copies of herself if the total cost to do this is less than the cost of manipulation (see interpretation (1) in \Cref{sec:how-many}). An interesting follow-up is to consider whether a voter can benefit from \emph{combining} these two strategies (manipulating and finding copies). More concretely, consider a model that specifies the cost a voter must incur to find her (truthful) copies, as well as the cost of manipulations of different ``magnitude'' (\eg, the cost can be increasing with the swap distance between her true vote and the untruthful one). If the voter has a fixed budget to begin with, what is her optimal way of allocating it between finding (some number of) copies and (partial) manipulation? Formally analyzing this model can put our results into a broader context.

\paragraph{(e) Group-Strategyproofness} In this paper, we have focused on manipulations by individual voters. More generally, it is of interest to study the robustness of voting rules to \emph{group} manipulations, \ie, when a \emph{coalition of voters} misreport their preferences to get an outcome that they all strictly prefer to the original winner. One natural extension of our manipulation potential framework (\Cref{def:manipot}) to this setting is to ask how many copies any subset of voters would need (where each voter in the subset is copied an equal number of times) to ensure that no group manipulation (in the original profile) leads to an outcome that is strictly preferred by all voters in this subset to the outcome with the copies. Our results do not generalize to this setting, and even trivial upper bounds such as $\nvote-1$ for majority-consistent rules (\Cref{remark:majority-consistent}) no longer hold, as adding (an equal number of) copies of a subset of voters, even if they all benefit from the same manipulation, is no longer guaranteed to create a majority or a Condorcet winner. Consequently, some Condorcet extensions can even have unbounded manipulation potential in the group manipulability setting (such as Copeland, see \Cref{app:group}). 
Alternative extensions (where the new copies need not be distributed uniformly among the coalition members) may give rise to different results, and studying these is a natural and important future direction.

\paragraph{(f) Relationship to Other Measures} Comparing our results with prior work, there appears to be no clear correlation between the manipulation potential of a rule and how it fares under other notions of manipulability. For example, while we have shown that Borda Count $(\borda)$ has a significantly lower manipulation potential than other rules (at least with many voters), several papers that study the frequency with which a manipulation can occur (see \Cref{sec:related}) conclude that $\borda$ is ``more manipulable'' than other rules in this regard~\citep{durand2015towards,chamberlin1985investigation,smith1999manipulability}. This is not surprising, since the fraction of profiles in which a voter is able to manipulate does not reveal what she could have gained in those profiles by increasing her voter weight instead; thus, it is incomparable with our resource-augmentation approach. On the other hand, there are also rules that satisfy weaker forms of strategyproofness, but end up having very large manipulation potentials. For example, various majoritarian Condorcet extensions satisfy ``weak-strategyproofness'' when assuming voters do not know what tiebreaker is going to be used and are acting safely~\citep{brandt2015setmono,brandt2023characterizing}, but all such rules have basically the highest possible manipulation potential for any majority-consistent rule (\Cref{thm:majoritarian}). For other rules that achieve weak-strategyproofness, the manipulation potential can be unbounded (see \Cref{app:omnipot}), demonstrating that once we remove the aforementioned assumption on voters' ignorance of the tiebreaker, no number of truthful copies can match what a voter can achieve by manipulating. A compelling avenue for future research is to combine our metric with other degrees of manipulability. This direction can, for example, entail computing the \emph{expected} manipulation potential of rules under different voter distributions / empirical data, or defining an analogously weaker version $\ksp{k}$ (\Cref{def:kmanip}) that incorporates voter uncertainty about the tiebreaker.

\paragraph{Conclusion} When confronted about the manipulability of his voting method, 18$^{\text{th}}$ century mathematician Jean-Charles de Borda famously claimed ``My scheme is only intended for honest men''~\citep{black1958theory}. 
As we show in this paper, his rule can in fact tolerate possible manipulators who would be better off getting a few honest friends to vote, and meanwhile the situation can be much grimmer for the class of rules named after his intellectual rival Marquis de Condorcet (at least for large electorates). Just how many of these honest copies a voter needs before abandoning manipulation under different models and rules is an exciting research agenda that can significantly enhance our understanding of manipulation under limited resources.

\section*{Acknowledgments}
We would like to thank Felix Brandt, Edith Elkind, Paul Gölz, Lirong Xia, and Brian Hu Zhang for helpful discussions and feedback at various stages of this project. R.E.B., V.C., J.L., and C.O.\ thank the Cooperative AI Foundation, Macroscopic Ventures (formerly Polaris Ventures
/ the Center for Emerging Risk Research), and Jaan Tallinn's donor-advised fund at Founders Pledge for financial support. R.E.B.\ is also supported by the Cooperative AI PhD Fellowship. E.H.\ is supported by the Israel Science Foundation grants 2697/22 and 3007/24. C.O.\ was supported by an FLI PhD Fellowship.

\printbibliography

\appendix
\crefalias{section}{appendix}
\crefalias{subsection}{appendix}

\section{Proofs Deferred From the Main Body}

\subsection{Proof of \Cref{thm:pos-upper}}\label{app:pos-upper}
\posup*
\begin{proof}
    Fix some $\profile$, manipulation $\manip{1}$, tiebreaker $\tieorder$, and some $\displaystyle k \geq \max_{\ell \in \{1,2,\ldots, \ncand-1\}} \upperpos{\ell}$. We will prove that $\pos_\tiebreaker(\kprofile{k}{1}) \voteq{1} \pos_\tiebreaker\maniprofile{1}$. As usual, the truthful preference order of voter 1 is $\cand{1} \vote{1} \cand{2} \vote{1} \ldots \vote{1} \cand{\ncand}$. For any candidate $\cand{} \in \cands$, say $\bs(\cand{}), \bs'(\cand{}),$ and $\bs''(\cand{})$ are the total scores assigned to $\cand{}$ by $\pos$ in $\profile$, $\maniprofile{1}$, and $\kprofile{k}{1}$, respectively. Say $\bs_1(\cand{}), \bs'_1(\cand{})$, and $\bs''_1(\cand{})$ are the contributions of voter 1 to $\bs(\cand{}),\bs'(\cand{})$ and $\bs''(\cand{})$, respectively. Lastly, say  $\bs_{-1}(\cand{}) = \bs(\cand{})-\bs_1(\cand{}), \bs'_{-1}(\cand{}) = \bs'(\cand{})-\bs'_1(\cand{})$, and $\bs''_{-1}(\cand{}) = \bs''(\cand{})-\bs''_1(\cand{})$. First notice that since $k \geq 0$, we have $\pos_\tiebreaker(\kprofile{k}{1}) \voteq{1} \pos_\tiebreaker(\profile)$, since adding more copies of voter 1 can only help the candidates it prefers to $\pos_\tiebreaker(\profile)$. So if $\pos_\tiebreaker(\profile) \voteq{1} \pos_\tiebreaker\maniprofile{1}$, we are done. Otherwise, assume we have $\pos_\tiebreaker\maniprofile{1} = \cand{\ell} \vote{1} \cand{i} = \pos_\tiebreaker(\profile)$ for some $\ell<i$. Since $\cand{\ell} \in B'$, we have $\bs'(\cand{\ell}) \geq \bs'(\cand{i})$, with the inequality strict if $\cand{i} \tieorder \cand{\ell}$. As $\bs_1(\cand{\ell})=\score_\ell$, $\bs_1(\cand{i})=\score_i$ and $\bs_1'(\cand{\ell})-\bs_1'(\cand{i}) \leq 1$, we have
    \begin{align}
        \bs(\cand{\ell})= \bs'(\cand{\ell})-\bs'_1(\cand{\ell})+\bs_1(\cand{\ell}) \geq \bs'(\cand{i})-\bs_1'(\cand{i}) -1 + \score_\ell= \bs(\cand{i})-\score_i  -1+\score_\ell,\label{eq:posineq}
    \end{align}
    with the inequality strict if $\cand{i} \tieorder \cand{\ell}$. Take any $j \in \{\ell+1 ,\ell+2, \ldots, \ncand\}$. Below, we will show that $\pos_\tiebreaker(\kprofile{k}{1}) \neq \cand{j}$, hence proving $\pos_\tiebreaker(\kprofile{k}{1}) \voteq{1} \cand{\ell} = \pos_\tiebreaker\maniprofile{1}$ as desired. Since $\pos_\tiebreaker(\profile)=\cand{i}$ we have either
    \begin{align}
    \bs(\cand{i}) > \bs(\cand{j})\text{ or }[\bs(\cand{i}) = \bs(\cand{j})\text{ and }\cand{i} \tieqorder \cand{j}]. \label{eq:posineq2}
    \end{align}
    To complete the proof, we will consider two cases.
    
    \textbf{Case 1:} $\upperpos{\ell}=\frac{1}{\score_{\ell}-\score_{\ell+1}}-1$, so $k \geq \frac{1}{\score_{\ell}-\score_{\ell+1}}-1$. In this case, we must have $\score_\ell > \score_{\ell+1}$ (since the second term inside the $\min$ in $\upperpos{\ell}$ is always finite). Combining \eqref{eq:posineq} and \eqref{eq:posineq2}, we get
    \begin{align*}
        \bs''(\cand{\ell}) = \bs(\cand{\ell})+ k\score_\ell &\geq \bs(\cand{i})+\score_\ell-\score_i-1+k\score_\ell \geq  \bs(\cand{j})+\score_\ell-\score_i-1+k(\score_\ell-\score_j)+k\score_j\\& \geq  \bs''(\cand{j})+\score_\ell-\score_{\ell+1} 
        -1+k(\score_\ell-\score_{\ell+1}) \geq \bs''(\cand{j}),
    \end{align*}
    with the inequality strict if $\cand{i} \tieorder \cand{\ell}$ or $\cand{j} \tieorder \cand{i}$. This implies we have either $\bs''(\cand{\ell}) > \bs''(\cand{j})$ or [$\bs''(\cand{\ell}) = \bs''(\cand{j})$ and $\cand{\ell}\tieorder\cand{i} \tieqorder \cand{j}$] (recall that $\ell \neq i$). Either case implies $\pos_\tiebreaker(\kprofile{k}{1}) \neq \cand{j}$, completing the proof.

    \textbf{Case 2:} $\upperpos{\ell}= \frac{(\nvote-1)(\upave{2}-\downave{\posbar{\ell}})+\frac{1}{2}}{\upave{\posbar{\ell}}-\score_{\ell+1}}$, so $k \geq\frac{(\nvote-1)(\upave{2}-\downave{\posbar{\ell}})+\frac{1}{2}}{\upave{\posbar{\ell}}-\score_{\ell+1}}$. We will show that there exists a $t \in \{1,2,\ldots, \posbar{\ell}\}$ such that $\bs''(\cand{t})>\bs''(\cand{j})$, which completes the proof. Since each voter can rank at most one candidate in each position, we have
    \begin{align}
        \frac{1}{\posbar{\ell}}\sum_{t=1}^{\posbar{\ell}} \bs_{-1}(\cand{t}) \geq \frac{(\nvote-1)}{\posbar{\ell}} \sum_{t'=\ncand-\posbar{\ell}+1}^{\ncand} \score_{t'}=(\nvote-1)\cdot \downave{\posbar{\ell}},\label{eq:posineq3}
    \end{align}
    since any $\posbar{\ell}$ candidates must get (in total) at least the bottom $\posbar{\ell}$ scores from each voter. 
    Similarly, we have $\bs_{-1}(\cand{j})+\bs_{-1}(\cand{\ell}) \leq (\nvote-1)(\score_1+\score_2)= 2(\nvote-1)\cdot \upave{2}$, since any two candidates can get (in total) at most the top two scores from each voter. Combining this with \eqref{eq:posineq} and \eqref{eq:posineq2}, we get
    \begin{align*}
        2\bs_{-1}(\cand{j})-1 =\bs_{-1}(\cand{j})+ \bs(\cand{j})-\score_j-1&\leq \bs_{-1}(\cand{j})+ \bs(\cand{i})-\score_j-1
        \\ &\leq \bs_{-1}(\cand{j})+ (\bs(\cand{\ell})+\score_i+1-\score_\ell)-\score_j-1\\
        & = \bs_{-1}(\cand{j})+ \bs_{-1}(\cand{\ell})+\score_i-\score_j\\
        &\leq  2(\nvote-1)\cdot \upave{2}+\score_i-\score_j\\
        \Rightarrow \bs_{-1}(\cand{j}) &\leq (\nvote-1)\cdot \upave{2}+ \frac{1+\score_i-\score_j}{2}.
    \end{align*}
    Together with \eqref{eq:posineq3}, this implies
    \begin{align*}
        \bs''(\cand{j})- \frac{1}{\posbar{\ell}} \sum_{t=1}^{\posbar{\ell}}\bs''(\cand{t}) &= \left(\bs_{-1}(\cand{j})+ (k+1) \score_j\right)- \frac{1}{\posbar{\ell}} \sum_{t=1}^{\posbar{\ell}}\left(\bs_{-1}(\cand{t}) +(k+1)\score_t \right) \\
        &\leq (\nvote-1)\cdot \upave{2}+ \frac{1+\score_i-\score_j}{2}+ (k+1) \score_j-(\nvote-1)\cdot \downave{\posbar{\ell}}-(k+1) \cdot \upave{\posbar{\ell}} \\
        &= (\nvote-1)\cdot (\upave{2}-\downave{\posbar{\ell}})+ \frac{1+\score_i-\score_j}{2}- (k+1)\cdot(\upave{\posbar{\ell}}-\score_j)\\
        &= (\nvote-1)\cdot (\upave{2}-\downave{\posbar{\ell}})+ \frac{1+\score_i+\score_j}{2}-\upave{\posbar{\ell}} - k \cdot (\upave{\posbar{\ell}}-\score_j)\\
        &\leq  (\nvote-1)\cdot (\upave{2}-\downave{\posbar{\ell}})+ \frac{1}{2}+\score_{\ell+1}-\upave{\posbar{\ell}} - k \cdot (\upave{\posbar{\ell}}-\score_{\ell+1})\\
        &< (\nvote-1)\cdot (\upave{2}-\downave{\posbar{\ell}})+ \frac{1}{2}- k \cdot (\upave{\posbar{\ell}}-\score_{\ell+1})\\
        &\leq (\nvote-1)\cdot (\upave{2}-\downave{\posbar{\ell}})+ \frac{1}{2}- \frac{(\nvote-1)(\upave{2}-\downave{\posbar{\ell}})+\frac{1}{2}}{\upave{\posbar{\ell}}-\score_{\ell+1}}\cdot (\upave{\posbar{\ell}}-\score_{\ell+1})\\
        &=0.
    \end{align*}
    By an averaging argument, this implies that there exists some $t \in \{1,\ldots \posbar{\ell}\}$ such that $\bs''(\cand{t})> \bs''(\cand{j})$. Hence, $\pos_\tiebreaker(\kprofile{k}{1}) \neq \cand{j}$, completing the proof.
\end{proof}

\subsection{Proof of \Cref{thm:pos-lower}}\label{app:pos-lower}

\posdown*

\begin{proof}
    Fix some $\ell \in \{1,\ldots, \ncand-1\}$ and $0 \leq k < \lowerpos{\ell}$ and assume  $\nvote \geq \frac{2\score_2}{1-\score_2}$. Separate the candidates into three groups $\{\acand{1}, \acand{2}, \aldots, \acand{\ell-1}\}$, $\{\bcand{\ell}, \bcand{\ell+1}\}$, and $\{\ccand{\ell+2}, \ccand{\ell+3}, \cldots, \ccand{\ncand}\}$; note that the first (resp.\ last) group will be empty if $\ell =1$ (resp.\ $\ell=\ncand-1$). For presentation purposes, we will use a different color while referring to candidates from each group. Consider the following profile, consisting of:

        \begin{center}
        \begin{tabular}{c|l}
            \cellcola Voter 1 & \cellcola $\acand{1} \vote{1} \aldots \vote{1} \acand{\ell-1} \vote{1} \bcand{\ell} \vote{1} \bcand{\ell+1} \vote{1} \ccand{\ell+2} \vote{1} \cldots \vote{1} \ccand{\ncand}$
            \\ \cellcolb Voter 2 & \cellcolb $\bcand{\ell+1} \vote{2} \ccand{\ncand} \vote{2} \ccand{\ncand-1} \vote{2} \cldots \vote{2} \ccand{\ell+2} \vote{2} \acand{\ell-1} \vote{2} \acand{\ell-2} \vote{2} \aldots \vote{2} \acand{1} \vote{2} \bcand{\ell}$\\
             \cellcola $\frac{\nvote-2}{2}$ voters (Group 1) & \cellcola $\bcand{\ell} \vote{} \bcand{\ell+1} \vote{} \ccand{\ncand} \vote{} \ccand{\ncand-1} \vote{2} \cldots \vote{} \ccand{\ell+2} \vote{} \acand{\ell-1} \vote{} \acand{\ell-2} \vote{} \aldots \vote{} \acand{1}$\\
             \cellcolb
        $\frac{\nvote-2}{2}$ voters (Group 2) &\cellcolb $\bcand{\ell+1} \vote{} \bcand{\ell} \vote{} \ccand{\ncand} \vote{} \ccand{\ncand-1} \vote{2} \cldots \vote{} \ccand{\ell+2} \vote{} \acand{\ell-1} \vote{} \acand{\ell-2} \vote{} \aldots \vote{} \acand{1}$
        \end{tabular} 
    \end{center}
     Say the tiebreaker ranks  $\bcand{\ell} \tieorder \bcand{\ell+1} \tieorder \cand{}$ for all $\cand{} \in \cands\setminus \{\bcand{\ell},\bcand{\ell+1}\}$.
    For any $\cand{} \in \cands$, say $\bs_k(\cand{})$ is the total score assigned to $\cand{}$ by $\pos$ in $\kprofile{k}{1}$. For any $t > \ell+1$, we have $\bs_k(\bcand{\ell+1}) \geq \bs_k(\ccand{t})$, since $\bcand{\ell+1}$ is ranked above $\ccand{t}$ by all voters. Further, since $k< \lowerpos{\ell}$, we have
    \begin{align*}
        \bs_k(\bcand{\ell})-\bs_k(\bcand{\ell+1})&=(k+1)( \score_{\ell}-\score_{\ell+1} ) -1 <0\text{, and}\\
        \forall t < \ell:  \bs_k(\acand{t})-\bs_k(\bcand{\ell+1}) &= (k+1)(\score_t-\score_{\ell+1}) + \score_{\ncand-t} -1+(\nvote-2) (\score_{\ncand-t+1} -\upave{2}) \leq 0.
    \end{align*}
    Thus, for all $\cand{} \in \cands \setminus\{\bcand{\ell+1}\}$, we have either $\bs_k(\bcand{\ell+1})>\bs_k(\cand{})$ or [$\bs_k(\bcand{\ell+1})\geq \bs_k(\cand{})$ and $\bcand{\ell+1} \tieorder \cand{}$], implying that $\pos_\tiebreaker(\kprofile{k}{1})=\bcand{\ell+1}$.

    Going back to the original profile $\profile$ (with no copies), assume voter 1 instead reports the alternative preference order  $\bcand{\ell} \manip{1} \ccand{\ncand} \manip{1} \ccand{\ncand-1} \manip{1} \cldots \manip{1} \ccand{\ell+2} \manip{1} \acand{\ell-1} \manip{1} \acand{\ell-2} \manip{1} \aldots \manip{1} \acand{1} \manip{1} \bcand{\ell+1}$. For each $\cand{} \in \cands$, say $\bs'(\cand{})$ is the total score assigned to $\cand{}$ by $\pos$ in $\maniprofile{1}$. Then we have $\bs'(\bcand{\ell})=\bs'(\bcand{\ell+1})=1+(\nvote-2)\cdot \upave{2}$, whereas for any other candidate $\cand{} \in \cands \setminus \{\bcand{\ell},\bcand{\ell+1}\}$, we have
    \begin{align*}\bs'(\cand{}) &\leq 2 \score_{2} +(\nvote-2)\score_3 \leq \nvote\score_2 \\\Rightarrow \bs'(\bcand{\ell}) - \bs'(\cand{}) &\geq 1-2\score_2 + (\nvote-2)\cdot\left(\upave{2}-\score_2\right)= 1-2\score_2+\frac{(\nvote-2)(1-\score_2)}{2} \geq 0\end{align*} 
    since $\nvote \geq \frac{2\score_2}{1-\score_2}$. This implies that for all $\cand{} \in \cands\setminus\{\bcand{\ell}\}$, we have
    $\bs'(\bcand{\ell}) \geq \bs'(\cand{})$ and $\bcand{\ell} \tieorder \cand{}$, implying $\pos_\tiebreaker\maniprofile{1} = \bcand{\ell} \vote{1} \bcand{\ell+1} = \pos_\tiebreaker (\kprofile{k}{1})$. Hence, $\pos$ cannot be $\ksp{k}$ for any $k <  \lowerpos{\ell}$, implying $\spdeg{\pos} \geq \lowerpos{\ell}$. Since this is true for all $\ell \in \{1,2,\ldots, \ncand-1\}$, we get the theorem statement.

\end{proof}

\subsection{Proof of \Cref{thm:black}}\label{app:black}
\blackthm*

\begin{proof}
    The upper bound follows from \Cref{remark:majority-consistent}. For the lower bound, fix any $0 \leq k < \nvote -1$. We will show that $\black$ is not $\ksp{k}$. Consider the following profile:
        \begin{center}
        \begin{tabular}{c|l}
            \cellcola Voter 1 & \cellcola $\cand{1} \vote{1} \cand{2} \vote{1} \cand{3} \vote{1} \ldots$
            \\ \cellcolb $\frac{\nvote-1}{2}$ voters (Group 2) & \cellcolb $\cand{2} \vote{i} \cand{3} \vote{i} \ldots \vote{i} \cand{1}$
            \\\cellcola $\frac{\nvote-1}{2}$ voters (Group 3) &  \cellcola $\cand{3} \vote{i} \ldots \vote{i} \cand{1} \vote{i} \cand{2}$
        \end{tabular} 
    \end{center}
    Say the tiebreaker $\tieorder$ ranks $\cand{3}$ top. Then we have
    \begin{align*}
        \margin{\kprofile{k}{1}}[\cand{1},\cand{2}] &= \margin{\kprofile{k}{1}}[\cand{2},\cand{3}] = k+1 >0\text{, and}\\
        \margin{\kprofile{k}{1}}[\cand{3},\cand{1}] &= (\nvote-1)-(k+1) \geq 0,
    \end{align*}
and any other $\cand{}  \in \cands \setminus \{\cand{1},\cand{2},\cand{3}\}$ has $\margin{\kprofile{k}{1}}[\cand{3},\cand{}]=\nvote+k$. Hence, there is no Condorcet winner in $\kprofile{k}{1}$, as any candidate has at least one other candidate it does not defeat. For each candidate $\cand{} \in \cands$, say $\bs_k(\cand{})$ is the total Borda score of $\cand{}$ in $\kprofile{k}{1}$. Then we have 
\begin{align*}
    \bs_k(\cand{3}) - \bs_k(\cand{1}) &= (\nvote-1)(\ncand-2)-2(k+1) \geq 0,\\
    \bs_k(\cand{3}) - \bs_k(\cand{2}) &= \frac{\nvote-1}{2}(\ncand-2)-(k+1) \geq 0,
\end{align*}
and $\bs_k(\cand{3})- \bs_k(\cand{})>0$ for any other $\cand{}$. Since $\tieorder$ ranks $\cand{3}$ top, this implies that $\black_\tiebreaker(\kprofile{k}{1})=\borda_\tiebreaker(\kprofile{k}{1})=\cand{3}$. On the other hand, if Voter 1 reports an alternative preference order $\manip{1}$ ranking $\cand{2}$ top, she will make it the Condorcet winner in $\maniprofile{1}$. Hence, we have $\black_\tiebreaker\maniprofile{1}= \cand{2} \vote{1} \cand{3} = \black_\tiebreaker(\kprofile{k}{1})$, completing the proof.
\end{proof}

\subsection{Proof of \Cref{thm:maximin}}\label{app:maximin}

\maximinthm*

For presentation purposes, we will separate the theorem into two propositions (one for the upper bound and one for the lower bound) and prove them separately.

\begin{proposition}
    The manipulation potential of Maximin is at most $\frac{(\ncand-2)\nvote}{\ncand-1}+2$.
\end{proposition}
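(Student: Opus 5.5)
The plan is a direct score comparison that shows, for every $k \geq \frac{(\ncand-2)\nvote}{\ncand-1}+2$, that no candidate worse (for voter 1) than the manipulation outcome can win $\kprofile{k}{1}$. I will mirror the structure of the proof of \Cref{thm:pos-upper}.

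\textbf{Setup.} Fix $\profile$, $\manip{1}$, $\tieorder$ and such a $k$, with voter 1's truthful order $\cand{1} \vote{1} \ldots \vote{1} \cand{\ncand}$. Write $M, M', M''$ for the margin matrices of $\profile$, $\maniprofile{1}$ and $\kprofile{k}{1}$, and $\mathrm{mm}(\cand{}) \coloneq \min_{\cand{}' \neq \cand{}} M[\cand{},\cand{}']$, with $\mathrm{mm}', \mathrm{mm}''$ defined analogously. Three facts drive everything:
\begin{itemize}
\item Voter 1 alone moves any margin by at most $2$, so $|M'[\cand{},\cand{}'] - M[\cand{},\cand{}']| \leq 2$, hence $|\mathrm{mm}' - \mathrm{mm}| \leq 2$.
\item Adding copies gives $M''[\cand{a},\cand{b}] = M[\cand{a},\cand{b}] \pm k$, with $+k$ exactly when $a<b$.
\item Every margin involving voter 1's ranking lies in $[-(\nvote-2), \nvote-2]$, in the appropriate direction.
\end{itemize}

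\textbf{Reduction.} As in \Cref{thm:pos-upper}, adding truthful copies only helps candidates voter 1 prefers. So it suffices to assume $\maximin_\tiebreaker\maniprofile{1} = \cand{\ell}$ with $\ell \geq 2$ (if $\ell = 1$ there is nothing to beat only when the augmented winner is $\cand{1}$, handled below). Fix any $j > \ell$; I will show $\cand{j}$ loses in $\kprofile{k}{1}$ to some candidate in $\{\cand{1},\ldots,\cand{\ell}\}$.

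\textbf{Two bounds on $\cand{j}$.} The first comes from $\cand{1}$:
\[
\mathrm{mm}''(\cand{1}) = \mathrm{mm}(\cand{1}) + k \geq k + 2 - \nvote,
\qquad
\mathrm{mm}''(\cand{j}) \leq M[\cand{j},\cand{1}] - k .
\]
These alone only give the trivial threshold $\nvote - 2$ of \Cref{remark:majority-consistent}. The second bound uses the manipulation. Since $\cand{\ell}$ wins $\maniprofile{1}$, we have $\mathrm{mm}'(\cand{j}) \leq \mathrm{mm}'(\cand{\ell})$, hence $\mathrm{mm}(\cand{j}) \leq \mathrm{mm}(\cand{\ell}) + 4$, with the ties resolved by $\tieorder$ carried along as in the Borda proof. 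The intention is to combine this with an averaging argument over the $\ncand-1$ candidates other than $\cand{1}$.

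\textbf{Averaging step.} Margins are antisymmetric, so $\sum_{\cand{} \neq \cand{1}} \frac{1}{\ncand-2} \sum_{\cand{}' \notin \{\cand{},\cand{1}\}} M[\cand{},\cand{}'] = 0$. Each such inner average is at least $\mathrm{mm}(\cand{})$, so the maximin scores restricted to $\cands \setminus \{\cand{1}\}$ average at most $0$. Adding back each candidate's margin against $\cand{1}$, bounded by $\nvote-2$, I expect to show that either $\cand{j}$'s worst margin excluding $\cand{1}$ is at most roughly $\frac{\nvote}{\ncand-1}$ above its margin against $\cand{1}$, or some candidate strictly preferred to $\cand{j}$ already dominates it. Substituting into the two bounds should turn the condition $\mathrm{mm}''(\cand{1}) > \mathrm{mm}''(\cand{j})$ into
\[
2k > 2\cdot\tfrac{(\ncand-2)\nvote}{\ncand-1} + O(1),
\]
where the additive $+2$ in the statement absorbs the $\pm 2$ manipulation slack and the tie-breaking.

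\textbf{Main obstacle.} The hardest step is making this averaging bound sharp enough to produce the factor $\frac{\ncand-2}{\ncand-1}$. My first attempt is to sum, over voters, each voter's contribution to $\sum_{\cand{}' \neq \cand{j}} M[\cand{j},\cand{}']$: it equals $2\cdot(\#\text{candidates below } \cand{j}) - (\ncand-1)$. I would then check that the cyclic worst case sketched for the lower bound is the extremal configuration, and adjust the case split (comparing $\cand{j}$ to $\cand{1}$ versus to $\cand{\ell}$) if the direct comparison with $\cand{1}$ falls short.
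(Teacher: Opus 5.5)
There is a genuine gap. The step that must produce the factor $\frac{m-2}{m-1}$ is missing, and the two ingredients you do set up point in the wrong direction.

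\textbf{The bound on $c_j$.} Your second bound, $\mathrm{mm}(c_j) \le \mathrm{mm}(c_\ell)+4$, is an upper bound on $c_j$'s \emph{truthful} maximin score. It does not control $\mathrm{mm}''(c_j)$: if $c_j$'s worst opponent lies below $c_j$ in voter 1's ranking, that margin also gains $+k$, exactly as $c_1$'s margins do. What works is to bound $c_j$ by its margin against the manipulation outcome itself. All copies rank $c_\ell$ above $c_j$, so $\mathrm{mm}''(c_j) \le M[c_j,c_\ell]-k \le -\mathrm{mm}(c_\ell)-k$. You therefore need a \emph{lower} bound on $\mathrm{mm}(c_\ell)$. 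This comes from the opposite inequality $\mathrm{mm}(c_\ell) \ge \mathrm{mm}(x)-4$, where $x$ is the truthful winner (in counts, $s(c_\ell) \ge s(x)-2$). Combined with $\mathrm{mm}''(c_1)=\mathrm{mm}(c_1)+k$, the whole proof reduces to lower-bounding the truthful winner's maximin score.

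\textbf{The averaging step.} Antisymmetry only shows that the maximin scores \emph{average} at most $0$. That is an upper bound, and it says nothing about how large the maximum must be. The dichotomy you anticipate afterwards is not derived from anything you have.

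The paper obtains the needed lower bound as follows. Let $n(c,c')$ be the number of voters preferring $c$ to $c'$ and $s(c)=\min_{c'} n(c,c')$. Follow worst-defeat pointers $c \mapsto \arg\min_{c'} n(c,c')$ until a simple cycle closes. Every voter ranks at least one arc of this cycle forward, since her ranking is acyclic. Hence the sum of $s$ over the cycle is at least $n$. There are two cases:
\begin{itemize}
\item If the cycle has length at most $m-1$, then $s(x) \ge n/(m-1)$.
\item If it has length $m$, it contains $c_1$ and $s(x) \ge (n-s(c_1))/(m-1)$. The leftover $s(c_1)$ is recovered because $c_1$'s augmented score is exactly $s(c_1)+k$, with $s(c_1)\ge 1$.
\end{itemize}
Now compare $s^k(c_j) \le n(c_j,c_\ell) \le n - s(c_\ell) \le n-s(x)+2$ with $s^k(c_1)=s(c_1)+k$. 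This gives the $\frac{(m-2)n}{m-1}+2$ threshold directly. Without this cycle lemma, or some equivalent lower bound on $\max_c s(c)$, your plan cannot reach the stated bound.

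\textbf{The reduction.} Do not import the reduction from \Cref{thm:pos-upper}. Adding truthful copies need not move the Maximin winner up in voter 1's ranking, because Maximin, like every Condorcet extension, exhibits the no-show paradox for $m \ge 4$. You do not need that reduction anyway. The argument above shows that every $c_j$ below $c_\ell$ loses to $c_1$, however $c_\ell$ compares with the truthful winner.
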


\begin{proof}
    Fix any profile $\profile$, tiebreaker $\tieorder$, manipulation $\manip{1}$, and integer $k\geq \frac{(\ncand-2)\nvote}{\ncand-1}+2$. For each  $\cand{}, \cand{}'\in \cands$, denote $\nbetter{\cand{}}{\cand{}'} \coloneq |\{i \in \voters: \cand{} \vote{i} \cand{}' \}|$ and  $\mscore{\cand{}}  \coloneq \min_{\cand{}' \in \cands} \nbetter{\cand{}}{\cand{}'}$. Thus the maximin winners are $\maximin(\profile)=\arg \max_{\cand{} \in \cands} \mscore{\cand{}}$, since $\margin{\profile}[\cand{},\cand{}']=2\nbetter{\cand{}}{\cand{}'}-\nvote$ for any $\cand{},\cand{}'\in \cands$. Define $\nbetterk{\cand{}}{\cand{}'}$ and $\mscorek{\cand{}}$ analogously for the profile $\kprofile{k}{1}$. Say $\maximin_\tiebreaker(\profile)=\cand{x}$ and $\maximin_\tiebreaker\maniprofile{1}=\cand{y}$. We must have
    \begin{equation}  \mscore{\cand{y}} \geq \mscore{\cand{x}} -2, \label{eq:xvsy}\end{equation} 
    since voter 1 can add at most one to $\mscore{\cand{y}}$ and subtract at most one from $\mscore{\cand{x}}$ by changing her vote. Construct a simple cycle $(\cand{(1)},\cand{(2)},\ldots, \cand{(\ell)}, \cand{(1)})$ such that $\cand{(i+1)} \in \argmin_{\cand{} \in \cands} \nbetter{\cand{(i)}}{\cand{}}$ for each $i \in [\ell]$ (where we interpret $\ell+1=1$) and so $\mscore{\cand{(i)}} = \nbetter{\cand{(i)}}{\cand{(i+1)}}$ for each $i\in [\ell]$. Note that such a cycle must exist, since each $\cand{} \in \cands$ has a nonempty $\argmin_{\cand{}' \in \cands} \nbetter{\cand{}}{\cand{}'}$, and there are finitely many candidates. Each voter must rank $\cand{(i)} \succ \cand{(i+1)}$ for at least one $i \in [\ell]$, since voter preferences are acyclic (so disagreeing with every edge in the cycle leads to a contradiction). This implies 
    \begin{equation}
        \sum_{i=1}^{\ell} \mscore{\cand{(i)}}= \sum_{i=1}^{\ell} \nbetter{\cand{(i)}}{\cand{(i+1)}}  \geq \nvote. \label{eq:sumncc}
    \end{equation}
    Lastly, label the top candidate of voter 1 as  $\cand{z} \coloneq \cand{1}$ in order (to avoid confusion with $\cand{(1)}$). We now prove that $\maximin$ is $\ksp{k}$ under two different cases.
    
    \noindent\textbf{Case 1:} $\ell \leq \ncand-1$. By \eqref{eq:sumncc}, there exists some $i^* \in [\ell]$ such that $\mscore{\cand{(i^*)}}\geq \frac{\nvote}{\ell} \geq \frac{\nvote}{\ncand-1}$. Since $\cand{x} \in \maximin(\profile)$,  we also have $\mscore{\cand{x}} \geq \mscore{\cand{(i^*)}} \geq \frac{\nvote}{\ncand-1}$. This implies that for any $\cand{} \in \cands$ such that $\cand{y} \vote{1} \cand{}$ we have
    \begin{equation}
      \mscorek{\cand{}} \leq  \nbetterk{\cand{}}{\cand{y}} = \nbetter{\cand{}}{\cand{y}}  =  \nvote-\nbetter{\cand{y}}{\cand{}}  \leq \nvote - \mscore{\cand{y}} \leq  \nvote-\mscore{\cand{x}} +2  \leq \frac{(\ncand-2)\nvote}{\ncand-1} +2  ,\label{eq:scstar}
    \end{equation}
    where the penultimate inequality follows from \eqref{eq:xvsy}.
    As $\nbetterk{\cand{z}}{\cand{}'} = \nbetter{\cand{z}}{\cand{}'}+ k$ for any $\cand{}' \in \cands \setminus \{\cand{z}\}$, we have
    \begin{equation}
        \mscorek{\cand{z}} = \mscore{\cand{z}} +k\geq 1+ k  \geq  \frac{(\ncand-2)\nvote}{\ncand-1}+3 \label{eq:sz},
    \end{equation}
    where $\mscore{\cand{z}} \geq 1$ since at least voter 1 ranks $\cand{z}$ above every other candidate in $\profile$. Comparing \eqref{eq:scstar} with \eqref{eq:sz}, we see that for any $\cand{} \in \cands$ such that $\cand{y} \vote{1} \cand{}$, we have $\mscorek{\cand{z}} > \mscorek{\cand{}}$ and therefore $\cand{} \notin \maximin(\kprofile{k}{1})$. This ensures that $\maximin_\tiebreaker(\kprofile{k}{1}) \voteq{1} \cand{y} = \maximin_\tiebreaker\maniprofile{1}$, as desired.
    
    \noindent\textbf{Case 2:} $\ell = \ncand$. This implies that all candidates are in the cycle constructed above, implying by \eqref{eq:sumncc} that $\sum_{\cand{} \in \cands \setminus \{\cand{z}\}} \mscore{\cand{}} \geq \nvote -\mscore{\cand{z}}$. Since $\mscore{\cand{x}} \geq \mscore{\cand{}}$ for all $\cand{} \in \cands$, this gives us $\mscore{\cand{x}} \geq \frac{\nvote-\mscore{\cand{z}}}{\ncand-1}$. This implies that for any $\cand{} \in \cands$ such that $\cand{y} \vote{1} \cand{}$, we have
    \begin{equation}
      \mscorek{\cand{}} \leq  \nbetterk{\cand{}}{\cand{y}} = \nbetter{\cand{}}{\cand{y}} =  \nvote- \nbetter{\cand{y}}{\cand{}}  \leq \nvote - \mscore{\cand{y}} \leq  \nvote-\mscore{\cand{x}} +2  \leq \frac{(\ncand-2)\nvote}{\ncand-1} +\frac{\mscore{\cand{z}}}{\ncand-1}+2 \label{eq:sc2},
    \end{equation}
    where the penultimate inequality follows from \eqref{eq:xvsy}. Since $\nbetterk{\cand{z}}{\cand{}'} = \nbetter{\cand{z}}{\cand{}'}+ k$ for any $\cand{}' \in \cands \setminus \{\cand{z}\}$, we have
    \[\mscorek{\cand{z}} = \mscore{\cand{z}}  +k \geq \mscore{\cand{z}} +\frac{(\ncand-2)\nvote}{\ncand-1}+2> \frac{\mscore{\cand{z}}}{\ncand-1}+\frac{(\ncand-2)\nvote}{\ncand-1}+2, \]
    where the last inequality follows from the fact that $\mscore{\cand{z}} \geq 1$. Comparing this to \eqref{eq:sc2}, we see that for any $\cand{} \in \cands$ such that $\cand{y} \vote{1} \cand{}$, we have $\mscorek{\cand{z}} > \mscorek{\cand{}} $ and therefore $\cand{} \notin \maximin(\kprofile{k}{1})$. Once again, this ensures that $\maximin_\tiebreaker(\kprofile{k}{1}) \voteq{1} \cand{y} = \maximin_\tiebreaker\maniprofile{1}$, completing the proof.

\end{proof}

\begin{proposition}
    The manipulation potential Maximin ($\maximin$) is at least $\frac{(\ncand-2)(\nvote-2)}{\ncand-1}+1$, assuming $\nvote-2$ is divisible by $\ncand-1$. Further, $\maximin$ is not $\ksp{k}$ for any $k<\frac{(\ncand-2)(\nvote-2)}{\ncand-1}+1$.
\end{proposition}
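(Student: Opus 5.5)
The plan is to build a single profile $\profile$, a tiebreaker $\tieorder$ and a manipulation $\manip{1}$, following the worst case described in the proof sketch of \Cref{thm:maximin}. In that profile the manipulator's lower candidates $\cand{2},\ldots,\cand{\ncand}$ form a balanced Condorcet cycle, and her top choice $\cand{1}$ is ranked last by almost everyone else. I then verify two things. First, for every $0\le k<\frac{(\ncand-2)(\nvote-2)}{\ncand-1}+1$, the Maximin winner of $\kprofile{k}{1}$ is a candidate voter 1 ranks below $\cand{2}$. Second, $\maximin_\tiebreaker\maniprofile{1}=\cand{2}$. As in the upper-bound proof, I work with $\nbetter{\cand{}}{\cand{}'}$ and $\mscore{\cand{}}=\min_{\cand{}'}\nbetter{\cand{}}{\cand{}'}$ rather than margins.

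\textbf{Construction.} Put $q\coloneq\frac{\nvote-2}{\ncand-1}$, which is an integer by assumption.
\begin{itemize}
\item The $\nvote-2$ voters other than voter 1 and one special voter are split into $\ncand-1$ blocks of $q$ voters each.
\item Each block takes one cyclic rotation of the order $\cand{\ncand}\succ\cand{\ncand-1}\succ\cdots\succ\cand{2}$ on the lower candidates and puts $\cand{1}$ last.
\item The cycle is oriented against voter 1's ranking: in each adjacent pair $\cand{j},\cand{j+1}$, the candidate $\cand{j}$ (preferred by voter 1) is the one in the minority. So each $\cand{j}$ has its worst pairwise count against its cycle-predecessor, with value about $q$ plus the contributions of voter 1 and the special voter.
\item The special voter is used to shift the balance so that in $\profile$ the cycle is exactly tied in its minimum supports. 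I will place $\cand{1}$ in its ranking as high as is needed to make the arithmetic come out at the stated threshold. I expect this to be the ranking $\cand{1}$ first followed by the cycle order, but I will adjust after computing.
\item The tiebreaker ranks the lower candidates so that it favours one that voter 1 likes less than $\cand{2}$, say $\cand{\ncand}$, and ranks $\cand{1}$ last.
\end{itemize}

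\textbf{Copies and the manipulation.} Adding $k$ copies of voter 1 raises $\nbetter{\cand{}}{\cand{}'}$ by $k$ exactly on the pairs voter 1 orders as $\cand{}\succ\cand{}'$. I will check that this adds the same amount to the minimum support of every candidate that is tied for the top Maximin score, except possibly one candidate worse than $\cand{2}$. Hence the set of Maximin winners stays inside candidates voter 1 does not prefer to $\cand{2}$ (or is tied in a way $\tieorder$ resolves against her). Meanwhile $\mscorek{\cand{1}}=\mscore{\cand{1}}+k$ grows from a small base. The lower-cycle candidates' scores are eventually capped by their counts against $\cand{1}$, which are roughly $\nvote-1-(\text{voters ranking }\cand{1}\text{ high})$. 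Comparing these two quantities should give exactly the threshold $\frac{(\ncand-2)(\nvote-2)}{\ncand-1}+1$ before $\cand{1}$ can win. For the manipulation, voter 1 reports $\cand{2}$ first and puts $\cand{2}$'s cycle-predecessor last. This raises $\mscore{\cand{2}}$ by one and lowers the competitors' scores, so $\cand{2}$ becomes the unique Maximin winner of $\maniprofile{1}$, and voter 1 prefers $\cand{2}$ to the winner of $\kprofile{k}{1}$.

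\textbf{Main obstacle.} The hardest step is the calibration. The orientation of the cycle relative to voter 1's ranking, together with the placement of the special voter and of $\cand{1}$, must simultaneously achieve three things:
\begin{itemize}
\item copies never lift $\cand{2}$, or any candidate voter 1 prefers to the $\tieorder$-winner, strictly above the rest;
\item $\cand{1}$'s score $\mscore{\cand{1}}+k$ stays at or below the capped cycle scores exactly until $k$ reaches $\frac{(\ncand-2)(\nvote-2)}{\ncand-1}+1$;
\item the manipulation still succeeds.
\end{itemize}
A naive version, with $\cand{1}$ last for everyone, gives a threshold near $\nvote-1$ but lets copies lift most cycle candidates uniformly. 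So I will first write out $\mscorek{\cdot}$ for each candidate as an explicit function of $k$ and $q$. I will then choose the special voter's ranking, and if necessary give the $\cand{1}$-placement inside the blocks a rotation-dependent form (the $\frac{\ncand-2}{\ncand-1}$ factor suggests $\cand{1}$ is ranked last in all but one of the $\ncand-1$ rotation types). I will tune this until the crossover lands at the claimed value. Finally, I will check that the Maximin winner equals the intended bad candidate for every $k$ below the bound, not just at the largest such $k$, since the statement claims failure of $\ksp{k}$ for all such $k$.
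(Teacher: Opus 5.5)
Your plan leaves the whole calibration open, and the choices it does commit to cannot reach the stated bound. Every cycle edge $c_{j+1}\to c_j$ is oriented against voter 1, so each copy adds one to $N(c_j,c_{j+1})$. That count is the binding count of \emph{every} $c_j$ with $2\le j\le m-1$, including $c_2$: its binding count $N(c_2,c_3)=q+k+1+O(1)$ grows until it hits $N(c_2,c_1)$. So the outcome with copies stays below $c_2$ only while some worse candidate is exactly tied with $c_2$ and favoured by the tiebreaker.

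Your proposed favourite $c_m$ is the one cycle candidate that does \emph{not} rise. Its binding count is against $c_2$, which voter 1 ranks above it, so its score stays at most $q+1$. The candidates $c_j$ with $j\ge3$ are capped not by their counts against $c_1$, as you expect, but by their counts against $c_2$ inside the cycle, namely $q(m+1-j)+O(1)$. The largest of these caps is $c_3$'s, at $(m-2)q+O(1)$. With $c_1$ last in every block, $c_2$'s cap is $(m-1)q+O(1)$, so once $k$ exceeds roughly $(m-3)q$ no candidate below $c_2$ can win. Your profile therefore refutes $k$-ASP only up to about $(m-3)q$, which falls short of the claim by $q=\frac{n-2}{m-1}$. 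A single special voter shifts each count by at most one and cannot close a gap of order $q$.

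The manipulation step is also misdescribed. Voter 1 already supports $c_2$ on its binding pair against $c_3$, so moving $c_2$ up does not raise its score. Demoting only $c_3$ leaves the scores of $c_4,\dots,c_{m-1}$ unchanged, so $c_2$ does not become the unique winner.

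The missing idea is to cap $c_2$ at exactly the level of $c_3$ and let the tiebreaker favour $c_3$. This is the mechanism of the paper's profile. There, voter 1's second and third choices rise in lockstep under copies and are both capped at $(m-2)q+1$. The second choice is capped against her top choice, via the single block ranking the top choice above it. The third choice is capped against the second, via the cycle. The tiebreaker ranks the third choice first.

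Your hedge about placing $c_1$ differently in one rotation points the right way, but the placement must be specific. For $m\ge4$, insert $c_1$ directly above $c_2$ and below $c_3$ in exactly one block, e.g.\ $c_m\succ\cdots\succ c_3\succ c_1\succ c_2$, and keep $c_1$ last elsewhere. Putting $c_1$ at the top of that block instead would raise $c_1$'s own score to $q+k+O(1)$ and lose $q$ again. Take the special voter to be $c_m\succ\cdots\succ c_2\succ c_1$ and a tiebreaker ranking $c_3$ first and $c_2$ second. Then $c_2$ and $c_3$ both score $\min(q+k+1,(m-2)q+1)$, every other $c_j$ scores at most that, and $c_1$ scores $k+1$. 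So $c_3$ wins for all $k\le(m-2)q$.

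The manipulation $c_1\succ c_2\succ c_m\succ\cdots\succ c_3$ withdraws voter 1's support from every competitor's binding pair. This drops $c_3,\dots,c_{m-1}$ to $q$, while $c_2$ and $c_m$ stay at $q+1$, and the tiebreaker then selects $c_2$.
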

\begin{proof}
    Fix some $0 \leq k  < \frac{(\ncand-2)(\nvote-2)}{\ncand-1}+1$. We will show that $\maximin$ is not $\ksp{k}$. To do so, separate the candidates into three groups $\{\acand{1},\acand{2},\aldots,\acand{\ncand-2}\}$,  $\{\bcand{\ncand-1}\}$, and $\{\ccand{\ncand}\}$ and consider the following profile (note that voter 1 has a different ranking than usual, for presentation purposes):
    \begin{center}
        \begin{tabular}{c|l}
            \cellcola Voter 1 & \cellcola $\ccand{\ncand} \vote{1} \acand{1} \vote{1} \bcand{\ncand-1} \vote{1}  \acand{2} \vote{1} \acand{3} \vote{1} \aldots \vote{1} \acand{\ncand-2}$
            \\ \cellcolb Voter 2 & \cellcolb $\acand{\ncand-2} \vote{2} \acand{\ncand-3} \vote{2} \aldots \vote{2} \acand{2} \vote{2} \bcand{\ncand-1} \vote{2} \acand{1} \vote{2} \ccand{\ncand} $
            \\\cellcola $\frac{\nvote-2}{\ncand-1}$ voters  &  \cellcola $\acand{1}\vote{i} \acand{2} \vote{i} \aldots \vote{i} \acand{\ncand-2} \vote{i} \bcand{\ncand-1} \vote{i} \ccand{\ncand}$
            \\\cellcolb $\frac{\nvote-2}{\ncand-1}$ voters  &  \cellcolb $\bcand{\ncand-1} \vote{i} \acand{1}\vote{i} \acand{2} \vote{i} \aldots \vote{i} \acand{\ncand-2} \vote{i}  \ccand{\ncand}$
            \\\cellcola $\frac{\nvote-2}{\ncand-1}$ voters  &  \cellcola $\acand{\ncand-2}\vote{i} \bcand{\ncand-1} \vote{i} \acand{1}\vote{i} \acand{2} \vote{i} \aldots \vote{i} \acand{\ncand-3} \vote{i}  \ccand{\ncand}$
            \\\cellcolb $\frac{\nvote-2}{\ncand-1}$ voters  &  \cellcolb $\acand{\ncand-3} \vote{i} \acand{\ncand-2}\vote{i} \bcand{\ncand-1} \vote{i} \acand{1}\vote{i} \acand{2} \vote{i} \aldots \vote{i} \acand{\ncand-4} \vote{i}  \ccand{\ncand}$
            \\\cellcola \vdots  &  \multicolumn{1}{c}{\cellcola \vdots}
            \\\cellcolb $\frac{\nvote-2}{\ncand-1}$ voters  &  \cellcolb $\acand{3} \vote{i} \acand{4}\vote{i} \aldots \vote{i} \acand{\ncand-2}\vote{i} \bcand{\ncand-1} \vote{i} \acand{1} \vote{i} \acand{2} \vote{i} \ccand{\ncand}$
            \\\cellcola $\frac{\nvote-2}{\ncand-1}$ voters  &  \cellcola$\acand{2} \vote{i} \acand{3}\vote{i} \acand{4}\vote{i} \aldots \vote{i} \acand{\ncand-2}\vote{i} \bcand{\ncand-1} \vote{i} \ccand{\ncand} \vote{i} \acand{1}$
        \end{tabular} 
    \end{center}
    Say that the tiebreaker $\tieorder$ ranks $\cand{\ncand-1}$ top and $\cand{1}$ second. Write $\margink=\margin{\kprofile{k}{1}}$ for convenience. It can then be checked that
    \begin{itemize}
        \item $\forall i \in \{2,3,\ldots,\ncand-1\}: \margink[\cand{i}, \cand{\ncand}]=\nvote-2-k$
        \item $\margink[\cand{1}, \cand{\ncand}]=\frac{\ncand-3}{\ncand-1}(\nvote-2)-k$,
        \item $\forall 1 \leq i < j \leq \ncand-2: \margink[\cand{i}, \cand{j}]=\frac{(\ncand-1)-2(j-i)}{\ncand-1}(\nvote-2)+k$,
        \item $\forall 2 \leq i \leq \ncand-2: \margink[\cand{i}, \cand{\ncand-1}]=\frac{(\ncand-1)-2(\ncand-1-i)}{\ncand-1}(\nvote-2)-k$, and
        \item $\margink[\cand{1}, \cand{\ncand-1}]=\frac{(\ncand-1)-2(\ncand-2)}{\ncand-1}(\nvote-2)+k$.
    \end{itemize}
    Accordingly, we have
    \begin{itemize}
        \item $\candmarg{\cand{1}} \coloneq \min_{\cand{} \in \cands \setminus\{\cand{1}\}} \margink[\cand{1}, \cand{}]=\min\left(\frac{\ncand-3}{\ncand-1}(\nvote-2)-k,\, k- \frac{\ncand-3}{\ncand-1}(\nvote-2)\right)$,
        \item $\forall i \in \{2,3,\ldots,\ncand-2\}: \candmarg{\cand{i}} \coloneq \min_{\cand{} \in \cands \setminus\{\cand{i}\}} \margink[\cand{i}, \cand{}]=-\frac{\ncand-3}{\ncand-1}(\nvote-2)-k$, 
        \item $\candmarg{\cand{\ncand-1}} \coloneq \min_{\cand{} \in \cands \setminus\{\cand{\ncand-1}\}} \margink[\cand{\ncand-1}, \cand{}]=\min\left(\frac{\ncand-3}{\ncand-1}(\nvote-2)-k,\, k- \frac{\ncand-3}{\ncand-1}(\nvote-2)\right)$, and
        \item $\candmarg{\cand{\ncand}} \coloneq \min_{\cand{} \in \cands \setminus\{\cand{\ncand}\}} \margink[\cand{\ncand}, \cand{}]=k - (\nvote-2)$.
    \end{itemize}
    It is easy to check that $\candmarg{\cand{\ncand-1}} \geq \candmarg{\cand{i}}$ for any $1 \leq i \leq \ncand-2$. Further,
    \begin{align*}
        \candmarg{\cand{\ncand-1}}-\candmarg{\cand{\ncand}}= \min\left(\frac{2\ncand-4}{\ncand-1}(\nvote-2)-2k,\, \frac{2(\nvote-2)}{\ncand-1}\right)\geq \min\left(0,\, \frac{2(\nvote-2)}{\ncand-1}\right)\geq 0.
    \end{align*}
    Since $\tieorder$ ranks $\cand{\ncand-1}$ first, this implies that $\maximin_\tiebreaker(\kprofile{k}{1})=\cand{\ncand-1}$. Now, go back to the original profile and say voter 1 reports an alternative preference order $\manip{1}$ that ranks $\ccand{\ncand} \vote{1} \acand{1}  \vote{1}  \acand{2} \vote{1} \acand{3} \vote{1} \aldots \vote{1} \acand{\ncand-2} \vote{1} \bcand{\ncand-1}$. Now, we have:
    \begin{align*}
        \min_{\cand{} \in \cands\setminus\{\cand{\ncand-1}\}} \margin{\maniprofilenp{1}}[\cand{\ncand-1}, \cand{}] \leq  &\margin{\maniprofilenp{1}}[\cand{\ncand-1}, \cand{\ncand-2}]=-\frac{\ncand-3}{\ncand-1}(\nvote-2)-2,\\
         \min_{\cand{} \in \cands\setminus\{\cand{\ncand}\}} &\margin{\maniprofilenp{1}}[\cand{\ncand}, \cand{}] =-(\nvote-2)\text{, and}\\
         \forall i \in \{1,2,\ldots, \ncand-2\}: \min_{\cand{} \in \cands\setminus\{\cand{i}\}} &\margin{\maniprofilenp{1}}[\cand{i}, \cand{}] =-\frac{\ncand-3}{\ncand-1}(\nvote-2),
    \end{align*}
    implying $\maximin\maniprofile{1}=\cands\setminus\{\cand{\ncand-1},\cand{\ncand}\}$. Since $\cand{1}$ is ranked second by $\tieorder$ after $\cand{\ncand-1}$, this implies $\maximin_\tiebreaker\maniprofile{1}=\cand{1} \vote{1} \cand{\ncand-1} = \maximin_\tiebreaker(\kprofile{k}{1})$, completing the proof.

\end{proof}

\section{Group Manipulability}\label{app:group}

In this section, we introduce a straightforward adaptation of $k$-Augmentation Strategyproofness (\Cref{def:kmanip}) and the manipulation potential (\Cref{def:manipot}) to the group manipulation setting, as discussed in \Cref{sec:future}(e).

\begin{definition}[$k$-Augmentation Group-Strategyproofness]\label{def:kgmanip}
   Let $k$ be a non-negative integer. A social choice function $\scf$ is \emph{$k$-augmentation group-strategyproof ($\kgsp{k})$} if for any profile $\profile \in \linprefs^\nvote$, any subset of voters $S \subseteq \voters$, and any set of alternative preference orders $\manip{S}=(\manip{i})_{i \in S}$, there exists an $i \in S$ such that $\scf( \kprofile{k}{S}) \voteq{i} \scf\maniprofile{S}$.
\end{definition}

Here, $\vote{S}$ and $\vote{-S}$ for a set of voters $S$ are defined analogously to $\vote{i}$ and $\vote{-i}$ for a given voter $i$. In particular, $ \kprofile{k}{S}$ consists of adding $k$ copies of \emph{each} voter in $S$ to the profile $\profile$. Using this definition, we now extend the manipulation potential to group manipulability.

\begin{definition}[Group Manipulation Potential]\label{def:gmanipot}
    For any SCF $\scf$, we define its \emph{group manipulation potential} as
    \[\spgdeg{\scf}\coloneq \min \{k \in \mathbb{Z}_{\geq 0}: \scf\text{ is }\kgsp{k'}\text{ for all }k'\geq k\}\]
and $\spgdeg{\scf}\coloneq \infty$ if no such $k$ exists.
\end{definition}

Just like with manipulation potential, we define the group manipulation potential of a multi-valued SCC $\scc$ as $\spgdeg{\scc} \coloneq \max_{\tieorder \in \linprefs} \spgdeg{\scc_\tiebreaker}$. 

We now show that our results from the main body do not necessarily generalize to this setting. In particular, recall from \Cref{remark:majority-consistent} that any majority-consistent rule (and therefore any Condorcet extension) has a trivial upper bound of $\nvote-1$ on its manipulation potential. However, as we show next, the group manipulation potential of Condorcet extensions can even be unbounded! This is because adding an equal number of copies of a coalition of voters, even if that coalition agrees on a manipulation, may not necessarily lead to a Condorcet winner, even if we add arbitrarily many copies.

To show this, we focus on the Condorcet extension known as Copeland's method ($\copeland$). Under Copeland, each candidate gets one point for every other candidate it defeats in a pairwise match, and half a point for each candidate it ties with. $\copeland$ then outputs the candidate(s) with the highest score. More formally,
\[
\copeland(\profile)= \argmax_{\cand{} \in \cands}|\{ \cand{}' \in \cands \setminus \{\cand{}\}: \margin{\profile}[\cand{},\cand{}']>0\}| + \frac{1}{2}|\{ \cand{}' \in \cands \setminus \{\cand{}\}: \margin{\profile}[\cand{},\cand{}']=0\}|.
\]

\begin{theorem}\label{thm:copeland}
    Given $\ncand \geq 5$, the group manipulation potential of Copeland is $\spgdeg{\copeland}=\infty$. In particular, Copeland is not $\kgsp{k}$ for any $k \geq 0$.
\end{theorem}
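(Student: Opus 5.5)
The plan is to build one profile with a two-voter coalition $S=\{1,2\}$ whose truthful rankings \emph{disagree} on almost every pair of candidates. If the two members disagree on a pair $\{\cand{},\cand{}'\}$, then adding $k$ copies of each member adds $+k$ and $-k$ to $\margin{\profile}[\cand{},\cand{}']$, so that margin is the same in $\kprofile{k}{S}$ as in $\profile$ for every $k$. Only the pairs on which the two members agree can be moved by augmentation.

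\textbf{The coalition.} I would work with five candidates $a,b,y,c,d$. Any further candidates are placed at the bottom of every ranking, and the pairwise margins among them are arranged so they are irrelevant. The truthful votes are
\[
\vote{1}:\ a\succ y\succ c\succ d\succ b,\qquad \vote{2}:\ b\succ y\succ d\succ c\succ a .
\]
The members agree only on $y\succ c$ and $y\succ d$ (and on the trivial pairs against the bottom candidates). Both members rank $y$ strictly above $c$ and $d$. The goal is therefore: a joint manipulation makes $y$ the Copeland winner, while $\copeland_\tiebreaker(\kprofile{k}{S})=c$ for every $k\ge 0$. This violates $\kgsp{k}$ for all $k$, with a common tiebreaker ranking $c$ first.

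\textbf{The remaining voters.} I would fill in the other voters using a McGarvey-style construction. Adding pairs of voters that differ by a single swap adjusts individual margins by $\pm2$, which realizes any target margin matrix of the right parity. The targets, measured in $\profile$ including the coalition's truthful votes, are as follows.
\begin{itemize}
\item $c$ beats $a$, $b$ and $d$ by large margins (at least $3$), so that $c$ has Copeland score at least $3$ in $\kprofile{k}{S}$ for every $k$. These pairs are disagreement pairs, so their margins never move.
\item $y$ loses to $a$ and to $b$ by exactly $1$ or $2$, so that the coalition can flip these pairs but copies cannot.
\item $y$ versus $c$ is a narrow loss for $y$.
\item The margins among $a,b,d$ are chosen so that none of them reaches score $3$. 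A $3$-cycle among them works, giving each at most $1$ plus points against $y$.
\end{itemize}
Then in $\kprofile{k}{S}$, candidate $y$ beats at most $c$ and $d$, so its score is at most $2$. Candidate $c$ has score at least $3$. Every other candidate scores at most $2$ plus ties. Hence $c$ is the (tiebreaker-favoured) winner for all $k$; this includes small $k$, where $c$ may additionally still beat $y$.

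\textbf{The manipulation.} Both members report $y$ first. Voter 1 then moves $a$ below $y$, and voter 2 moves $b$ below $y$, which flips $y$ versus $a$ and $y$ versus $b$ in favour of $y$. Both members also move $c$ to the bottom, which flips $y$ versus $c$. If needed, they also reduce $c$'s wins by placing $c$ below $a$, $b$ and $d$; this requires the margins of $c$ over those candidates to be only $1$ or $2$ rather than large. That conflicts with the requirement above that they be large, so I would instead set $c$'s wins to be exactly of size $1$–$2$. This is safe because they are disagreement pairs and stay fixed under augmentation anyway. In $\maniprofile{S}$, $y$ then beats everyone and is the Condorcet winner, so it is the unique Copeland winner. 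Both coalition members strictly prefer $y$ to $c$, as required.

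\textbf{Main obstacle.} The delicate step is the parity and margin bookkeeping. All the margins that the manipulation must flip need to be small and of a parity that a two-voter change can overturn. At the same time, the margins that keep $c$ ahead of $y$'s Copeland score must hold for every $k$. I would first fix the full target margin matrix among the five candidates with all entries odd (odd total number of voters, so there are no ties), and check the Copeland scores in both the $k$-augmented and the manipulated profiles. Only after that would I realize the matrix by McGarvey's construction. The assumption $\ncand\ge5$ is exactly what gives the two private top candidates $a,b$ plus the three shared ones $y,c,d$.
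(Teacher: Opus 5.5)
Your overall plan is the same as the paper's: a two-voter coalition that disagrees on almost every pair, so that augmentation can only move the few agreement pairs. However, the manipulation step fails as you have specified it.

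The pair $\{y,c\}$ is one of your two \emph{agreement} pairs. Both truthful ballots already rank $y$ above $c$, so the coalition already contributes $+2$ to the margin of $y$ over $c$, which is the most two ballots can contribute. No joint report can increase this margin; in particular, moving $c$ to the bottom does not flip $y$ versus $c$. You prescribe that $y$ narrowly \emph{loses} to $c$ in the original profile, measured including the coalition's truthful votes. So $y$ still loses to $c$ after the manipulation, and your claim that $y$ beats everyone and is the Condorcet winner is false.

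Your proposed demotion of $c$ below $a,b,d$ could still give $y$ a Copeland win with score $3$. That works only if you also require $y$ to beat $d$, which you never impose. Otherwise $d$ beats $c$, $y$ and one member of the $a,b,d$ cycle, scores $3$, and wins.

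The clean repair is to let $y$ already beat $c$ (and $d$) in the original profile. This costs nothing on the augmentation side. The agreement margins only grow with $k$, but $y$'s Copeland score stays at $2$, because its losses to $a$ and $b$ lie on disagreement pairs and are frozen. Meanwhile $c$ keeps score $3$ from its frozen wins over $a,b,d$, and $a,b,d$ score at most $2$.

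The manipulation then only needs voter~1 to lift $y$ above $a$ and voter~2 to lift $y$ above $b$. Only one member can move each of these pairs, so those margins must be exactly $-1$; the value $-2$ you allow would only produce a tie. This makes $y$ a Condorcet winner. It also lets $c$'s winning margins be any positive odd values, so the conflict between ``large'' and ``small'' margins for $c$ disappears.

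With this correction your construction is essentially the paper's. There the coalition agrees only on $c_2 \succ c_1$, and $c_2$ already beats $c_1$. Candidate $c_1$ wins with three frozen wins. When both members rank $c_2$ first, they flip $c_2$'s two unit losses (to $c_3$ and $c_4$), making $c_2$ a Condorcet winner.
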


\begin{proof}
    Say $S=\{1,2\}$ is our coalition of voters, with preferences $\cand{3} \vote{1} \cand{2} \vote{1} \cand{1} \vote{1}\cand{4} \vote{1} \cand{5} \vote{1} \ldots $ for voter 1 and  $\cand{5} \vote{2} \cand{4} \vote{2} \cand{2} \vote{2}\cand{1} \vote{2} \cand{3} \vote{2} \ldots $ for voter 2. For all other voters ($\voters\setminus S$), we will first specify the margin matrix $\margin{\vote{-S}}$. As shown by \citet{mcgarvey53theorem}, we can indeed construct a profile $\vote{-S}$ that is consistent with these margins. First, we specify the margins among the first five candidates:
    \begin{align*}
 1&=\margin{\vote{-S}}[\cand{1},\cand{3}]
= \margin{\vote{-S}}[\cand{1},\cand{4}]= \margin{\vote{-S}}[\cand{1},\cand{5}]
= \margin{\vote{-S}}[\cand{2},\cand{1}]
= \margin{\vote{-S}}[\cand{2},\cand{5}]\\&
= \margin{\vote{-S}}[\cand{3},\cand{2}]= \margin{\vote{-S}}[\cand{3},\cand{4}]= \margin{\vote{-S}}[\cand{4},\cand{2}]= \margin{\vote{-S}}[\cand{5},\cand{3}]=
\margin{\vote{-S}}[\cand{5},\cand{4}].
\end{align*}
Additionally, say $\margin{\vote{-S}}[\cand{i},\cand{j}]>0$ for any $i \leq 5$ and $j >5$. The remaining margins can be assigned arbitrarily.

Fix any $k \geq 0$ and notice that $\kprofile{k}{S}$ has the same margins among candidates $\{\cand{1},\ldots,\cand{5}\}$ as $\vote{-S}$, except $\margin{\profile}[\cand{2},\cand{1}]=2k+3$ instead (This is because each additional copy of voter 1 and voter 2 cancel each other out in every pairwise match among these five candidates, except this pair, which they agree on). We cannot have $\cand{i} \in \copeland(\kprofile{k}{S})$ for any $i> 5$, since it can defeat at most $\ncand-6$ candidates, whereas any $\cand{j}$ for $j \leq 5$ defeats at least $\ncand-5$ candidates. Out of the five relevant candidates, the Copeland scores of $\cand{1}$ to $\cand{5}$ (minus the $\ncand-5$ candidates they all defeat)  are 3, 2, 2, 1, and 2, respectively. Hence, $\copeland_\tiebreaker(\kprofile{k}{S})=\cand{1}$. 

Going back to $\profile=(\vote{S},\vote{-S})$, suppose voters 1 and 2 instead report alternative preference orders where they both rank $\cand{2}$ top. This group manipulation results in $\margin{\maniprofile{S}}[\cand{2},\cand{3}]=\margin{\maniprofile{S}}[\cand{2},\cand{4}]=1$ as well, making $\cand{2}$ a Condorcet winner. Thus, $\copeland_\tiebreaker\maniprofile{S}=\cand{2}$. Since $\cand{2} \vote{i} \cand{1}$ for all $i \in S=\{1,2\}$, this shows that Copeland is not $\kgsp{k}$ for any $k\geq 0$. Notice that we did not need to use the tiebreaker $\tieorder$, since both $\kprofile{k}{S}$ and $\maniprofile{S}$ had clear winners.
\end{proof}

It is worth noting that the proof of \Cref{thm:copeland} does rely on \Cref{def:kgmanip,def:gmanipot} requiring that each member of the coalition be copied an equal number of times. Alternative extensions (where the new copies need not be distributed uniformly among the coalition members) may give rise to different results, and studying these is a natural and important future direction.

\section{Pareto and Omninomination}\label{app:omnipot}

In this section, we analyze two social choice correspondences that achieve weaker forms of strategyproofness.\footnote{See \citet{brandt2011necessary} for more details on the strategic properties of these rules.} Given a profile $\profile$, we say a candidate $\cand{}$ \emph{Pareto dominates} another candidate $\cand{}'$ if $\cand{} \succ_i \cand{}'$ for all voters $i \in \voters$. The Pareto rule ($\pareto$) returns all candidates that are not Pareto dominated. The Omninomination rule ($\omni$), on the other hand, returns all candidates that are the top choice of \emph{some} voter.
We prove that both of these rules have unbounded manipulation potentials.

\begin{proposition}\label{prop:paromni}
    The manipulation potentials of Pareto and Omninomination are both $\infty$. In particular, neither rule is $\ksp{k}$ for any $k \geq 0$.
\end{proposition}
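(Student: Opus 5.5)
We show, for each rule, that there is a profile $\profile$, a tiebreaker $\tieorder$ and a manipulation $\manip{1}$ with $\scc_\tiebreaker\maniprofile{1} \vote{1} \scc_\tiebreaker(\kprofile{k}{1})$ for \emph{every} $k \geq 0$, as in the proof of \Cref{thm:pos-infinity}. Both rules are insensitive to multiplicities. For $\omni$, the winner set depends only on which rankings' top choices appear. For $\pareto$, it depends only on which rankings appear, since Pareto dominance is unchanged by duplicating a ranking already present. Hence $\scc(\kprofile{k}{1}) = \scc(\profile)$ for all $k\geq 0$, and after tiebreaking $\scc_\tiebreaker(\kprofile{k}{1})=\scc_\tiebreaker(\profile)$. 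It therefore suffices to exhibit, for each rule, a single profile and tiebreaker in which voter 1 has a profitable manipulation, i.e.\ to show $\scc_\tiebreaker$ is not $0$-ASP. The unboundedness of $\spdeg{\cdot}$ and the failure of $\ksp{k}$ for all $k$ then follow at once.

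\emph{Omninomination.} Take $\nvote=2$. Voter 1 ranks $\cand{1}\vote{1}\cand{2}\vote{1}\cand{3}\vote{1}\ldots$, and voter 2 ranks $\cand{3}$ first with $\cand{2}$ second. Let the tiebreaker rank $\cand{3}\tieorder\cand{2}\tieorder\cand{1}\tieorder\ldots$. Then $\omni(\kprofile{k}{1})=\{\cand{1},\cand{3}\}$, so the selected winner is $\cand{3}$. If voter 1 instead reports a ranking with $\cand{2}$ on top, then $\omni\maniprofile{1}=\{\cand{2},\cand{3}\}$ and the winner is still $\cand{3}$. This fails, so I would reverse the tiebreaker among the nominated candidates: let $\tieorder$ rank $\cand{2}\tieorder\cand{3}\tieorder\cand{1}$. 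Now the truthful winner is $\cand{3}$, since $\cand{2}$ is not nominated in $\kprofile{k}{1}$. Under the manipulation, $\cand{2}$ is nominated and selected, and $\cand{2}\vote{1}\cand{3}$. The same template extends to larger $\nvote$ by giving every other voter $\cand{3}$ as top choice.

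\emph{Pareto.} Again take all other voters to have the ranking $\cand{3}\vote{}\cand{2}\vote{}\cand{1}\vote{}\ldots$ (with $\cand{4},\ldots,\cand{\ncand}$ below in the same order as voter 1), and use the tiebreaker $\cand{3}\tieorder\cand{2}\tieorder\cand{1}\tieorder\ldots$. Truthfully, the Pareto set is $\{\cand{1},\cand{2},\cand{3}\}$: voters disagree on each pair among these, and everything else is dominated by $\cand{3}$. So the winner is $\cand{3}$, and by the invariance above it remains $\cand{3}$ for all $k$. If voter 1 reports $\cand{2}\manip{1}\cand{3}\manip{1}\cand{1}\manip{1}\ldots$, then every voter ranks $\cand{2}$ above $\cand{1}$, and $\cand{3}$ above everything except possibly $\cand{2}$. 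The Pareto set then becomes $\{\cand{2},\cand{3}\}$, and the tiebreaker still gives $\cand{3}$, so this manipulation does not help. Instead, I would let voter 1 report the identical ranking to the others except with $\cand{2}$ first. Then $\cand{3}$ is dominated only if everyone prefers $\cand{2}$ to it, which the others do not. This is the delicate step.

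I therefore expect the main obstacle to be the Pareto construction, where the manipulation must \emph{remove} the tiebreaker's favourite from the Pareto set. The fix is to choose the tiebreaker to favour a candidate $\cand{x}$ that voter 1 ranks low but above some candidate $\cand{y}$, with all other voters ranking $\cand{y}\vote{}\cand{x}$. Concretely, take $\cand{x}=\cand{3}$ and $\cand{y}=\cand{2}$, with others ranking $\cand{2}\vote{}\cand{3}\vote{}\cand{1}\vote{}\ldots$ and $\tieorder$ ranking $\cand{3}$ top, then $\cand{1}$, then $\cand{2}$. Truthfully, voter 1 ranks $\cand{2}\vote{1}\cand{3}$, which agrees with the others, so $\cand{3}$ is dominated. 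To make $\cand{3}$ survive truthfully but become dominated under manipulation, voter 1 should truthfully rank $\cand{3}$ above $\cand{2}$, e.g.\ $\cand{1}\vote{1}\cand{3}\vote{1}\cand{2}$ with others ranking $\cand{2}\vote{}\cand{3}\vote{}\cand{1}$. Truthfully, $\pareto=\{\cand{1},\cand{2},\cand{3}\}$ and the tiebreaker favouring $\cand{2}$ picks $\cand{2}$ (voter 1's third choice). By reporting $\cand{1}\manip{1}\cand{2}\manip{1}\cand{3}$, voter 1 makes $\cand{3}$ dominated by $\cand{2}$, but $\cand{2}$ stays. So instead voter 1 should kill $\cand{2}$ by reporting $\cand{3}$ above $\cand{2}$ while others also rank $\cand{3}\vote{}\cand{2}$. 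In general I would search this small three-candidate space to find the right orientation, then verify $\pareto\maniprofile{1}$ excludes the tiebreaker's top while retaining a candidate voter 1 prefers to the truthful winner.
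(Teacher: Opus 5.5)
Your reduction is sound, and the paper uses the same observation inside its example. $\omni$ depends only on the set of top choices and $\pareto$ only on the set of rankings present, so $\pareto(\kprofile{k}{1})=\pareto(\profile)$ and $\omni(\kprofile{k}{1})=\omni(\profile)$ for all $k\geq 0$. Hence one profitable manipulation in $\profile$ refutes $k$-augmentation strategyproofness for every $k$ at once. Your final Omninomination instance, with tiebreaker $\cand{2}\tieorder\cand{3}\tieorder\cand{1}$, is correct and is essentially the paper's.

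The gap is the Pareto half: you never exhibit an instance that works. You judge every concrete Pareto instance you write down to fail. The last paragraph changes first the tiebreaker and then the other voters' rankings mid-argument. The proposal ends with a plan to ``search this small three-candidate space'', so the statement for $\pareto$ is not proved. Your diagnosis that the manipulation \emph{must} remove the tiebreaker's favourite is also wrong. It can equally well make undominated a previously dominated candidate that the tiebreaker favours, exactly as in your Omninomination instance.

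Either route closes the gap.

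\emph{The paper's route (one profile for both rules).} All other voters rank $\cand{3}\succ\cand{1}\succ\cand{2}\succ\cand{4}\succ\cdots\succ\cand{\ncand}$. Truthfully, every voter ranks $\cand{1}$ above $\cand{2},\cand{4},\ldots,\cand{\ncand}$, so $\pareto(\profile)=\{\cand{1},\cand{3}\}$. With $\cand{2}\tieorder\cand{3}\tieorder\cand{1}$ the winner is $\cand{3}$. Reporting $\cand{2}\succ\cand{1}\succ\cand{3}\succ\cdots$ makes $\cand{2}$ undominated, so the Pareto set becomes $\{\cand{1},\cand{2},\cand{3}\}$ and the winner is $\cand{2}$.

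\emph{Your removal route.} This also works; in fact you wrote down a correct instance and then misjudged it. Take voter 1 truthful $\cand{1}\succ\cand{3}\succ\cand{2}$, the others $\cand{2}\succ\cand{3}\succ\cand{1}$, and $\cand{4},\ldots,\cand{\ncand}$ at the bottom of every ranking. Use the tiebreaker you had just fixed: $\cand{3}$, then $\cand{1}$, then $\cand{2}$. The truthful Pareto set is $\{\cand{1},\cand{2},\cand{3}\}$, so the winner is $\cand{3}$, not $\cand{2}$. Reporting $\cand{1}\succ\cand{2}\succ\cand{3}$ makes $\cand{2}$ dominate $\cand{3}$, leaving $\{\cand{1},\cand{2}\}$ and winner $\cand{1}$, which voter 1 strictly prefers to $\cand{3}$. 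You rejected this only because you evaluated it under a tiebreaker favouring $\cand{2}$.

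Plugging either instance into your invariance argument completes the proof.
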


\begin{proof}
     We prove the claim by constructing an instance in which a voter admits a manipulation that strictly improves the outcome over truthful reporting, while adding any number of (truthful) copies of the voter leaves the outcome unchanged. Consider the following profile $\profile$: 
    \begin{center}
    \begin{tabular}{c|l}
            \cellcola Voter $1$ & \cellcola  $\acand{1} \vote{1} \acand{2} \vote{1} \acand{3}  \vote{1} \cand{4} \vote{1}\ldots \vote{1}\cand{\ncand}$\\
            
            \cellcolb $\nvote-1$ voters & \cellcolb $ \acand{3} \vote{i} \acand{1} \vote{i} \acand{2} \vote{i} \cand{4} \vote{i}\ldots \vote{i}\cand{\ncand}$
    \end{tabular}
    \end{center}
    Say the tiebreaker ranks $\cand{2} \tieorder \cand{3} \tieorder \cand{1} \tieorder \ldots$ (the rest of the order does not matter). 
    
    Consider adding any $k\geq 0$ copies of voter 1 to $\profile$, and observe that the election outcomes in $\kprofile{k}{1}$ are as follows.
    Under Pareto, the only candidates that are not Pareto dominated (and hence belong to the winner set before tie-breaking) are $\cand{1}$ and $\cand{3}$ (since candidates $\{\cand{2},\cand{4},\ldots\cand{\ncand}\}$ are Pareto dominated by candidate $\cand{1}$).
    After tie-breaking, the winner is therefore $\pareto_\tiebreaker(\kprofile{k}{1})=\cand{3}$. Under Omninomination, it is easy to see that both---and only---$\cand{1}$ and $\cand{3}$ are the top choice of some voter. The tiebreaker $\tieorder$ breaks the tie in favor of $\cand{3}$, so once again the winner is $\omni_\tiebreaker(\kprofile{k}{1})=\cand{3}$. Note that for both rules, adding more copies of voter 1 has not changed the outcome.

    Going back to the original profile $\profile$ (with no copies), assume voter 1 instead reports the alternative preference order
    $\bcand{2} \manip{1} \acand{1} \manip{1} \acand{3} \manip{1} \cand{4} \manip{1}\ldots \manip{1}\cand{\ncand}$. 
    Now, under Pareto, we get a tie among $\cand{1}, \cand{2},$ and $\cand{3}$, which is broken in favor of $\cand{2}$ by $\tieorder$.
    Under Omninomination, the new outcome is a tie between $\cand{2}$ and $\cand{3}$, which is again broken in favor of $\cand{2}$. Thus, we have 
    \[\pareto_\tiebreaker\maniprofile{1}=\omni_\tiebreaker\maniprofile{1}=\cand{2}\vote{1} \cand{3} =\pareto_\tiebreaker(\kprofile{k}{1})=\omni_\tiebreaker(\kprofile{k}{1}),\]
    showing that neither rule is $\ksp{k}$ for any $k \geq 0$.
\end{proof}

\end{document}